\documentclass[12pt,english,a4paper]{article}
\usepackage[T1]{fontenc}
\usepackage[utf8]{luainputenc}
\usepackage{times}
\usepackage{setspace}
\usepackage{amsmath,amsthm,mathtools}
\usepackage{graphicx}
\usepackage{algorithmic}
\usepackage[margin=1.5in]{geometry}       %Sets margins
\usepackage{amssymb}
\usepackage{enumitem}
\usepackage{esint}
\usepackage{subcaption}
\usepackage{url}  
\usepackage{float}
\usepackage{placeins}
\usepackage{multicol}

\theoremstyle{plain}

\newtheorem{theorem}{Theorem}
\newtheorem{proposition}{Proposition}

\newtheorem{lemma}{Lemma}
\newtheorem{assumption}{Assumption}
\newtheorem{algorithm}{Algorithm}

\usepackage[sort]{natbib}     
\usepackage[backref=page]{hyperref}
\usepackage{etoolbox}
\makeatletter
\patchcmd{\BR@backref}{\newblock}{\newblock(Cited in page~}{}{}
\patchcmd{\BR@backref}{\par}{)\par}{}{}
\makeatother
\hypersetup{
	colorlinks = true, linkcolor = blue,
	citecolor = blue, runcolor = blue, urlcolor = blue }
%For the bibliography
\bibpunct{(}{)}{;}{a}{,}{,}             %Punctuation in bibliography
\usepackage[nameinlink,noabbrev]{cleveref}
\usepackage{fullpage}
\usepackage{dsfont}
\usepackage[makeroom]{cancel}
\usepackage[normalem]{ulem} 
\usepackage{color}
\usepackage{sgame, tikz}
\usepackage{istgame}
\usepackage{lipsum}

\def\A{\mathcal{A}}
\def\S{\mathcal{S}}

\def\R{\mathbb{R}}

\def\P{\mathbb{P}}
\def\E{\mathbb{E}}

% V function definitions
\def\V{\boldsymbol{V}}
\def\Vone{{\V}_{\!\!1}}
\def\Vonet{\tilde{V}_{1}^i}
\def\Vzerot{\tilde{V}_{0}^i}
% Sigma policy definitions 
\def\ssigma{\boldsymbol{\sigma}}
\def\SSigma{\boldsymbol{\Sigma}}
\def\ttau{\boldsymbol{\tau}}
\def\iiota{\boldsymbol{\iota}}
\def\ggamma{\boldsymbol{\gamma}}
\def\kkappa{\boldsymbol{\kappa}}
\def\Q{\boldsymbol{Q}}

\def\1{\boldsymbol{1}}

\def\x{\boldsymbol{x}}
\def\u{\boldsymbol{u}}
\def\v{\boldsymbol{v}}

\def\p{\boldsymbol{p}}

\def\s{\boldsymbol{s}}

\def\bA{\boldsymbol{A}}
\def\w{\boldsymbol{w}}

\def\argmax{\textnormal{argmax}}
\newcommand*{\Scale}[2][4]{\scalebox{#1}{$#2$}}

\usepackage[textsize=tiny]{todonotes}

\title{\LARGE{\textbf{Learning to Charge More: A Theoretical Study of Collusion by Q-Learning Agents\thanks{This work was partially supported by 
NSF award DMS 2427955.}}}}\author{Cristian Chica, Yinglong Guo, and Gilad Lerman\thanks{School of Mathematics, University of Minnesota. Email: chica013@umn.edu, guo00413@umn.edu, lerman@umn.edu.}}

\begin{document}

\maketitle
\begin{abstract}

There is growing experimental evidence that $Q$-learning agents may learn to charge supracompetitive prices. We provide the first theoretical explanation for this behavior in infinite repeated games. 
Firms update their pricing policies based solely on observed profits, without computing equilibrium strategies. We show that when the game admits both a one-stage Nash equilibrium price and a collusive-enabling price, and when the $Q$-function satisfies certain inequalities at the end of experimentation, firms learn to consistently charge supracompetitive prices. We introduce a new class of one-memory subgame perfect equilibria (SPEs) and provide conditions under which learned behavior is supported by naive collusion, grim trigger policies, or increasing strategies. Naive collusion does not constitute an SPE unless the collusive-enabling price is a one-stage Nash equilibrium, whereas grim trigger policies can.

{\textbf{Keywords:} Stochastic Games, Bounded Memory, $Q$-Learning, Collusion.} 

{\textbf{JEL Codes:} C73, C62, D43, D58}

\end{abstract}

%-----------------------------------------------------------------------------
%-----------------------------------------------------------------------------
\setcounter{footnote}{0} 
\section{Introduction}
Collusion by algorithmically driven firms has become a central topic in recent discussions of competition policy. Since the influential study by \cite{calvano2020artificial}, a growing body of work has examined whether reinforcement learning algorithms can lead firms to learn collusive outcomes. Although these studies span diverse economic settings and algorithmic designs, most rely on numerical simulations. As a result, several key theoretical questions remain unanswered:
\begin{enumerate}
    \item Under what conditions do firms learn to charge supracompetitive prices in the long run?
    \item Are these outcomes supported by policies that incorporate punishment and reward?
    \item Does the learned behavior constitute a Nash equilibrium?
\end{enumerate}

In this paper, we provide formal answers to these questions. We introduce a framework based on stochastic games with bounded memory and analyze their subgame perfect equilibria (SPEs). We then formulate a version of $Q$-learning with bounded experimentation and study the emergence of supracompetitive pricing behavior in an infinite repeated games setting.

Our model features $n$ firms competing over an infinite time horizon. In each period, firms choose prices based on a simple form of one-memory policies (i.e., strategies): these policies depend only on the current state of the environment and the prices chosen in the previous period. Firms may use one policy in the initial period ($t = 0$), and a distinct, time-invariant policy from period $t \geq 1$ onward. The environment is described by a finite set of states, which evolves over time according to a probability distribution that depends on the current state and the firms’ chosen prices. Each firm earns a profit in every period as a function of the current state and the full price vector. To evaluate behavior over time, we define value functions that capture expected discounted profit. These value functions form the basis for our analysis of long-run behavior and equilibrium. The use of one-memory policies connects to prior work on bounded-recall and finite automaton strategies in repeated games (e.g., \cite{rubinstein1986finite}, \cite{lehrer1988repeated},  \cite{aumann1989cooperation} and  \cite{barlo2009repeated}).

We begin by extending the classical fixed-point theory of \cite{fink1964equilibrium} to establish the existence of one-memory SPEs in our setting---a refinement of Nash equilibrium that requires firms’ policies to be optimal at every point in the game. This ensures credible behavior over time and rules out non-credible threats, which is essential for analyzing dynamic collusion. We also formulate a procedure to verify whether a given policy profile constitutes such an equilibrium. We then apply this framework to dynamic pricing environments that feature both a one-stage Nash equilibrium price and a collusive-enabling price, and show that grim trigger policies can be implemented as one-memory SPEs.

Next, we analyze how firms learn in our stochastic game setting by studying a variant of the $Q$-learning algorithm, one of the most widely used approaches in reinforcement learning. $Q$-learning enables agents to estimate the long-run value of actions through repeated interaction with the environment, without requiring knowledge of transition probabilities or future profits. This makes it a natural candidate for modeling firms that adaptively update their pricing policies based solely on observed outcomes.

We first consider a version of $Q$-learning \emph{without experimentation}, in which firms always choose prices that maximize their current estimated value function, known as the $Q$-function. We show that the fixed points of this algorithm coincide with the conditional value functions of the stochastic game under a specific class of one-memory policies, which we refer to as induced policies.

We then introduce a more realistic version of the algorithm, known as \emph{$Q$-learning with bounded experimentation}. In this setting, firms initially explore pricing actions using a softmax response---occasionally choosing suboptimal prices—but eventually switch to greedy behavior based on their learned $Q$-functions. We identify conditions under which firms using this $Q$-learning process learn to charge supracompetitive prices.  
Our results apply to widely studied economic environments, including dynamic Bertrand competition and recent models of platform markets (e.g., \cite{tirole1988theory}, \cite{dewenter2011semi} and  \cite{chica2025competition}).

The sufficient conditions for learning to charge supracompetitive prices involve comparisons between the profits from the collusive-enabling price and the $Q$-values of alternative actions at the time experimentation ends. Intuitively, they ensure that the collusive-enabling price is reinforced through learning as the most profitable option, both in the short run and over time.

We show that such collusive behavior can be supported by three types of policy profiles: naive collusion, grim trigger policies, and increasing policies. The latter two involve credible threats and dynamic escalation patterns, aligning with pricing behavior observed in recent empirical simulations. In fact, we show that naive collusion does not constitute an SPE, whereas grim trigger policies do.

\textbf{Related Literature.} This paper contributes to the growing literature on algorithmic pricing and collusion, particularly under reinforcement learning. A number of recent studies (e.g., \cite{waltman2008q}, \cite{calvano2020artificial}, \cite{klein2021autonomous} and  \cite{chica2024artificial}) have shown via simulations that $Q$-learning agents can learn to charge supracompetitive prices in repeated pricing environments. These findings have raised concerns among policymakers and competition authorities (e.g., \cite{OECD2017, clark2023algorithmic}) about the potential for algorithmic collusion, even without explicit coordination.

Recent theoretical work has shown that simple algorithmic pricing rules can lead to higher prices in competitive markets, even in the absence of explicit coordination \citep{brown2023competition}. However, these results do not address reinforcement learning. A widely used approach in this domain is $Q$-learning, introduced by \cite{watkins1992q}, which allows agents to estimate long-run profit-maximizing policies without knowing the environment’s transition structure. While convergence of $Q$-learning is well understood in the single-agent case \citep{jaakkola1993convergence}, much less is known in multi-agent settings. Existing work on multi-agent learning, such as \cite{hu2003nash}, assumes agents compute equilibrium strategies at each stage, which is far from what is observed in decentralized learning environments. 

A recent analysis by \cite{possnig2023reinforcement} shows that reinforcement learning can lead to collusion in repeated Cournot competition. His analysis focuses on an actor-critic $Q$-learning algorithm (ACQ), and characterizes the long-run behavior of its learning dynamics via a differential equation approximation. While his framework provides insight into asymptotic learning outcomes, the convergence result applies to the limiting ODE rather than the stochastic $Q$-learning process itself. In contrast, we analyze standard $Q$-learning in infinite repeated games and provide algorithm-specific convergence guarantees for the actual learning dynamics. Our results identify explicit conditions under which firms converge to supracompetitive pricing, without requiring coordination, equilibrium computation, or continuous-time approximation.

Our framework also contributes to the literature on general-sum stochastic games and on strategies with bounded memory. Classical work (e.g., \cite{fink1964equilibrium}) established the existence of stationary equilibria in stochastic games. We analyze a broader class of one-memory policies that accommodate punishment and reward behavior, such as grim trigger strategies. This notion of memory-bounded behavior has also been studied in repeated games, where \cite{rubinstein1986finite} introduced finite automata strategies,   \cite{lehrer1988repeated} characterized Nash equilibria under bounded recall and \cite{aumann1989cooperation} analyzed cooperation under bounded recall. Our results complement those of \cite{barlo2009repeated}, who showed that one-memory strategies can support any individually rational payoff as a subgame perfect equilibrium when players are sufficiently patient. We establish the existence of one-memory SPEs in a dynamic stochastic game setting. 

To our knowledge, this is the first theoretical result showing how $Q$-learning-driven firms can sustain collusion in infinite repeated games with both a one-stage Nash equilibrium price and a collusive-enabling price.

%-----------------------------------------------------------------------------
%-----------------------------------------------------------------------------
\section{A Model for Stochastic Games with Bounded Memory}
\label{sec:Model}
In this section, we introduce a stochastic game model, which generalizes repeated games with perfect monitoring. However, certain parts of our analysis---specifically Proposition~\ref{prop:grimTriger} and Section~\ref{sect:convergenceQlearning}---focus on the repeated game case. To make the setting more concrete, we assume that $n$ firms (or agents) compete by setting prices over an infinite time horizon, where each firm is indexed by $i \in [n] := \{1, \dots, n\}$. More generally, we consider a finite, ordered set of actions, which in our context correspond to prices.

We begin by describing the basic components of the stochastic game. Section~\ref{sec:v_function} defines two types of conditional value functions for firm $i$ and establishes their basic properties. Section~\ref{sec:relation_vi} presents a direct relationship between the two value functions. Finally, Section~\ref{sec:nash_definition} formalizes the notions of best response, Nash equilibrium from time $t = 1$, and a subgame perfect equilibrium (SPE).

\textbf{Actions:} We assume a set of actions $\A:= \{a^0,\dots, a^m\}$. We recall that in our context taking actions means charging prices. The set of actions for $n$ agents is $\A^n$ and we commonly denote by $\p =(p^1,\dots, p^n)$, a vector of prices in $\A^n$.

\textbf{States and their dynamics:} We assume a state space of $r$ states: $\S := \{s^1,\dots, s^r\}$. 
Every state may represent a market demand or cost level, which will directly affect the profit functions defined below. States change with time and consequently affect the profits agents receive. 
At time $t+1$, given state $s_t = s \in \S$ and vector of prices $\p =(p^1,\dots, p^n) \in \A^n$, the state at $t+1$, $s_{t+1} \in \S$, follows the probabilistic law 
\begin{equation}
\label{eq:def_P}
s_{t+1}\sim \P(\cdot | \p, s).     
\end{equation} 
Therefore, the state at $t+1$ only depends on the state and price vector at time $t$. 

\textbf{Profit functions:} The profit function for each firm $i$ is a function, 
\begin{equation}
    \label{eqn:profit_function}
    \pi^i:\A^n\times\S \rightarrow \R.
\end{equation}
We note that it is a function of the current vector of prices,  $\p =(p^1,\dots, p^n) \in \A^n$, and state, $s \in \S$,  but independent of the time $t$. Moreover, we assume that $\pi^i\geq 0$. 
In the reinforcement learning literature, 
$\pi^i$   is commonly referred to as the reward function.

\textbf{Policies:} A policy, or strategy, for firm $i$ is a sequence of probability distributions $\ssigma^i = (\sigma_t^i)_{t=0}^\infty$ over the action space $\A$.\footnote{In machine learning, the term ``policy'' is commonly used, whereas in economics, the term ``strategy'' is more standard.} Considering all $n$ firms, the overall policy is $\ssigma = (\ssigma^i)_{i \in [n]}$. 
At time $t=0$ and given a state $s_0\in\S$, firm $i$ chooses $p\in\A$ with probability $\sigma^i_0(p|s_0)$, where $\sum_{p\in \A}\sigma^i_0(p|s_0) = 1$. 
Let $p^i_{t-1}$ denote the price chosen by firm $i$ in period $t-1$
and let $\p_{t-1}=(p^1_{t-1},\dots,p^n_{t-1})\in \A^n$ denote the vector of all these prices. 
We assume that at time $t\geq 1$, $\p_{t-1}$ is publicly available.  
At time $t\geq1$ and given $s_t\in \S$ and $\p_{t-1}\in\A^{n}$, firm $i$ chooses $p\in \A$ with probability $\sigma^i_t(p|\p_{t-1},s_t)$, where $\sum_{p\in \A}\sigma^i_t(p|\p_{t-1},s_t) = 1$. 
We assume that $\sigma_t^1(p_{t}^1|\p_{t-1},s_t),\dots,\sigma_t^n(p_t^n|\p_{t-1},s_t)$ are independent random variables. Consequently, we define
$$\sigma_t(\p_t|\p_{t-1},s_t) = \prod_{i=1}^n\sigma_t^i(p_t^i|\p_{t-1},s_t) \ \text{ and }
\ \sigma^{-i}_t(\p_t|\p_{t-1},s_t) = \prod_{j\neq i}\sigma_t^j(p_t^j|\p_{t-1},s_t).$$
We similarly define $\sigma_0(\p_0|s_0)$ and $\sigma_0^{-i}(\p_0|s_0)$. 

We impose a key modeling assumption, commonly used in repeated games with bounded memory\footnote{For simplicity, we focus on one-memory strategies. Nevertheless, some of our results may extend to strategies with finite-length memory, though doing so would require significantly more cumbersome notation and technical development.}  
(see, e.g., \cite{barlo2009repeated} and \cite{barlo2016bounded}):
\begin{assumption}[One-memory policies]\label{A_onememory}
   Firms choose policies that depend only on the current state and the previous period’s actions, and remain fixed for all $t \geq 1$. That is, for each $t \geq 1$, $\sigma^i_t(p | \p_{t-1}, s_t)$ is independent of $t$ and depends only on $p \in \A$, $\p_{t-1} \in \A^n$, and $s_t \in \S$, while at $t = 0$, $\sigma^i_0(p | s_0)$ depends only on $p \in \A$ and $s_0 \in \S$.
\end{assumption}
We remark that while we use in different places the general term $\sigma^i_t(p | \p_{t-1}, s_t)$, the above assumption implies that it equals $\sigma^i_1(p | \p_{t-1}, s_t)$ for all $t \geq 1$, and $\sigma^i_0(p | s_0)$ for $t = 0$. Similarly, we note that the overall policy $\ssigma$  can be identified with $(\ssigma_0, \ssigma_1)$, the pair of overall policies used at time $t = 0$ and for all $t \geq 1$, respectively.

\textbf{Solution Concept:} We study the existence of a one-memory subgame perfect equilibrium (SPE) of the stochastic game—a refinement of Nash equilibrium in which firms’ strategies must be optimal at every possible decision point. The formal definition is provided in Section~\ref{sec:nash_definition}.

\textbf{Additional Notation:} We introduce notation used throughout the paper to compactly describe policy spaces, expectations, and value functions.\\ (i) We denote $M = |\A^n|$ and write the set $\S\times\A^n$ as follows 
\begin{equation}\label{not_StimesAn}
    \S\times\A^n=\left\{(s^1,\p^1),\cdots, (s^1,\p^M), \cdots, (s^r,\p^1),\cdots, (s^r,\p^M)\right\}.
\end{equation}
(ii) The set of policies available at time $t\geq 0$ for firm $i$ is denoted by $\SSigma_t^i$. Using the enumeration in \eqref{not_StimesAn} and the notation $\hat{M}=(m+1)rM$, the set $\SSigma_t^i$ can be represented as 
    \begin{align}
        \SSigma_t^i = \{& (\sigma_t^i(a^0|\p^1,s^1),\dots,\sigma_t^i(a^m|\p^1,s^1),\dots,\sigma_t^i(a^0|\p^M,s^r),\dots,\sigma_t^i(a^m|\p^M,s^r))\in [0,1]^{\hat{M}} \notag \\
        &\textnormal{s.t. } \sum_{k=0}^m\sigma_t^i(a^k|\p_0,s_1) = 1 \quad \forall (s_1,\p_0)\in \S\times\A^n \}.  \label{def:Sigma_ti}
    \end{align}
It follows from \eqref{def:Sigma_ti} that $\SSigma_t^i$  is an $\hat{M}-1$ simplex, and consequently it is a compact and convex subset of $\mathbb{R}^{(m+1)rM}$. 

The set of policies at time $t \geq 0$ for all firms is $\SSigma_t:= \times_{i=1}^n \SSigma_t^i$. 
The set of all policies is $\SSigma:= \times_{t \geq 0} \SSigma_t$.\\
(iii) A policy profile for time $t\geq0$ contains the policies for all firms at that time and is described by $\ssigma_t = (\sigma_t^i)_{i=1}^n$. 
We denote by $\ssigma_t^{-i} = (\sigma_t^j)_{j \neq i}$ the profile excluding firm $i$'s policy at time $t$.
Similarly,  $\SSigma_t^{-i}:=\times_{j\neq i}\SSigma_t^j$.  For each $i\in[n]$, we interchange between $(\sigma_t^i,\ssigma_t^{-i})$ and $(\ssigma_t^j)_{j=1}^n$.\\ 
(iv) For $\ssigma_t\in \SSigma_t$, $\p_{t-1}\in\A^n$, $s_t\in\S$, and $g:\A^n\times \S\rightarrow \mathbb{R}$, we define  
\begin{equation}\label{not_Esigma1_g}
    \E_{\ssigma_t}\left[g(\p, s)|\p_{t-1},s_t\right]:= \sum_{\p_t\in\A^n}\sigma_t(\p_t|\p_{t-1},s_t)g(\p_t,s_t).
\end{equation}  
(v) For $\ssigma =(\ssigma_t)_{t\geq 0} \in \SSigma$, $s_0\in\S$, $\P$ defined in \eqref{eq:def_P}, and $g_t:\A^n\times \S\rightarrow \mathbb{R}$, $t\geq 0$, we define 
$$\E_{\ssigma,\P}\left[\sum_{t=0}^\infty g_t(\p_t, s_t)|s_0\right]:=\lim_{T\to\infty}\E_{\ssigma,\P}\left[\sum_{t=0}^T g_t(\p_t, s_t)|s_0\right],$$
whenever the limit exists, where for each $T\geq 1$,
\begin{equation}\label{not_Esigma_P_g}
\begin{split}
    &\E_{\ssigma,\P}\left[\sum_{t=0}^T g_t(\p_t, s_t)|s_0\right]\\
    &= \sum_{\p_0\in\A^n}\sigma_0(\p_0|s_0)\left\{g_0(\p_0,s_0)+\sum_{s_1\in \S}\P(s_1|\p_0,s_0)\E_{(\ssigma_t)_{t\geq 1}, \P}\left[\sum_{t=1}^T g_t(\p_t, s_t)|\p_0,s_1\right]\right\}
\end{split}
\end{equation}  
and for each $1\leq k \leq T-1$ 
\begin{equation}\label{not_Esigma_P_g2}
\begin{split}
&\E_{(\ssigma_t)_{t\geq k}, \P}\left[\sum_{t=k}^T g_t(\p_t, s_t)|\p_{k-1},s_k\right]
=    \sum_{\p_k\in \A^n}\sigma_k(\p_k|\p_{k-1},s_k) g_k(\p_k, s_k) \\
   &+ \sum_{\p_k\in \A^n}\sigma_k(\p_k|\p_{k-1},s_k)\sum_{s_{k+1}\in \S}\P(s_{k+1}|\p_k,s_k)\E_{(\ssigma_t)_{t\geq k+1}, \P}\left[\sum_{t=k+1}^T g_t(\p_t, s_t)|\p_k,s_{k+1}\right] .
\end{split}
\end{equation}  
\textbf{Relation to repeated games with perfect monitoring:}
Our model generalizes the standard framework of repeated games with perfect monitoring (see e.g., \cite{fudenberg1991game}) in two key ways. First, we allow for a stochastic state variable $s_t \in \S$ that evolves endogenously over time, influenced by the firms' pricing decisions. This introduces persistent market heterogeneity and dynamic feedback, absent in traditional repeated games. Second, we work in a stochastic game setting, where strategies are defined over state-action histories and value functions (see Section \ref{sec:v_function}) evolve recursively. When the state space $\S$ is a singleton (i.e., there is no uncertainty or dynamics in market conditions), our model reduces to a standard repeated game with perfect monitoring, where the action profile at each period is publicly observed and firms can condition future behavior on past actions.

\subsection{\texorpdfstring{The $\boldsymbol{V}^i$-Functions}{}}
\label{sec:v_function}

The initial state $s_0\in \S$ along with a profile of policies for all firms $\ssigma\in\SSigma$  determine the evolution of the stochastic game via 
conditional value functions, which we clarify in this section. Let $\ssigma = (\ssigma_t)_{t=0}^\infty \in \SSigma$ be a one-memory policy. We recall that by Assumption \ref{A_onememory}, for each firm $i\in[n]$, $\ssigma$ is characterized by two policies: (i) $\sigma_0^i(\cdot|s_0)$ at $t=0$; and (ii) $\sigma_1^i(\cdot|\p_{t-1},s_t)$ at $t \geq 1$. We will thus obtain conditional value functions for $t=0$ and $t=1$. 

We define the conditional value function using the definition of $\E_{\ssigma,\P}$ in \eqref{not_Esigma_P_g} and \eqref{not_Esigma_P_g2}.
We recall that $\P$ is the distribution defined in \eqref{eq:def_P},  and $\pi^i: \A^n \times \S \to \R$,  $i \in [n]$, are the profit functions. Let  $\delta_i\in(0,1)$ denote the discount factor for firm $i \in [n]$, which represents the present value of future profits.   
For $\ssigma = (\ssigma^i,\ssigma^{-i}) \in\SSigma$ and $s_0\in \S$, the conditional value function at time $t=0$ of firm $i$ is given by
\begin{equation}\label{conditional_value_function}
\begin{split}
    \Vzerot(s_0, \ssigma^i|\ssigma^{-i}) &:=\E_{\ssigma, \P}\left[ \sum_{t=0}^\infty \delta_i^t \pi^i(\p_t,s_t)\Big| s_0 \right].
\end{split}
\end{equation}
Given state $s_0$ at time $t=0$, (\ref{conditional_value_function}) measures the expected payoff that firm $i$ receives after playing the infinite stochastic game using $\ssigma^i$, while firms other than $i$ follow $\ssigma^{-i}$. Since  $\pi^i(\p_t,s_t)$ is bounded by $\sup_{(s,\p)\in\S\times \A^n} |\pi^i(\p,s)|$
for all $i\in[n]$ and $t\geq0$, (\ref{conditional_value_function}) is bounded by $(1-\delta_i)^{-1}\sup_{(s,\p)\in\S\times \A^n} |\pi^i(\p,s)|$ and thus well-defined.

Next, we characterize the conditional value function of firm $i$ at time $t=1$. 
For $s_1\in \S$, $\p_0\in \A^n$ and  $\ssigma_1 = (\sigma^i_1, \ssigma^{-i}_1)\in \SSigma_1$, the conditional value function of firm $i$ at time $t=1$ is given by 
\begin{equation}\label{conditional_value_function_t1}
\begin{split}
    \Vonet
(s_1, \p_0, \sigma^i_1|\ssigma^{-i}_1) &:=\E_{\ssigma_1, \P}\left[ \sum_{t=1}^\infty \delta_i^{t-1} \pi^i(\p_t,s_t)\Big|\p_0, s_1 \right].
\end{split}
\end{equation}
For the pair $(s_1,\p_0)$ at time $t=1$, (\ref{conditional_value_function_t1}) measures the expected payoff that firm $i$ receives after playing the infinite stochastic game using $\sigma^i_1$, while firms other than $i$ follow $\ssigma^{-i}_1$. 
If firm $i$ uses a policy $\sigma_1^i\in\SSigma_1^i$ such that $\sigma_1^i(\tilde{a}|\p_0,s_1) = 1$ for $\tilde{a}\in\A$ and for all $(\p_0,s_1)\in\A^n\times\S$, we write $\Vonet(s_1,\p_0,\tilde{a}|\ssigma_1^{-i})$ instead of $\Vonet(s_1, \p_0, \sigma^i_1|\ssigma^{-i}_1)$. 

For technical reasons that will be explained in the next section, it is useful to define a $\Vone$ vector function. Its definition below uses a vector $\v$ whose coordinates are indexed by $i \in [n]$ and $(s_1,\p_0) \in \S\times\A^n$. In view of the enumeration of $\S\times\A^n$ in \eqref{not_StimesAn}, $\v \in \R^{n r M}$. The $\Vone$ vector function is given by $$\Vone:\SSigma_1\times \SSigma_1 \times\mathbb{R}^{nrM}\longrightarrow \mathbb{R}^{nrM} \textnormal{ s.t. }(\ssigma_1,\ttau_1,\v)\mapsto\Vone(\ssigma_1,\ttau_1,\v)$$ where the $(i,s_1,\p_0)$-coordinate of $\Vone(\ssigma_1,\ttau_1,\v)$ is given by 
\begin{equation}\label{V1_function}
\begin{split}
&\Vone(\ssigma_1,\ttau_1,\v)_{i,s_1,\p_0}\\
&:= \sum_{\p_1\in\A^n}\tau_1^i(p_1^i|\p_0,s_1)\sigma_1^{-i}(\p_1^{-i}|\p_0,s_1)\left[\pi^i(\p_1,s_1)+\delta_i\sum_{s_2\in\S}\P(s_2|\p_1,s_1)v_{i,s_2,\p_1}\right].
\end{split}
\end{equation}
For the pair $(s_1, \p_0)$, equation~\eqref{V1_function} represents firm $i$'s expected payoff from time $t = 1$ to time $t = 2$, assuming that firm $i$ follows $\tau_1^i$ at time $t = 1$, firms other than $i$ follow $\ssigma_1^{-i}$, and the payoffs for all firms at time $t = 2$ are given by the vector $\v$. Note that the $(i, s_1, \p_0)$-coordinate of $\Vone(\ssigma_1, \ttau_1, \v)$ depends only on $\tau_1^i$. For this reason, when no confusion can arise, we often write $\Vone(\ssigma_1, \tau_1^i, \v)_{i, s_1, \p_0}$ instead of $\Vone(\ssigma_1, \ttau_1, \v)_{i, s_1, \p_0}$.

\subsection{\texorpdfstring{Further Clarification of $\Vonet$ and its Relationship with $\Vone$}{}}
\label{sec:relation_vi}

The following fundamental proposition formulates a Bellman Equation for $\Vonet$. We use it to interpret $\Vonet$ as a weighted sum of conditional expectations and to directly relate $\Vonet$ to $\Vone$.

\begin{proposition}[Lemma 1 of \cite{fink1964equilibrium}]\label{prop:BellmanEqV1} Let $i\in[n]$ and $\ssigma_1 = (\sigma^i_1, \ssigma^{-i}_1)\in \SSigma_1$. For each $(s_1,\p_0)\in \S\times \A^n$, $\Vonet(s_1, \p_0, \sigma^i_1|\ssigma^{-i}_1)$ satisfies the following Bellman Equation,
\begin{multline}
\label{Bellman_equationV1}
  \Vonet(s_1, \p_0, \sigma^i_1|\ssigma^{-i}_1)
=\\
\sum_{\p_1\in\A^n}\sigma_1(\p_1|\p_0,s_1) \left[ \pi^i(\p_1,s_1) +
    \delta_i\sum_{s_2\in\S}\P(s_2|\p_1,s_1)
    \Vonet(s_2, \p_1, \sigma^i_1|\ssigma^{-i}_1)\right].    
\end{multline}
Moreover, the system of $rM$ equations given by \eqref{Bellman_equationV1} has a unique solution in the $rM$ variables $\{\Vonet(s_1, \p_0, \sigma^i_1 | \ssigma^{-i}_1)\}_{(s_1, \p_0) \in \S \times \A^n}$.
\end{proposition}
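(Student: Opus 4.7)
I would derive the Bellman equation by unpacking the defining expectation in~\eqref{conditional_value_function_t1} one step at a time, and then argue uniqueness by writing the resulting system as a linear fixed-point equation with contraction constant $\delta_i$.

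For the first step, I start from the definition
\[
\Vonet(s_1,\p_0,\sigma_1^i|\ssigma_1^{-i}) = \lim_{T\to\infty}\E_{\ssigma_1,\P}\!\left[\sum_{t=1}^T\delta_i^{t-1}\pi^i(\p_t,s_t)\,\Big|\,\p_0,s_1\right]
\]
and apply the recursive identity~\eqref{not_Esigma_P_g2} with $k=1$, splitting off the $t=1$ term. This gives
\[
\sum_{\p_1\in\A^n}\sigma_1(\p_1|\p_0,s_1)\!\left[\pi^i(\p_1,s_1) + \delta_i\!\sum_{s_2\in\S}\!\P(s_2|\p_1,s_1)\,\E_{(\ssigma_t)_{t\geq 2},\P}\!\left[\sum_{t=2}^T\delta_i^{t-2}\pi^i(\p_t,s_t)\,\Big|\,\p_1,s_2\right]\right].
\]
By Assumption~\ref{A_onememory}, $\ssigma_t=\ssigma_1$ for every $t\geq 1$, so a change of index $t\mapsto t-1$ identifies the inner expectation, in the limit $T\to\infty$, with $\Vonet(s_2,\p_1,\sigma_1^i|\ssigma_1^{-i})$. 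The uniform bound $\pi^i(\p,s)\leq\sup\pi^i<\infty$ combined with $\delta_i<1$ justifies interchanging the limit with the finite sums over $\p_1$ and $s_2$, yielding~\eqref{Bellman_equationV1}.

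For uniqueness, I would rewrite the $rM$ equations in~\eqref{Bellman_equationV1} as a linear fixed-point equation $v = r + \delta_i Q v$ on $\R^{rM}$, where, using the enumeration of $\S\times\A^n$ in~\eqref{not_StimesAn},
\[
r_{s_1,\p_0} := \sum_{\p_1\in\A^n}\sigma_1(\p_1|\p_0,s_1)\pi^i(\p_1,s_1), \qquad Q_{(s_1,\p_0),(s_2,\p_1)} := \sigma_1(\p_1|\p_0,s_1)\P(s_2|\p_1,s_1).
\]
The matrix $Q$ is row-stochastic because $\sigma_1(\cdot|\p_0,s_1)$ and $\P(\cdot|\p_1,s_1)$ are probability distributions, so the operator $T v := r + \delta_i Q v$ satisfies $\|Tv - Tw\|_\infty \leq \delta_i\|v-w\|_\infty$ with $\delta_i<1$. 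The Banach fixed-point theorem then gives existence and uniqueness of a solution in $\R^{rM}$; equivalently, $I-\delta_i Q$ is invertible via its Neumann series.

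\textbf{Main obstacle.} The nontrivial bookkeeping is the first step: one must correctly peel off the $t=1$ term from the nested expectation~\eqref{not_Esigma_P_g2}, invoke Assumption~\ref{A_onememory} to shift indices, and interchange the $T\to\infty$ limit with the finite sums. Once the Bellman form is in hand, the uniqueness argument is essentially immediate from the contraction property of $\delta_i Q$.
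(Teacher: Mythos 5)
Your proposal is correct, and its first half is the same argument as the paper's: the paper also peels off the $t=1$ term from the defining expectation and uses the time-independence of $\pi^i$ together with Assumption~\ref{A_onememory} to identify the inner expectation $\E_{\ssigma_1,\P}\bigl[\sum_{t\geq 2}\delta_i^{t-2}\pi^i(\p_t,s_t)\big|\p_1,s_2\bigr]$ with $\Vonet(s_2,\p_1,\sigma_1^i|\ssigma_1^{-i})$, which gives \eqref{Bellman_equationV1}. For uniqueness you diverge slightly in technique: you cast the system as $v=r+\delta_i Q v$ with $Q$ row-stochastic and conclude by the contraction property of $v\mapsto r+\delta_i Qv$ in the sup norm (Banach fixed point, equivalently the Neumann series for $(I-\delta_i Q)^{-1}$). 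The paper instead writes the system as $\bA\, \Vonet(\sigma_1^i|\ssigma_1^{-i})=\E_{\ssigma_1}[\pi^i]$, where its matrix $\bA$ is exactly your $I-\delta_i Q$, and proves invertibility via the Gershgorin circle theorem, computing that each diagonal entry exceeds the corresponding off-diagonal absolute row sum by exactly $1-\delta_i$, so every eigenvalue satisfies $|\lambda|\geq 1-\delta_i>0$. The two arguments rest on the identical structural fact ($Q$ row-stochastic and $\delta_i<1$); yours is marginally more elementary and delivers geometric convergence of value iteration as a by-product, while the paper's explicit linear-system form is what it cites back in Section~\ref{sec:relation_vi} (via \eqref{Av1_system_proof}) to interpret $\Vonet$ as a weighted sum of the entries of $\E_{\ssigma_1}[\pi^i]$ with weights determined by $\bA^{-1}$ --- an interpretation your Neumann series $\sum_{k\geq 0}(\delta_i Q)^k r$ recovers just as well. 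No gaps.
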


Proposition \ref{prop:BellmanEqV1} offers a more tractable characterization of the conditional value function at $t=1$, transforming it from an infinite expectation in  \eqref{conditional_value_function_t1} into a finite recursive formula. Furthermore, it leads to an expression of the conditional value function as a weighted average over expected profits at a finite number of state-action pairs. Indeed, following the proof of \eqref{Bellman_equationV1} in Appendix \ref{sec:proof_prop_1}, one can notice that for each $(s_1,\p_0)\in \S\times \A^n$, $\Vonet(s_1, \p_0, \sigma^i_1|\ssigma^{-i}_1)$ is a weighted sum of the entries of $\E_{\ssigma_1}[\pi^i] := (\E_{\ssigma_1}[\pi^i|\p^1,s^1],\cdots,\E_{\ssigma_1}[\pi^i|\p^M,s^r])^T\in\mathbb{R}^{rM}$. Moreover, such weights are uniquely determined by the policies in $\ssigma_1$ and the transition probability $\P$ (see \eqref{Av1_system_proof} in Appendix \ref{sec:proof_prop_1}). 

Equation \eqref{Bellman_equationV1} also establishes the following direct relationship between the conditional value function at time $t=1$ 
and the vector-valued function 
$\Vone$, facilitating our analysis of equilibrium conditions: 
\begin{equation}
\label{eqn:equality_V1_V1conditional}
    \Vonet(s_1, \p_0, \sigma^i_1|\ssigma^{-i}_1)
    =\Vone(\ssigma_1,\ssigma_1,\tilde{\v})_{i,s_1,\p_0},
\end{equation}
for each $(i,s_1,\p_0)$-coordinate, where $\tilde{\v}_{i,s_1,\p_0}:= \Vonet(s_1, \p_0, \sigma^i_1|\ssigma^{-i}_1)$. To see this, observe that equation~\eqref{Bellman_equationV1} is identical to \eqref{V1_function} when we set $\tau_1 = \sigma_1$ and $\v = \tilde{\v}$.

\subsection{Nash equilibrium}
\label{sec:nash_definition}

Using the definitions of the two conditional value functions at times $t=0$ and $t=1$, we define the concepts of a Nash equilibrium from time $t=1$ and an SPE. 

A policy  $\sigma^{i*}_1 \equiv (\sigma^{*})_1^{i} \in \SSigma_1^i$ is called a best-response policy to $\ssigma^{-i}_1\in \SSigma_1^{-i}$ if for all $(s_1,\p_0)\in\S\times\A^n$, 
\begin{equation}\label{sigmai_bestR_V1}
    \sigma_1^{i*}\in \argmax_{\sigma^i_1\in \SSigma^i_1} \Vonet(s_1, \p_0, \sigma^{i}_1|\ssigma^{-i}_1),
\end{equation}
where $\Vonet(s_1, \p_0, \sigma^{i}_1|\ssigma^{-i}_1)$ is given by \eqref{conditional_value_function_t1}. 
We say that  $\ssigma^*_1\in\SSigma_1$ is a Nash equilibrium from time $t=1$, if for all $i\in[n]$,
$\sigma^{i*}_1$  is a best-response policy to $\ssigma^{-i*}_1 \equiv (\ssigma^{*})_1^{-i}$. In other words, $\ssigma^*_1 \in\SSigma_1$ is a Nash equilibrium from time $t=1$, if for all $i\in[n]$, and $(s_1,\p_0)\in\S\times\A^n$,  
\begin{equation}\label{eqn:def_nash_eq}
    \sigma_1^{i*}\in \argmax_{\sigma^i_1\in \SSigma^i_1} \Vonet(s_1, \p_0, \sigma^{i}_1|\ssigma^{-i*}_1). 
\end{equation}

We define a subgame perfect  equilibrium (SPE) as a profile $(\ssigma_0^*,\ssigma_1^*)$ such that $\ssigma_1^*$ is a Nash equilibrium from time $t=1$, and for each $i\in[n]$ $\ssigma_0^*\in\SSigma_0$ satisfies 
\begin{equation}\label{eqn:Nasht_0}
    \sigma_0^{i*}\in \argmax_{\sigma_0^i\in\SSigma_0} \Vzerot(s_0, (\sigma_0^{i},\sigma_1^{i*})|(\ssigma_0^{-i*},\ssigma_1^{-i*})).
\end{equation}
That is, no firm can profitably deviate from its initial strategy $\sigma_0^i$, given that all players follow the strategy profile  $\ssigma_1^*$ from time $t=1$ onward.

\section{Existence of One-Memory SPEs}
\label{sec:NashT1}
We establish the existence of a one-memory subgame perfect equilibrium (SPE) and formulate an algorithm for verifying whether a given profile satisfies this condition. Our analysis consists of three theorems. Theorem~\ref{Thm:Fink}, which corresponds to Theorem 2 of \cite{fink1964equilibrium}, establishes the existence of a fixed point of the $\Vone$ operator with desirable properties. Theorem~\ref{Thm:ExistenceNashT1} shows that such a fixed point corresponds to a Nash equilibrium from time $t=1$. Finally, Theorem~\ref{thm:existence_sigma_0} establishes the existence of a one-memory SPE. We demonstrate the application of this theory to grim trigger strategies in Section~\ref{sec:Repeated_Bertrand}.

\begin{theorem}[Existence of stationary points with special properties  \citep{fink1964equilibrium}]\label{Thm:Fink}
There exist $\ssigma_1^*\in\SSigma_1$ and $\v^*\in \R^{nrM}$ satisfying 
\begin{equation}\label{NashT1_1}
    \v^* = \Vone(\ssigma_1^{*},\ssigma_1^{*},\v^*) 
\end{equation}
and 
\begin{equation}\label{NashT1_2}
    \v^*_{i,s_1,\p_0} = \max_{\sigma^i_1\in \SSigma_1^i} \Vone(\ssigma_1^{*},\sigma_1^{i},\v^*)_{i,s_1,\p_0} \ \ \forall \ (i,s_1,\p_0)\in [n]\times \S\times \A^n.
\end{equation}
\end{theorem}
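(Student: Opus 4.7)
My plan is to apply Kakutani's fixed point theorem to a correspondence on the product of the policy space $\SSigma_1$ with a compact box of admissible value vectors. Let $\bar{\pi}_i := \max_{(\p,s)\in\A^n\times\S}\pi^i(\p,s)$, which is finite since $\A^n\times\S$ is finite. Because $\pi^i\geq 0$ and $\delta_i\in(0,1)$, I expect every value vector produced by \eqref{V1_function} to lie in the compact convex box $K := \prod_{i=1}^n\prod_{(s_1,\p_0)\in\S\times\A^n}[0,\bar{\pi}_i/(1-\delta_i)]\subset\R^{nrM}$. Take as domain $D := \SSigma_1\times K$, which is compact and convex since each $\SSigma_1^i$ is an $rM$-fold product of simplices.

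Next I would define $\Phi:D\rightrightarrows D$ by $\Phi(\ssigma_1,\v) := B(\ssigma_1,\v)\times\{\w(\ssigma_1,\v)\}$, where
\[
\w(\ssigma_1,\v)_{i,s_1,\p_0} := \max_{\sigma_1^i\in\SSigma_1^i}\Vone(\ssigma_1,\sigma_1^i,\v)_{i,s_1,\p_0},
\]
and $B(\ssigma_1,\v)$ is the set of $\ttau_1\in\SSigma_1$ such that $\tau_1^i$ attains the maximum above for every $(i,s_1,\p_0)$. A direct bound in \eqref{V1_function} using $\pi^i\leq\bar{\pi}_i$, $v_{i,s_2,\p_1}\leq\bar{\pi}_i/(1-\delta_i)$, and $\delta_i<1$ shows $\w(\ssigma_1,\v)\in K$, so $\Phi$ maps $D$ into itself.

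I would then verify Kakutani's hypotheses. Non-emptiness is immediate from continuity of $\Vone(\ssigma_1,\cdot,\v)$ and compactness of $\SSigma_1^i$. Convex-valuedness rests on the key structural observation from \eqref{V1_function} that $\Vone(\ssigma_1,\sigma_1^i,\v)_{i,s_1,\p_0}$ is \emph{linear} in the block $\sigma_1^i(\cdot|\p_0,s_1)$ and does not depend on any other block of $\sigma_1^i$; since the blocks decompose over $(s_1,\p_0)$, $B(\ssigma_1,\v)$ is a Cartesian product of sets of maximizers of linear functions on simplices, each convex, and the $\w$-component is a singleton. Closedness of the graph of $\Phi$ follows from applying Berge's Maximum Theorem coordinate-wise, using that $\Vone$ is jointly continuous (in fact multilinear) in all of $(\ssigma_1,\sigma_1^i,\v)$.

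Kakutani's theorem then yields $(\ssigma_1^*,\v^*)\in D$ with $(\ssigma_1^*,\v^*)\in\Phi(\ssigma_1^*,\v^*)$. Condition \eqref{NashT1_2} is immediate from $\v^* = \w(\ssigma_1^*,\v^*)$. For \eqref{NashT1_1}, the inclusion $\ssigma_1^*\in B(\ssigma_1^*,\v^*)$ gives $v^*_{i,s_1,\p_0} = \Vone(\ssigma_1^*,\sigma_1^{i*},\v^*)_{i,s_1,\p_0}$, and since the $(i,s_1,\p_0)$-coordinate of $\Vone(\ssigma_1^*,\ssigma_1^*,\v^*)$ depends only on $\sigma_1^{i*}$, these two quantities coincide. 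The main obstacle I anticipate is verifying upper hemicontinuity of $B$ simultaneously across all $(i,s_1,\p_0)$ coordinates: the argmax set in each block depends continuously on $(\ssigma_1^{-i},\v)$, but assembling the coordinate-wise argmaxes into a single product correspondence with closed graph requires careful use of the block-separability of $\Vone$ in $\sigma_1^i$ to avoid spurious interactions across blocks.
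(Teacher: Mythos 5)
Your proposal is correct, but it takes a genuinely different route from the paper's (Fink's) argument. The paper proceeds in two stages: for each fixed $\ssigma_1$ it shows that the map $\v\mapsto T(\v,\ssigma_1)$, with $T(\v,\ssigma_1)_{i,s_1,\p_0}=\max_{\tau_1^i\in\SSigma_1^i}\Vone(\ssigma_1,\tau_1^i,\v)_{i,s_1,\p_0}$, is a $\delta$-contraction ($\delta=\max_i\delta_i$), defines $b(\ssigma_1)$ as its unique Banach fixed point, proves continuity of $b$ via an equicontinuity argument for the family $\{T(\v,\cdot)\}$ over bounded sets, and only then applies Kakutani on $\SSigma_1$ alone to the correspondence $\Gamma(\ssigma_1)=\{\ttau_1\in\SSigma_1 : b(\ssigma_1)=\Vone(\ssigma_1,\ttau_1,b(\ssigma_1))\}$; the fixed point $\ssigma_1^*$ with $\v^*=b(\ssigma_1^*)$ then delivers \eqref{NashT1_1} and \eqref{NashT1_2}. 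You instead apply Kakutani once, on the enlarged compact convex domain $\SSigma_1\times K$, to the joint correspondence pairing the argmax policies with the updated value vector; a fixed point immediately gives $\v^*=\w(\ssigma_1^*,\v^*)$, i.e.\ \eqref{NashT1_2}, and, because the $(i,s_1,\p_0)$-coordinate of \eqref{V1_function} depends on the second policy argument only through $\tau_1^i$, the inclusion $\ssigma_1^*\in B(\ssigma_1^*,\v^*)$ yields \eqref{NashT1_1} as well. Your verification steps hold up: invariance of $K$ uses $\pi^i\geq 0$ and $\delta_i<1$ (both assumed in the paper; without nonnegativity a symmetric box works), convexity of $B$ follows from linearity of $\Vone$ in each block $\tau_1^i(\cdot|\p_0,s_1)$ together with block-separability, and the closed graph is exactly the Berge-type argument you sketch—the "obstacle" you anticipate about assembling coordinate-wise argmaxes is not a real one, since joint continuity of $\Vone$ and continuity of the max give a closed graph coordinate by coordinate, and a finite product of convex-valued, closed-graph components inherits both properties. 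What your route buys is brevity: no contraction step, no mapping $b$, no equicontinuity lemma. What the paper's route buys is the auxiliary structure—the contraction property of $T$, the uniqueness of $b(\ssigma_1)$ for every $\ssigma_1$, and its continuity—which the authors explicitly state they rely on when developing the results of Sections~\ref{sec:Repeated_Bertrand} and~\ref{sec:RLalgo}; your argument proves Theorem~\ref{Thm:Fink} but would not by itself supply those reusable facts.
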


\begin{theorem}[Existence of Nash Equilibrium from time $t=1$]\label{Thm:ExistenceNashT1} Suppose that $\ssigma_1^*\in\SSigma_1$ and $\v^*\in \R^{nrM}$ satisfy \eqref{NashT1_1} and \eqref{NashT1_2}. Then, for each $i\in[n]$ and $(s_1,\p_0)\in\S\times\A^n$, 
\begin{equation}
\label{eqn:max_equality_V1_V1conditional}
   \max_{\sigma^i_1\in\SSigma_1^i} \Vone(\ssigma_1^{*},\sigma_1^{i},\v^*)_{i,s_1,\p_0} = \max_{\sigma^i_1\in\SSigma_1^i}\Vonet(s_1,\p_0,\sigma_1^i|\ssigma_1^{-i*}).
\end{equation}
Moreover, $\ssigma_1^*$ is a Nash equilibrium from time $t=1$. 
\end{theorem}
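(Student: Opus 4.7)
The plan is to identify $\v^*$ with the conditional value under $\ssigma_1^*$ and then establish a one-shot deviation principle: the best achievable infinite-horizon payoff when the opponents play $\ssigma_1^{-i*}$ forever equals the best single-stage deviation using $\v^*$ as the continuation value. First I would substitute $\ttau_1 = \ssigma_1 = \ssigma_1^*$ and $\v = \v^*$ into \eqref{V1_function} and compare the resulting componentwise identity \eqref{NashT1_1} with the Bellman equation \eqref{Bellman_equationV1} applied to $\Vonet(\cdot, \cdot, \sigma_1^{i*} \mid \ssigma_1^{-i*})$; since Proposition \ref{prop:BellmanEqV1} guarantees a unique solution to this linear system, I conclude $\v^*_{i, s_1, \p_0} = \Vonet(s_1, \p_0, \sigma_1^{i*} \mid \ssigma_1^{-i*})$ for every $(i, s_1, \p_0)$.

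Next I would show $\Vonet(s_1, \p_0, \sigma_1^i \mid \ssigma_1^{-i*}) \leq \v^*_{i, s_1, \p_0}$ for every $\sigma_1^i \in \SSigma_1^i$. For fixed $i$ and $\sigma_1^i$, define the linear operator $T^i_{\sigma_1^i} : \R^{rM} \to \R^{rM}$ by
\[
(T^i_{\sigma_1^i}\w)_{s_1,\p_0} := \sum_{\p_1 \in \A^n}\sigma_1^i(p_1^i \mid \p_0, s_1)\,\sigma_1^{-i*}(\p_1^{-i} \mid \p_0, s_1)\left[\pi^i(\p_1, s_1) + \delta_i \sum_{s_2 \in \S} \P(s_2 \mid \p_1, s_1)\, w_{s_2, \p_1}\right].
\]
This operator is monotone and a $\delta_i$-contraction in the sup norm, so it has a unique fixed point, which by \eqref{Bellman_equationV1} is precisely $\Vonet(\cdot, \cdot, \sigma_1^i \mid \ssigma_1^{-i*})$. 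By \eqref{V1_function} and hypothesis \eqref{NashT1_2}, $\v^{*,i} \geq T^i_{\sigma_1^i}\v^{*,i}$ componentwise, where $\v^{*,i} \in \R^{rM}$ denotes the $i$-slice of $\v^*$. Iterating via monotonicity yields a decreasing sequence $\v^{*,i} \geq T^i_{\sigma_1^i}\v^{*,i} \geq (T^i_{\sigma_1^i})^2 \v^{*,i} \geq \cdots$, which converges by contractivity to the fixed point, giving the desired bound $\v^{*,i} \geq \Vonet(\cdot, \cdot, \sigma_1^i \mid \ssigma_1^{-i*})$.

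Finally I would combine the two steps. Taking the maximum over $\sigma_1^i$ in the inequality from step two gives $\max_{\sigma_1^i} \Vonet(s_1, \p_0, \sigma_1^i \mid \ssigma_1^{-i*}) \leq \v^*_{i, s_1, \p_0}$, while step one with $\sigma_1^i = \sigma_1^{i*}$ produces the matching lower bound; invoking \eqref{NashT1_2} then yields \eqref{eqn:max_equality_V1_V1conditional} and simultaneously shows that $\sigma_1^{i*}$ attains the maximum on the right, which is the Nash equilibrium condition \eqref{eqn:def_nash_eq}. The main obstacle is the inequality in step two, which lifts the one-stage hypothesis \eqref{NashT1_2} to arbitrary infinite-horizon deviations; the Bellman operator's monotonicity and contraction property is what makes this lift valid without imposing any structure on the deviating policy $\sigma_1^i$.
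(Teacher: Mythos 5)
Your proposal is correct and takes essentially the same route as the paper: identify $\v^*$ with $\Vonet(\cdot,\cdot,\sigma_1^{i*}|\ssigma_1^{-i*})$ via the uniqueness in Proposition~\ref{prop:BellmanEqV1}, then use $\delta_i<1$ to lift the one-stage condition \eqref{NashT1_2} to arbitrary infinite-horizon deviations, and finally combine the two bounds to get \eqref{eqn:max_equality_V1_V1conditional} and the best-response property of $\sigma_1^{i*}$. The only difference is cosmetic: where the paper bounds the deviation gap $\Vonet(s_1,\p_0,\sigma_1^{i}|\ssigma_1^{-i*})-\Vonet(s_1,\p_0,\sigma_1^{i*}|\ssigma_1^{-i*})$ by $\delta_i$ times its own maximum over $(s_1,\p_0)$, you iterate the monotone $\delta_i$-contraction $T^i_{\sigma_1^i}$ from the supersolution $\v^{*,i}$ --- the same contraction argument in different packaging.
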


\begin{theorem}[Existence of the one-memory SPE]\label{thm:existence_sigma_0} If $\ssigma_1^*\in \SSigma_1$ is a Nash equilibrium from time $t=1$, then there exists $\ssigma_0^*\in\SSigma_0$ such that $\ssigma^{*}=(\ssigma_0^{*}, \ssigma_1^{*})$ is a one-memory SPE of the stochastic game.
\end{theorem}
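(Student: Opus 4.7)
The plan is to reduce the construction of $\ssigma_0^*$ to a collection of finite one-shot Nash existence problems, one for each initial state. The key observation is that once $\ssigma_1^*$ is fixed, the value function at $t=0$ depends on $\ssigma_0$ only through a simple expectation over $\A^n$.

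First, I would use the definition \eqref{conditional_value_function} together with the same conditioning-on-first-step decomposition as in \eqref{not_Esigma_P_g} to write, for any $\ssigma_0 \in \SSigma_0$, $s_0 \in \S$ and $i \in [n]$,
\begin{equation*}
    \Vzerot(s_0,(\sigma_0^i,\sigma_1^{i*})|(\ssigma_0^{-i*},\ssigma_1^{-i*}))
    = \sum_{\p_0 \in \A^n} \sigma_0(\p_0|s_0)\, h^i(\p_0,s_0),
\end{equation*}
where
\begin{equation*}
    h^i(\p_0,s_0) := \pi^i(\p_0,s_0) + \delta_i \sum_{s_1\in\S} \P(s_1|\p_0,s_0)\, \Vonet(s_1,\p_0,\sigma_1^{i*}|\ssigma_1^{-i*}).
\end{equation*}
Since $\ssigma_1^*$ is fixed, $h^i(\cdot,s_0)$ is a deterministic real-valued payoff matrix independent of $\ssigma_0$, and by independence of the coordinates $\sigma_0^j(\cdot|s_0)$ the above expectation is exactly the expected payoff of firm $i$ in a one-shot finite $n$-player game with action set $\A$ and payoff $h^i(\cdot,s_0)$.

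Second, for each $s_0 \in \S$ separately, I would invoke Nash's theorem for finite normal-form games to obtain a mixed-strategy Nash equilibrium $(\sigma_0^{i*}(\cdot|s_0))_{i=1}^n \in \prod_{i=1}^n \Delta(\A)$ of this auxiliary game. Defining $\sigma_0^{i*}(\cdot|s_0)$ in this way for every $s_0 \in \S$ and every $i \in [n]$ yields a one-memory policy profile $\ssigma_0^* \in \SSigma_0$ in the sense of Assumption~\ref{A_onememory}, since by construction each $\sigma_0^{i*}$ depends only on $s_0$.

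Third, I would verify condition \eqref{eqn:Nasht_0}. Fix $i$ and $s_0$. For any $\sigma_0^i \in \SSigma_0^i$, the decomposition above gives
\begin{equation*}
    \Vzerot(s_0,(\sigma_0^i,\sigma_1^{i*})|(\ssigma_0^{-i*},\ssigma_1^{-i*}))
    = \sum_{\p_0} \sigma_0^i(p_0^i|s_0)\,\sigma_0^{-i*}(\p_0^{-i}|s_0)\, h^i(\p_0,s_0),
\end{equation*}
which is maximized in $\sigma_0^i(\cdot|s_0)$ precisely by any best response of player $i$ in the one-shot game — in particular by $\sigma_0^{i*}(\cdot|s_0)$, by the Nash property obtained in the second step. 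Combined with the hypothesis that $\ssigma_1^*$ is a Nash equilibrium from time $t=1$, this shows $(\ssigma_0^*,\ssigma_1^*)$ is a one-memory SPE.

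The main obstacle is conceptual rather than technical: one must recognize that the one-memory restriction of Assumption~\ref{A_onememory} makes the $t=0$ problem decouple across initial states, turning the fixed-point argument into an independent finite game per $s_0 \in \S$. Once this is noted, the rest is a direct application of Nash's theorem; no Kakutani-style argument specific to the stochastic structure is needed, because the continuation value $\Vonet(\cdot,\cdot,\sigma_1^{i*}|\ssigma_1^{-i*})$ is already pinned down by Theorem~\ref{Thm:ExistenceNashT1}.
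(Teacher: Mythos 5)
Your proposal is correct and follows essentially the same route as the paper: the paper likewise rewrites $\Vzerot$ as $\E_{\ssigma_0}[\hat{v}^i(\p,s)|s_0]$ with $\hat{v}^i(\p_0,s_0)=\pi^i(\p_0,s_0)+\delta_i\sum_{s_1}\P(s_1|\p_0,s_0)v^*_{i,s_1,\p_0}$ (your $h^i$, since $v^*_{i,s_1,\p_0}=\Vonet(s_1,\p_0,\sigma_1^{i*}|\ssigma_1^{-i*})$) and then invokes Nash's existence theorem for the induced finite game per initial state. Your only cosmetic difference is working directly from the hypothesis that $\ssigma_1^*$ is a Nash equilibrium from $t=1$ rather than from the Fink fixed-point vector, which if anything matches the theorem's stated hypothesis more closely.
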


This theory suggests the following three-step algorithm for proving that a given profile is a one-memory SPE. If one can only verify the first two steps of the algorithm, then the given profile is a Nash equilibrium from time $t=1$. 
We frequently use this algorithm in our proofs. 

\begin{algorithm}[Proving that a given profile is a one-memory SPE]\label{algo:HowtoCheckNash}
Let $(\ssigma_0^g,\ssigma_1^g)$ be a given one-memory strategy profile. The following algorithm guides the proof that this profile is an SPE. Its first two steps are used for proving a Nash equilibrium from time $t=1$. 
\begin{enumerate}
    \item Plug  $\ssigma_1^g$ into equation \eqref{NashT1_1} and solve it as a linear system with unknowns $v^g_{i,s_1,\p_0}$ for each $(i,s_1,\p_0)$-coordinate.
    \item Plug  $\v^g$ and $\ssigma_1^g$ into \eqref{NashT1_2} and show that  $\v^g$ is a fixed point of the operator $v_{i,s_1,\p_0}\mapsto\max_{\sigma_1^i\in\SSigma_1^i}\Vone(\ssigma_1^g,\sigma_1^i,\v)_{i,s_1,\p_0}$. 
    \item Show that $\ssigma_0^g$ satisfies \eqref{eqn:Nasht_0}.
\end{enumerate}

\end{algorithm}

\textbf{Comments on the Proofs of Theorems \ref{Thm:Fink},  \ref{Thm:ExistenceNashT1} and \ref{thm:existence_sigma_0}.} 
The proof of Theorem \ref{Thm:Fink} is due to  \cite{fink1964equilibrium}. For completeness, Appendix \ref{Sec:FinksProof} rewrites Fink's proof using our notation, while including many of the missing details in \cite{fink1964equilibrium}. We find it necessary to refer to the rewritten proof when establishing the theories of Sections \ref{sec:Repeated_Bertrand} and \ref{sec:RLalgo}. 

Although Theorem~\ref{Thm:Fink} establishes the existence of a fixed point of the $\Vone$ operator, it does not, by itself, imply the existence of a Nash equilibrium from time $t=1$. To prove Theorem \ref{Thm:ExistenceNashT1}, one must additionally verify the equality in \eqref{eqn:max_equality_V1_V1conditional} and then invoke Theorem~\ref{Thm:Fink}. 
It is important to note that the identity
\[
\Vone(\ssigma_1^*, \sigma_1^i, \v^*)_{i, s_1, \p_0} = \Vonet(s_1, \p_0, \sigma_1^i | \ssigma_1^{-i*})
\]
does not generally hold for all $\sigma_1^i \in \SSigma_1^i$, and should not be confused with the special case in \eqref{eqn:equality_V1_V1conditional}, where both sides refer to the same strategy profile. The validity of \eqref{eqn:max_equality_V1_V1conditional} must be established through a series of inequalities, as detailed in Appendix~\ref{sec:proof_thm_2}.

To prove Theorem \ref{thm:existence_sigma_0}, we show that finding a solution for \eqref{eqn:Nasht_0} is equivalent to finding a static Nash equilibrium in mixed strategies of a particular finite game. We recall that an $n$-person finite game is any set $\{(X^i,q^{i})\}_{i=1}^n$ where $X^i$ is a nonempty finite set of actions and $q^i:X:=\times_{i=1}^nX^i\rightarrow \R$ is the profit for player $i$. A mixed strategy for agent $i$ is a probability mass function $\gamma^i$ on $X^i$. Given $\ggamma = (\gamma^i)_{i=1}^n$, the expected return for agent $i$ is given by  $\E_{\ggamma} q^i := \sum_{\x \in X}\ggamma(\x)q^i(\x),$
where $\ggamma(\x)$ denotes the product of $\gamma^i(x^i)$ for $i\in \{1,\dots, n\}$. From a theorem by \cite{nash1950equilibrium}, any $n$-person finite game has a Nash equilibrium in mixed strategies. 
For each $(\p_0,s_0)\in \A^n\times \S$, we define the quantity 
\begin{equation}
    \label{def:NashQ_function}
    \hat{v}^{i}(\p_0,s_0) := \pi^i(\p_0,s_0) + \delta_i \sum_{s_1\in\S} \P(s_1|\p_0,s_0)v_{i,s_1,\p_0}^*. 
\end{equation}
A similar quantity appears in \cite{hu2003nash}, where it is referred to as the Nash Q-function of agent $i$ at $(\p_0,s_0)$. We show (see Appendix \ref{sec:proof_theorem3}) that 
\begin{equation}\label{eqn:V0isEsigma_0}
    \Vzerot(s_0, \ssigma^i|\ssigma^{-i})
    =\E_{\ssigma_0}[\hat{v}^{i}(\p,s)|s_0].
\end{equation}
In view of this equation and the use of expected return in an $n$-person finite game, finding $\ssigma_0^*\in \SSigma_0$ satisfying  \eqref{eqn:Nasht_0} for each $i\in [n]$ is equivalent to finding a Nash equilibrium of the finite game $\{(\A,\hat{v}^{i})\}_{i=1}^n$, where $\A$ is the set of actions from Section \ref{sec:Model}. 

\subsection{Application: Grim Trigger Strategies as an SPE}
\label{sec:Repeated_Bertrand}

The results in Section~\ref{sec:NashT1} apply to a broad class of stochastic games. Leveraging this generality, we derive non-trivial implications for how collusion can be sustained under one-memory strategies. In particular, we provide sufficient conditions under which a grim trigger strategy that supports a collusive-enabling price constitutes a one-memory SPE. These conditions also apply to other theoretical statements.

First, we specify sufficient conditions that we use in Propositions \ref{prop:grimTriger}, \ref{prop:naiveCollusionQ}-\ref{prop:increasing_Q} and Theorem \ref{prop:QConvergence_T}: 
\begin{assumption}\label{assumption_deltai}
We require the following two conditions:
\begin{itemize}
    \item[(i)] $|\S| = 1$ and consequently $\pi^i(\p,s) \equiv \pi^i(\p)$.
    \item[(ii)] There exists a Nash equilibrium price $\p^*= (p^*,\dots,p^*)\in\A^n$ of the one-stage game $\{(\A,\pi^i)\}_{i=1}^n$. Furthermore, there exists a price $\p^C = (p^C,\dots,p^C)$ such that $\pi^i(\p^*) < \pi^i(\p^C)$ for each $i\in[n]$. 
We refer to $\p^*$ as the competition price and to $\p^C$ as the collusive-enabling price.
\end{itemize}
\end{assumption}
Condition (i) reduces our stochastic game to an infinite repeated game, by restricting the size of the state set to one. Under this condition, we may write $\pi^i(\p)$ instead of $\pi^i(\p,s)$ to refer to the profit function in \eqref{eqn:profit_function}. Condition (ii) aligns our stochastic game with a key feature of the dynamic Bertrand competition model (see, e.g., \cite{tirole1988theory}), and recent models of platform competition in two-sided markets (see, e.g., \cite{dewenter2011semi} and \cite{chica2025competition}).

The following sufficient condition is only used in Propositions \ref{prop:grimTriger}  and  \ref{prop:grimtriggerQ}. It uses the quantity 
$$\pi^{m,i}:=\max_{p^i\in\A\setminus\{p^C\}} \pi^i(p^i,(\p^C)^{-i}).$$
\begin{assumption}\label{assumption_deltai_simple}
For each $i\in[n]$, \ $\frac{\pi^{m,i}- \pi^i(\p^C)}{\pi^{m,i} - \pi^i(\p^*)}\leq \delta_i < 1.$
\end{assumption}

Assumption \ref{assumption_deltai_simple} provides a lower bound on $\delta_i$. 
The quantity $\pi^{m,i}$ is the best-response payoff of firm $i$ when all other firms charge $p^C$. We note that by definition $\pi^{m,i} \geq \pi^i(\p^C)$. The lower bound in condition (ii) is the ratio of the distance between $\pi^{m,i}$ and the collusive-enabling payoff, $\pi^i(\p^C)$, and the distance between $\pi^{m,i}$ and the competition payoff $\pi^i(\p^*)$.   

Next, we review the grim trigger strategy and formulate the main proposition of this section. The grim trigger strategy \citep{friedman1985cooperative} in our setting (under Assumption~\ref{assumption_deltai}) is a policy in which a firm cooperates by choosing the price $p^C$ as long as all other firms chose $p^C$ in the previous stage. If, on the other hand, at least one firm deviated in the previous stage by choosing a price $p^i \neq p^C$, the remaining firms permanently defect by playing $p^*$. Since $\p^*$ is a Nash equilibrium, firm $i$ has no incentive to deviate from the punishment path—a fact we verify formally in the proposition below. After deviating, firm $i$ is punished by receiving $\pi^i(\p^*)$ forever, without gaining any competitive advantage, since all firms revert to the same competitive price. 

In our setting of one-memory stochastic games, the grim trigger strategy can be expressed as the following one-memory policy:
$$\ssigma^f=(\sigma_0^f,\sigma_1^f), \text{ where } \sigma_0^f(p^C)=1, \sigma_1^{f}(p^C|\p^C)=1 \text{ and } \forall \p_0\in\A^n, \p_0\neq \p^C, \sigma_1^{f}(p^*|\p_0)=1.$$ 

\begin{proposition}[The grim trigger strategy is a one-memory SPE]\label{prop:grimTriger}
Under the assumptions of Section \ref{sec:Model} and Assumptions \ref{assumption_deltai} and \ref{assumption_deltai_simple}, the grim trigger strategy is an SPE of the stochastic game. Moreover, 
\begin{equation}\label{eqn:V0_bertrand}
    \Vzerot(\ssigma^f)
    =\frac{1}{1-\delta_i}\pi^i(\p^C).
\end{equation}
\end{proposition}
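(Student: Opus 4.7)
I will apply the three-step verification procedure of Algorithm~\ref{algo:HowtoCheckNash} to the grim trigger profile $\ssigma^f=(\sigma_0^f,\sigma_1^f)$. Since $|\S|=1$ by Assumption~\ref{assumption_deltai}(i), I drop the state from the notation throughout and the only relevant coordinates of $\v$ are indexed by $(i,\p_0)\in[n]\times\A^n$.

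\textbf{Step 1 (solve the linear system $\v^g=\Vone(\ssigma_1^f,\ssigma_1^f,\v^g)$).} Under $\ssigma_1^f$, if $\p_0=\p^C$ all firms play $p^C$ next period and the state $\p^C$ is absorbing for this profile; if $\p_0\neq\p^C$ all firms play $p^*$, and since $\p^*\neq\p^C$ (because $\pi^i(\p^*)<\pi^i(\p^C)$) the state $\p^*$ is absorbing as well. Plugging these into \eqref{V1_function} gives the decoupled scalar equations $v^g_{i,\p^C}=\pi^i(\p^C)+\delta_i v^g_{i,\p^C}$ and, for any $\p_0\neq\p^C$, $v^g_{i,\p_0}=\pi^i(\p^*)+\delta_i v^g_{i,\p^*}$. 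Solving yields
\[
v^g_{i,\p^C}=\frac{\pi^i(\p^C)}{1-\delta_i},\qquad v^g_{i,\p_0}=\frac{\pi^i(\p^*)}{1-\delta_i}\ \text{for all }\p_0\neq\p^C.
\]

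\textbf{Step 2 (verify the max identity \eqref{NashT1_2}).} Fix $i\in[n]$ and evaluate $\Vone(\ssigma_1^f,\sigma_1^i,\v^g)_{i,\p_0}$. Two cases.

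\emph{Case $\p_0=\p^C$}: firms $-i$ play $(p^C)^{-i}$ with probability one, so for any $p_1^i\in\A$ the continuation state is $(p_1^i,(p^C)^{-i})$, which equals $\p^C$ iff $p_1^i=p^C$. The value from playing $p^C$ is $\pi^i(\p^C)/(1-\delta_i)=v^g_{i,\p^C}$, while any deviation $p_1^i\neq p^C$ yields at most $\pi^{m,i}+\delta_i\pi^i(\p^*)/(1-\delta_i)$. The required inequality $\pi^i(\p^C)/(1-\delta_i)\geq\pi^{m,i}+\delta_i\pi^i(\p^*)/(1-\delta_i)$ is algebraically equivalent to $\delta_i\geq(\pi^{m,i}-\pi^i(\p^C))/(\pi^{m,i}-\pi^i(\p^*))$, which is exactly Assumption~\ref{assumption_deltai_simple}. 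This is the sharp step of the proof.

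\emph{Case $\p_0\neq\p^C$}: firms $-i$ play $(p^*)^{-i}$, and since $p^*\neq p^C$ the vector $(p_1^i,(p^*)^{-i})$ is never equal to $\p^C$ regardless of $p_1^i$. Hence the continuation value is always $\pi^i(\p^*)/(1-\delta_i)$, and the maximization reduces to $\max_{p_1^i}\pi^i(p_1^i,(p^*)^{-i})$, which by Assumption~\ref{assumption_deltai}(ii) is attained at $p_1^i=p^*$ and equals $\pi^i(\p^*)$. Thus the maximum equals $\pi^i(\p^*)/(1-\delta_i)=v^g_{i,\p_0}$, as required. Together with Theorem~\ref{Thm:ExistenceNashT1}, this establishes that $\ssigma_1^f$ is a Nash equilibrium from $t=1$.

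\textbf{Step 3 (verify the $t=0$ condition \eqref{eqn:Nasht_0}).} Using the equivalence \eqref{eqn:V0isEsigma_0} from the discussion after Theorem~\ref{thm:existence_sigma_0}, it suffices to show that the pure profile $\sigma_0^f$ (all firms charging $p^C$) is a Nash equilibrium of the finite game with payoffs $\hat v^i(\p_0)=\pi^i(\p_0)+\delta_i v^g_{i,\p_0}$. Fixing the others at $(p^C)^{-i}$, deviating to $p_0^i\neq p^C$ gives $\hat v^i(p_0^i,(p^C)^{-i})\leq\pi^{m,i}+\delta_i\pi^i(\p^*)/(1-\delta_i)$, whereas playing $p^C$ yields $\pi^i(\p^C)/(1-\delta_i)$. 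The same inequality used in Step~2 rules out profitable deviations. Finally, along the induced path every firm collects $\pi^i(\p^C)$ in each period, giving the closed-form value \eqref{eqn:V0_bertrand} by geometric summation.

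\textbf{Main obstacle.} The only nontrivial step is the incentive check in Step~2 at $\p_0=\p^C$; the calibration of $\delta_i$ in Assumption~\ref{assumption_deltai_simple} is tight precisely for this inequality, and Step~3 reuses the same bound without extra work.
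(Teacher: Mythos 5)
Your proposal is correct and follows essentially the same route as the paper: it runs the three steps of Algorithm~\ref{algo:HowtoCheckNash}, obtains the same solution $v^f_{i,\p_0}=\pi^i(\p^C)/(1-\delta_i)$ for $\p_0=\p^C$ and $\pi^i(\p^*)/(1-\delta_i)$ otherwise, and verifies the fixed-point and $t=0$ conditions via exactly the two inequalities the paper derives from the one-stage Nash property of $\p^*$ and from Assumption~\ref{assumption_deltai_simple}. The only cosmetic difference is that you make explicit the observation that a unilateral deviation can never move the system into the state $\p^C$ (which implicitly uses $n\geq 2$), a fact the paper uses silently in its Step 2 computation.
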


The proof of Proposition~\ref{prop:grimTriger}, provided in Appendix~\ref{sec:proofgrimTriger}, relies on Algorithm~\ref{algo:HowtoCheckNash}. While the idea that grim trigger strategies can support collusion in equilibrium is well known (see, e.g., \cite{friedman1985cooperative, osborne1994course}), our analysis provides a concise verification within the one-memory framework developed in this paper. Unlike the more involved or informal arguments typically found in the literature, our method leverages a fixed-point characterization and a general procedure for verifying subgame perfect equilibria in stochastic games with bounded memory. 

\section{\texorpdfstring{
Collusion
under $Q$-Learning}{}
}
\label{sec:RLalgo}
This section establishes key properties of $Q$-learning \citep{watkins1992q}, one of the most widely used reinforcement learning algorithms. 
Section~\ref{sect:basicQlearning} introduces a version of $Q$-learning without experimentation, adapted to the stochastic game framework developed in Section~\ref{sec:Model}. We establish a connection between the fixed points of this algorithm and the $\V^i$-functions defined in Section~\ref{sec:v_function}, showing that these fixed points correspond to the value of the stochastic game at time $t=1$ under a specific class of strategies, which we refer to as induced strategies. We then provide sufficient conditions under which the induced strategies form a Nash equilibrium from time~$t = 1$. Since these strategies are one-memory strategies, the results developed in Section~\ref{sec:NashT1} apply directly. 
Section~\ref{sect:convergenceQlearning} studies a version of $Q$-learning with bounded experimentation. We provide sufficient conditions for its convergence in stochastic games satisfying Assumption~\ref{assumption_deltai}, including the standard dynamic Bertrand competition model as a special case. We also characterize conditions under which $Q$-learning leads firms to consistently choose supracompetitive prices. In addition, we identify sufficient conditions under which these supracompetitive prices are supported by one of three classes of strategies: naive collusion, grim trigger strategies, or increasing strategies. 
Finally, Section~\ref{sect:discussionAssumptions} offers an economic interpretation of the assumptions underlying our main convergence result.

\subsection{\texorpdfstring{A Relationship of a $\Q$-Learning Algorithm with the Stochastic Game}{}}\label{sect:basicQlearning} 
We formulate a version of the $Q$-learning algorithm with no experimentation, while assuming the multi-agent setting of Section \ref{sec:Model}. We then establish the relationship of the $Q$-function of this algorithm with the value functions, $\Vone$ and $\Vonet$, of the stochastic game. The basic idea of this algorithm is to find a policy that maximizes \eqref{conditional_value_function_t1} given the policies of all other agents. The algorithm takes as input $Q_0^i: \S\times \A^{n+1} \to \R$ for $i\in [n]$, as well as several parameters, and output $Q_t^i: \S \times \A^{n+1} \to \R$ for $i\in [n]$ and ${t\geq 1}$. We use the notation $\s = (s,\p) \in \S\times\A^n$. 
\begin{algorithm}[$Q$-learning with no experimentation]\label{Algo_Q_argmax} Arbitrarily fix $\p_0\in\A^n$ and $s_1\in\S$. For each $(\s,p)\in\S\times\A^{n+1}$ and $j\in[n]$, let $Q_0^j(\s,p) = 0$. At time $t\geq 1$, firm $i$ observes $ \s_t =(s_t,\p_{t-1})\in\S\times \A^n$ and updates its $Q$-values using the following rule, for each $(\s,p) \in\S\times\A^{n+1}$,
\begin{equation}\begin{split}\label{def:Q_learningUpdate}
    Q_{t+1}^i(\s,p) &= (1-\alpha_t)Q_{t}^i(\s,p)+\alpha_t\left\{\pi^i(\p_t,s)+\delta_i \mathbb{E}_{\s_{t+1}}\left[\max_{a\in\A}Q_t^i(\s_{t+1},a)\right]\right\}, 
\end{split}
\end{equation}
where both the profit function $\pi^i(\p_t,\s)$ and rates  $\alpha_t=\alpha_t(\s,p) \in [0,1]$ for $t\geq 1$ are parametric choices  of the algorithm. For $t\geq 1$, $\alpha_t = 0$ for each $(\s,p)\neq (\s_t,p_t^i)$. That is, $\alpha_t$ is positive only at the state-action pair $(\s_t, p_t^i)$ observed at time $t$. 
Then, with uniform probability, firm $i$ chooses a price among
\begin{equation}
    \label{eqn:argmax_prob}
    p_t^i  \in \argmax_{a\in\A} Q^i_t(\s_t,a). 
\end{equation}
Firm $i$ then observes both prices $\p_t$ and profits $(\pi^j(\p_t,s_t))_{j=1}^n$, and randomly draws $\s_{t+1}=(s_{t+1},\p_t)$ with probability $\P(s_{t+1}|\p_t,s_t)$, where $\P$ is another parametric choice of the algorithm. 
\end{algorithm}

Suppose that $\Q_{\!f} = (Q_{\!f}^{i})_{i=1}^n$ is a fixed point of the update rule in Algorithm~\ref{Algo_Q_argmax}, under a constant learning rate 
$\alpha_t=\alpha\in(0,1]$ for each $t\geq 0$. Assume that starting from time $t=1$, firms use $Q_{\!f}^{i}$ to play the stochastic game described in Section \ref{sec:Model} as follows: Given $\s\in\S\times\A^n$, each firm $i\in[n]$ chooses
\begin{equation}\label{def:induced_strategy_Qil}
w_{\!f}^{i}(\s)\in\argmax_{p\in\A}Q_{\!f}^{i}(\s,p). 
\end{equation}
We denote $\w_{\!f}(\s)=(w_{\!f}^{i}(\s))_{i=1}^n$.  The latter strategies are often referred to as the strategies induced by $\Q_{\!f}$. Moreover, $\w_{\!f}(\s)$ 
constitutes a one-memory strategy, since $\s$ encodes the previous period's price profile. The following proposition shows that if agents play the stochastic game following the strategies induced by $\Q_{\!f}$, then the \textit{conditional} value function of firm $i$  at time $t=1$ (see \eqref{conditional_value_function_t1}) coincides with $Q_{\!f}^{i}$ at the induced strategies. 
\begin{proposition}[$Q_{\!f}^{i}$ captures the value of the game at time $t=1$]\label{prop:QfixedisV1}
    Assume $\alpha_t = \alpha\in (0,1]$ for each $t\geq 0$ and  $(Q_{\!f}^{i})_{i=1}^n$ is a fixed point of Algorithm \ref{Algo_Q_argmax}. Then, for each $i\in[n]$ and $\s=(s_1,\p_0)\in\S\times\A^n$,
    \begin{equation}\label{eqn:Q*fixedisV1}
        Q_{\!f}^{i}(\s, w_{\!f}^{i}(\s)) = \Vonet(\s, w_{\!f}^{i}(\s)|\w_{\!f}^{-i}(\s)).
    \end{equation}
\end{proposition}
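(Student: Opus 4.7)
The strategy is to show that both $Q_{\!f}^{i}(\s, w_{\!f}^{i}(\s))$ and $\Vonet(\s, w_{\!f}^{i}|\w_{\!f}^{-i})$, viewed as functions of $\s = (s_1, \p_0) \in \S \times \A^n$, satisfy the same $rM$-dimensional Bellman system, and then invoke the uniqueness clause of Proposition~\ref{prop:BellmanEqV1} to force their equality.

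First I would extract a recursion from the fixed-point condition. Since $\alpha_t = \alpha \in (0,1]$ and $\Q_{\!f}$ is a fixed point of the update rule~\eqref{def:Q_learningUpdate}, at every state--action pair $(\s, w_{\!f}^{i}(\s))$ reached along the induced play, subtracting $(1-\alpha)Q_{\!f}^{i}(\s, w_{\!f}^{i}(\s))$ from both sides of~\eqref{def:Q_learningUpdate} and dividing by $\alpha$ gives, with $\s = (s_1,\p_0)$,
\begin{equation*}
Q_{\!f}^{i}\big(\s, w_{\!f}^{i}(\s)\big) = \pi^i\big(\w_{\!f}(\s), s_1\big) + \delta_i \sum_{s_2 \in \S} \P\big(s_2 \,|\, \w_{\!f}(\s), s_1\big) \max_{a \in \A} Q_{\!f}^{i}\big((s_2, \w_{\!f}(\s)), a\big).
\end{equation*}
Because $w_{\!f}^{i}(\cdot) \in \argmax_{a \in \A} Q_{\!f}^{i}(\cdot, a)$ by~\eqref{def:induced_strategy_Qil}, the inner maximum equals $Q_{\!f}^{i}\big((s_2, \w_{\!f}(\s)), w_{\!f}^{i}((s_2, \w_{\!f}(\s)))\big)$. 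Setting $\tilde{q}(\s) := Q_{\!f}^{i}(\s, w_{\!f}^{i}(\s))$, this yields the closed recursion
\begin{equation*}
\tilde{q}(\s) = \pi^i\big(\w_{\!f}(\s), s_1\big) + \delta_i \sum_{s_2 \in \S} \P\big(s_2 \,|\, \w_{\!f}(\s), s_1\big)\, \tilde{q}\big((s_2, \w_{\!f}(\s))\big),
\end{equation*}
which is a system of $rM$ linear equations in the $rM$ unknowns $\{\tilde{q}(\s)\}_{\s \in \S \times \A^n}$.

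Next I would specialize Proposition~\ref{prop:BellmanEqV1} to the pure one-memory profile in which each firm $j$ deterministically plays $w_{\!f}^{j}(\s)$ at state $\s$. Under this profile the sum over $\p_1$ in~\eqref{Bellman_equationV1} collapses to the single vector $\w_{\!f}(\s)$, producing exactly the same recursion for $\Vonet(\s, w_{\!f}^{i} | \w_{\!f}^{-i})$ as the one derived above for $\tilde{q}(\s)$. The uniqueness clause of Proposition~\ref{prop:BellmanEqV1} then forces $\tilde{q}(\s) = \Vonet(\s, w_{\!f}^{i} | \w_{\!f}^{-i})$ for every $\s$, which is the claim.

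The point requiring the most care is interpreting the fixed-point condition globally across $\S \times \A^n$: the algorithm only updates $Q_{\!f}^{i}$ at the state--action pair $(\s_t, p_t^i)$ actually reached at time $t$, so one must argue (or build into the definition of ``fixed point'') that for every $\s \in \S \times \A^n$ the pair $(\s, w_{\!f}^{i}(\s))$ satisfies the inert update equation, e.g.~by viewing $\s$ as a possible initialization of the induced play. A secondary subtlety is ties in the argmax defining $w_{\!f}^{i}$; these are harmless because $\max_{a} Q_{\!f}^{i}(\s', a) = Q_{\!f}^{i}(\s', w_{\!f}^{i}(\s'))$ regardless of which element of the argmax set is selected.
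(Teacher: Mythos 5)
Your proposal is correct and follows essentially the same route as the paper: cancel the $(1-\alpha)$ term using $\alpha>0$ to turn the fixed-point condition into the recursion $Q_{\!f}^{i}(\s,w_{\!f}^{i}(\s))=\pi^i(\w_{\!f}(\s),s_1)+\delta_i\sum_{s_2}\P(s_2|\w_{\!f}(\s),s_1)Q_{\!f}^{i}(\hat{\s},w_{\!f}^{i}(\hat{\s}))$, recognize it as the Bellman system of Proposition~\ref{prop:BellmanEqV1} under the pure profile $\w_{\!f}$, and conclude by uniqueness. Your closing remarks on the global interpretation of the fixed point and on argmax ties are sensible refinements that the paper's proof also implicitly assumes.
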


This result provides the first formal justification for interpreting fixed-point $Q$-values in multi-agent stochastic games as equilibrium payoffs under bounded-memory policies. 
Note, however, that this proposition is not enough to show that the induced strategies are a Nash equilibrium from time $t=1$. The following proposition shows a sufficient condition for the induced strategy to be a Nash equilibrium from time $t=1$.

\begin{proposition}[Sufficient condition for $\Q_{\!f}$ to induce a Nash equilibrium from time $t=1$]\label{prop:Sufficient_Q_inducesNash}
    Assume $\alpha_t = \alpha\in (0,1]$ for each $t\geq 0$, $\Q_{\!f}$ is a fixed point of Algorithm \ref{Algo_Q_argmax}, and for each $i\in[n]$ and $\s=(s_1,\p_0)\in\S\times\A^n$,
    \begin{equation}\label{eqn:sufficient_Q_inducesNash}\begin{split}
        &w_{\!f}^{i}(\s) \in \argmax_{p_1^i\in\A}\Vone(\w_{\!f},p_1^i,\Q_{\!f})_{i,\s},\\
    \end{split}
    \end{equation}
where $\w_{\!f} = \{w_{\!f}^{i}(\s)| i\in[n], \s\in\S\times\A^n\}$ and $\Vone$ is given by \eqref{V1_function}. Then, the strategy induced by $\Q_{\!f}$ is a Nash equilibrium from time $t=1$. 
\end{proposition}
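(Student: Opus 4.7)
The plan is to apply Theorem~\ref{Thm:ExistenceNashT1}: it is enough to produce a vector $\v^* \in \R^{nrM}$ such that $(\w_{\!f}, \v^*)$ satisfies both fixed-point identities \eqref{NashT1_1} and \eqref{NashT1_2}, since Theorem~\ref{Thm:ExistenceNashT1} then yields directly that $\w_{\!f}$ is a Nash equilibrium from time $t=1$. The natural candidate is
\[
v^*_{i,s_1,\p_0} := \max_{a\in\A} Q_{\!f}^{i}((s_1,\p_0),a) = Q_{\!f}^{i}\bigl((s_1,\p_0),\, w_{\!f}^{i}((s_1,\p_0))\bigr),
\]
because this is precisely the quantity that implicitly replaces the continuation value when $\Q_{\!f}$ is treated as the third argument of $\Vone$ in the hypothesis~\eqref{eqn:sufficient_Q_inducesNash}.

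To verify \eqref{NashT1_1}, I would use Proposition~\ref{prop:QfixedisV1} to rewrite $v^*_{i,s_1,\p_0} = \Vonet((s_1,\p_0), w_{\!f}^{i} \mid \w_{\!f}^{-i})$, and then apply identity~\eqref{eqn:equality_V1_V1conditional} with $\ssigma_1 = \w_{\!f}$ to obtain $\Vonet((s_1,\p_0), w_{\!f}^{i} \mid \w_{\!f}^{-i}) = \Vone(\w_{\!f}, \w_{\!f}, \tilde{\v})_{i,s_1,\p_0}$, where $\tilde{\v}_{i,s,\p} := \Vonet((s,\p), w_{\!f}^{i} \mid \w_{\!f}^{-i})$. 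Since $\tilde{\v} = \v^*$ by construction, this immediately delivers $\v^* = \Vone(\w_{\!f}, \w_{\!f}, \v^*)$.

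For \eqref{NashT1_2}, the starting observation is that formula~\eqref{V1_function} is linear in the probability vector $\tau_1^i(\cdot \mid \p_0, s_1)$, so the maximum of $\sigma_1^i \mapsto \Vone(\w_{\!f}, \sigma_1^i, \v^*)_{i,s_1,\p_0}$ over the simplex $\SSigma_1^i$ is attained at a deterministic policy and therefore equals $\max_{p_1^i \in \A} \Vone(\w_{\!f}, p_1^i, \v^*)_{i,s_1,\p_0}$. Under the convention that $v_{i,s_2,\p_1}$ in~\eqref{V1_function} is replaced by $\max_a Q_{\!f}^{i}((s_2,\p_1),a) = v^*_{i,s_2,\p_1}$, hypothesis~\eqref{eqn:sufficient_Q_inducesNash} states precisely that this pure-strategy maximum is attained at $p_1^i = w_{\!f}^{i}((s_1,\p_0))$. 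Plugging this maximizer into $\Vone$ yields $\Vone(\w_{\!f}, w_{\!f}^{i}, \v^*)_{i,s_1,\p_0} = \Vone(\w_{\!f}, \w_{\!f}, \v^*)_{i,s_1,\p_0} = v^*_{i,s_1,\p_0}$ by \eqref{NashT1_1}, which establishes \eqref{NashT1_2}; Theorem~\ref{Thm:ExistenceNashT1} then delivers the conclusion.

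The hard part will be making the notational abuse in~\eqref{eqn:sufficient_Q_inducesNash} precise, since $\Vone$ formally expects a vector in $\R^{nrM}$ as its third argument, whereas $\Q_{\!f}$ is a collection of functions on $\S\times\A^{n+1}$. Fixing the convention $v_{i,s_2,\p_1} \leftrightarrow \max_a Q_{\!f}^{i}((s_2,\p_1),a)$---which is the natural one given both~\eqref{V1_function} and the $Q$-learning update~\eqref{def:Q_learningUpdate}---is what makes the argmax hypothesis translate into exactly the one-step deviation form of \eqref{NashT1_2} demanded by Theorem~\ref{Thm:ExistenceNashT1}.
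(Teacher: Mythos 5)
Your proposal is correct and follows essentially the same route as the paper: the paper likewise verifies steps 1--2 of Algorithm~\ref{algo:HowtoCheckNash} (i.e., the hypotheses \eqref{NashT1_1} and \eqref{NashT1_2} of Theorem~\ref{Thm:ExistenceNashT1}) with $\v$ given coordinatewise by $Q_{\!f}^{i}(\s, w_{\!f}^{i}(\s))$, identified with $\Vonet$ via Propositions~\ref{prop:BellmanEqV1} and~\ref{prop:QfixedisV1}, and then uses linearity in $\sigma_1^i$ plus \eqref{eqn:sufficient_Q_inducesNash} exactly as you do. Your reading of the notational abuse in \eqref{eqn:sufficient_Q_inducesNash} (third argument $= \max_a Q_{\!f}^{i}(\cdot,a)$) matches the paper's intended convention.
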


Suppose that given a state $\s\in\S\times\A^n$, firms play a one-stage game with payoffs given by $(\Vone(\cdot,\cdot,\Q_{\!f})_{i,\s})_{i\in[n]}$. In this case, Proposition \ref{prop:Sufficient_Q_inducesNash} implies that if the induced strategy by $\Q_{\!f}$ is a Nash equilibrium of the latter one-stage game, then this strategy is a Nash equilibrium from time $t=1$ for the stochastic game of Section \ref{sec:NashT1}. This observation is interesting since Algorithm \ref{algo:HowtoCheckNash} requires checking two conditions in order to decide whether a given profile is a Nash equilibrium from time $t=1$. However, in the current case only one condition is needed because $\w_{\!f}(\s)$ is induced from a fixed-point of Algorithm \ref{Algo_Q_argmax}.

\subsection{\texorpdfstring{The Rise of Supracompetitive Prices and Collusion with $Q$-Learning}{}}\label{sect:convergenceQlearning}

We demonstrate how $Q$-learning with bounded experimentation can yield stable supracompetitive pricing behavior, which may or may not align with equilibrium incentives. 

In what follows, we use only Assumption~\ref{assumption_deltai} from Section~\ref{sec:Repeated_Bertrand}. Condition (i) in Assumption \ref{assumption_deltai} implies that states used in Algorithm \ref{Algo_Q_boundedExperimentation} have the following form:\footnote{We remark that this state choice has been a standard assumption in recent articles on algorithmic price discrimination (see, e.g. \cite{calvano2020artificial}, \cite{klein2021autonomous} and \cite{chica2024artificial}.)} 
$$\text{For } \ t\geq 1, \ \s_t = \p_{t-1}\in \A^n, \text{ where } \p_{t-1} \text{ is the price choice at time } t-1.$$ Condition (ii) in Assumption \ref{assumption_deltai} ensures the presence of both a Nash equilibrium price and a price that facilitates collusion. 

Next, we introduce $Q$-learning with bounded experimentation which combines softmax-based $Q$-learning with the version in Algorithm~\ref{Algo_Q_argmax}. The softmax-based variant of $Q$-learning replaces the deterministic choice of price as a maximum of the $Q$-function, stated in \eqref{eqn:argmax_prob}, with random drawing of the price according to the soft-max probability
\begin{equation}
    \label{eqn:softmax_prob}
    \sigma^i (p_t^i = a|\s_t) = \frac{e^{Q^i_t(\s_t,a)/\beta_t}}{\sum_{\tilde{a}\in \A }e^{Q^i_t(\s_t,\tilde{a})/\beta_t}},
\end{equation}
where $\beta_t >0$.\footnote{The rule in \eqref{eqn:argmax_prob} is recovered from \eqref{eqn:softmax_prob} by letting $\beta_t \to 0$. In this limit, $\mathbb{P}(p_t^i = \tilde{a} | \s) \to 1 / |\argmax_{a \in \A} Q_t^i(\s, a)|$ if $\tilde{a} \in \argmax_{a \in \A} Q_t^i(\s, a)$, and $\mathbb{P}(p_t^i = \tilde{a} | \s) \to 0$ otherwise.
}
This step introduces stochasticity and allows for ``experimentation'' with different prices. 

\begin{algorithm}[$Q$-learning with bounded experimentation]\label{Algo_Q_boundedExperimentation} Let $T>0$ be an input parameter characterizing the size of experimentation. From $t=0$ to $t=T-1$, firms follow Algorithm \ref{Algo_Q_argmax}, but instead of using  \eqref{eqn:argmax_prob}, firm $i$ chooses a price $p_t^i$ by random draw according to the soft-max probability $\sigma^i (p_t^i = a|\s_t)$ specified in \eqref{eqn:softmax_prob}. 
From $t=T$ onward, firms follow Algorithm \ref{Algo_Q_argmax}. 
\end{algorithm}

We now impose a technical assumption on the learning rate $\alpha_t$, which governs the update rule in Algorithm~\ref{Algo_Q_boundedExperimentation}: 
\begin{assumption}\label{assumption_learningrate} The learning rate $\alpha_t$ satisfies the following: $(i)$ $0< \alpha_t<1$ for each $t\geq 0$ and  $\sum_{t=T}^\infty \alpha_t = \infty$; $(ii)$ for the fixed discount rate for firm $i$, $\delta_i\in (0,1)$, the following limit exists and satisfies
$$\alpha(\delta_i) := \lim_{t\to \infty} \sum_{k=T+1}^t \prod_{l=k+1}^t (1-\alpha_l(1-\delta_i)) \alpha_k \in(0,\infty).$$
\end{assumption}

Condition $(i)$ in the above assumption is part of a standard assumption on the learning rates used by \cite{watkins1992q} to prove convergence of the $Q$-learning algorithm for single-agent models. Condition $(ii)$ ensures the convergence of the $Q$-learning algorithm in our setup.

The main result in this section is formulated as follows.
\begin{theorem}[$Q$-learning convergence to supracompetitive prices]\label{prop:QConvergence_T} 
Suppose that Assumptions \ref{assumption_deltai} and \ref{assumption_learningrate} hold, firms play with Algorithm \ref{Algo_Q_boundedExperimentation} in the stochastic setting of Section \ref{sec:Model}, and for each $i\in[n]$, $p\in\A\setminus \{p^C\}$ and $\s\in \{\p_{T-1},\p^C\}$: 
\begin{itemize}
    \item[(i)] $Q_{T}^i(\s,p^C)>Q_T^i(\s,p)$;
    \item[(ii)] $\pi^i(\p^C)\geq (1-\delta_i)Q_T^i(\p^C,p)$.
\end{itemize}
Then, for any initial price profile $\p_0 \in \A^n$ and for all $t \geq T$, each firm $i \in [n]$ chooses $p_t^i = p^C$. 
Moreover, 
\begin{align}
\label{eqn_Q_epsilon*}
&Q^{i*}(\s,p):=\lim_{t\to \infty} Q_t^i(\s,p)=\\
&\Scale[0.94]{\nonumber
\begin{cases}
 \alpha(\delta_i) \pi^i(\p^C) & \textnormal{ if } (\s,p)=(\p^C,p^C),\\
 (1-
 \alpha_T)Q_{T}^i(\p_{T-1},p^C)+\alpha_T\left[\pi^i(\p^C)+\delta_iQ_T^i(\p^C,p^C)\right] & \textnormal{ if } (\s,p)=(\p_{T-1},p^C) 
 \textnormal{ and } \p_{T-1}\neq\p^C,\\
 Q_{T}^i(\s,p) &  \textnormal{ otherwise. }
\end{cases}}
\end{align}
\end{theorem}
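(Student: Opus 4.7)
The plan is to argue by induction on $t \geq T$, maintaining two invariants: (a) greedy play under Algorithm~\ref{Algo_Q_boundedExperimentation} yields $\p_t = \p^C$, so that the state entering period $t+1$ is $\s_{t+1} = \p^C$; and (b) $Q_t^i(\p^C, p^C) > Q_t^i(\p^C, p)$ for every $i \in [n]$ and every $p \in \A \setminus \{p^C\}$. Once these invariants are established, the scalar linear recursion governing $Q_t^i(\p^C,p^C)$, solved explicitly and evaluated via Assumption~\ref{assumption_learningrate}, will produce the closed-form limits in \eqref{eqn_Q_epsilon*}.

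For the base step, from $t = T$ onward firms play greedily, so $p_T^i \in \argmax_{a\in\A} Q_T^i(\s_T, a)$ with $\s_T = \p_{T-1}$; hypothesis~(i) then forces $p_T^i = p^C$, hence $\p_T = \p^C$ and $\s_{T+1} = \p^C$. If $\p_{T-1} \neq \p^C$, the update rule \eqref{def:Q_learningUpdate} only modifies $Q_T^i(\p_{T-1}, p^C)$, and since the backup $\max_{a\in\A} Q_T^i(\p^C, a) = Q_T^i(\p^C, p^C)$ by~(i), this entry matches the middle branch of \eqref{eqn_Q_epsilon*}; moreover $Q_{T+1}^i(\p^C, \cdot) = Q_T^i(\p^C, \cdot)$, so hypothesis~(i) at $\s=\p^C$ delivers invariant~(b) at $t=T+1$. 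For the inductive step, assume the invariants at time $t$ with $\s_t = \p^C$. Invariant~(b) gives $p_t^i = p^C$, so the only updated entry at time $t$ is $(\p^C, p^C)$ and $\max_{a\in\A} Q_t^i(\p^C, a) = Q_t^i(\p^C, p^C)$; writing $y_t := Q_t^i(\p^C, p^C)$, the recursion in \eqref{def:Q_learningUpdate} collapses to $y_{t+1} = (1-\alpha_t(1-\delta_i)) y_t + \alpha_t \pi^i(\p^C)$. Since entries $Q_\tau^i(\p^C, p)$ with $p \neq p^C$ are never updated after time $T$, hypothesis~(ii) continues to read $\pi^i(\p^C) \geq (1-\delta_i) Q_t^i(\p^C, p)$ for all $t \geq T$; combining this with the strict bound $y_t > Q_t^i(\p^C, p)$ from~(b) and $1 - \alpha_t(1-\delta_i) \in (0,1)$ yields
\[
y_{t+1} > \bigl(1-\alpha_t(1-\delta_i)\bigr) Q_t^i(\p^C, p) + \alpha_t (1-\delta_i) Q_t^i(\p^C, p) = Q_t^i(\p^C, p) = Q_{t+1}^i(\p^C, p),
\]
which re-establishes~(b) at $t+1$ and thereby~(a).

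Once the invariants are in place, I close the argument by iterating the scalar recursion from the first index $T_0$ at which $(\p^C,p^C)$ is updated (namely $T_0 = T$ if $\p_{T-1} = \p^C$, and $T_0 = T+1$ otherwise with $y_{T_0} = Q_T^i(\p^C, p^C)$), which gives
\[
y_{t+1} = \Bigl(\prod_{l=T_0}^{t}\bigl(1-\alpha_l(1-\delta_i)\bigr)\Bigr) y_{T_0} + \pi^i(\p^C)\sum_{k=T_0}^{t}\alpha_k \prod_{l=k+1}^{t}\bigl(1-\alpha_l(1-\delta_i)\bigr).
\]
Assumption~\ref{assumption_learningrate}(i) forces the first term to vanish (since $\sum_l \alpha_l(1-\delta_i) = \infty$), while the second converges to $\alpha(\delta_i)\pi^i(\p^C)$ by Assumption~\ref{assumption_learningrate}(ii); after checking that the stray $k=T$ summand in the case $T_0=T$ vanishes in the limit, this yields the first branch of \eqref{eqn_Q_epsilon*}. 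The entry $Q_t^i(\p_{T-1}, p^C)$ with $\p_{T-1}\neq \p^C$ is updated exactly once (at $t = T$), producing the middle branch, and every remaining state-action pair is frozen at its $t = T$ value since those pairs are never visited after $T$. The main obstacle is the inductive preservation of invariant~(b): as $y_t$ evolves it can move \emph{either} upward or downward toward the pseudo-fixed-point $\pi^i(\p^C)/(1-\delta_i)$, so strict positivity of $y_t - Q_t^i(\p^C,p)$ is not automatic; hypothesis~(ii) is precisely the statement that this target lies weakly above every off-diagonal entry, and the one-line convex-combination bound above is the cleanest way to convert (i) and (ii) into strict preservation of the ordering.
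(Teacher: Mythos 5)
Your proposal is correct and follows essentially the same route as the paper's proof: a strong induction maintaining exactly the invariants of the paper's Lemma \ref{claim_thm_convergence_T} (greedy play at $\p^C$ plus the strict ordering $Q_t^i(\p^C,p^C)>Q_t^i(\p^C,p)$, preserved via the same convex-combination bound from hypothesis (ii)), followed by solving the scalar recursion $y_{t+1}=(1-\alpha_t(1-\delta_i))y_t+\alpha_t\pi^i(\p^C)$ and passing to the limit with Assumption \ref{assumption_learningrate}, with the off-path entries frozen at their time-$T$ (or single time-$T$-updated) values. The only cosmetic difference is that you state the invariant directly at the state $\p^C$ and absorb the explicit formulas of the paper's lemma into the closed-form solution of the recursion, including the correct observation that the stray $k=T$ summand vanishes because $\prod_{l=T+1}^{t}(1-\alpha_l(1-\delta_i))\to 0$.
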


The proof of Theorem~\ref{prop:QConvergence_T} is provided in Appendix~\ref{Appendix_RLalgo}, and an economic interpretation of its assumptions appears in Section~\ref{sect:discussionAssumptions}. The core idea is as follows. First, Algorithm~\ref{Algo_Q_boundedExperimentation}, together with condition~(i) of the theorem, ensures that the $Q$-learning algorithm selects $p_T^i = p^C$ for each $i \in [n]$ and for all initial price profiles $\p_0 \in \A^n$. Then, condition~(ii) guarantees that firms continue to choose $p_{T+1}^i = p^C$ at time $T+1$. Finally, Assumption~\ref{assumption_learningrate} ensures convergence of the $Q$-values, as formalized in equation~\eqref{eqn_Q_epsilon*}.

To discuss the relevance of Theorem~\ref{prop:QConvergence_T}, we recall the two key questions guiding our study:  
(i) What are sufficient conditions for firms to learn that choosing supracompetitive prices is optimal in the long run?  
(ii) Are these supracompetitive prices the result of punishment-and-reward strategies? 

Theorem~\ref{prop:QConvergence_T} directly addresses the first question and offers insight into the second. It identifies sufficient conditions under which $Q$-learning firms consistently choose the collusive-enabling price $p^C$ at every stage of the stochastic game---demonstrating that they learn to adopt supracompetitive pricing in the long run. This result provides a theoretical explanation for recent numerical findings (e.g., \cite{calvano2020artificial}, \cite{chica2024artificial}), which show that reinforcement learning algorithms frequently converge to such pricing behavior.

In addition, Theorem~\ref{prop:QConvergence_T} characterizes the limiting $Q$-function $(Q^{i*})_{i=1}^n$. This characterization, combined with Propositions~\ref{prop:naiveCollusionQ}, \ref{prop:grimtriggerQ}, and \ref{prop:increasing_Q}, addresses question~(ii) by identifying the strategy structures that sustain supracompetitive outcomes.

The rest of the section completes the answer to question (ii) described above. We first formulate the following proposition studying ``naive collusion'', that is, collusion without any punishment and reward behavior.  
It uses the notation $\w^* = (w^{i*})_{i=1}^n$ for the strategy induced by $(Q^{i*})_{i=1}^n$ defined in \eqref{eqn_Q_epsilon*} (see \eqref{def:induced_strategy_Qil} for the definition of induced strategies). 
 
\begin{proposition}[Naive Collusion]\label{prop:naiveCollusionQ} 
Suppose that Assumptions \ref{assumption_deltai} and \ref{assumption_learningrate} hold, and $\alpha(\delta_i)$ satisfies $\alpha(\delta_i)(1-\delta_i)>1$ for each $i\in[n]$. Furthermore, firms play with the induced strategies $\w^*$ in the stochastic setting of Section \ref{sec:Model}, and for each $i\in[n]$ and $p\in\A\setminus \{p^C\}$
\begin{itemize}
    \item[(i)] $Q_{T}^i(\s,p^C)>Q_T^i(\s,p)$ for each $\s\in \A^n$;
    \item[(ii)] $\pi^i(\p^C)\geq Q_T^i(\p_{T-1},p)-\delta_iQ_T^i(\p^C,p)$ for each $\s\in \{\p_{T-1},\p^C\}$.
\end{itemize}
Then, for each $\s\in \A^n$, $$\w^*(\s) = \p^C.$$  Moreover, $\w^*$ is a Nash equilibrium from time $t = 1$ if and only if $\p^C$ is a Nash equilibrium of the one-stage game $(\pi^i(\cdot))_{i=1}^n$.
\end{proposition}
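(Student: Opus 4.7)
My plan is to verify directly, for each firm $i$ and each state, that $p^C$ maximizes $Q^{i*}(\s,\cdot)$ using the piecewise description~\eqref{eqn_Q_epsilon*}. Three subcases arise. When $\s=\p^C$, I have $Q^{i*}(\p^C,p^C)=\alpha(\delta_i)\pi^i(\p^C)$ and $Q^{i*}(\p^C,p)=Q_T^i(\p^C,p)$ for $p\neq p^C$; the $\s=\p^C$ instance of hypothesis~(ii) rearranges to $Q_T^i(\p^C,p)\leq \pi^i(\p^C)/(1-\delta_i)$, and combined with $\alpha(\delta_i)(1-\delta_i)>1$ and the nonnegativity $\pi^i\geq 0$ this yields $\alpha(\delta_i)\pi^i(\p^C)\geq Q_T^i(\p^C,p)$. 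When $\s=\p_{T-1}\neq\p^C$, I combine hypothesis~(i) at $\s=\p_{T-1}$ with the $\s=\p_{T-1}$ instance of hypothesis~(ii) and hypothesis~(i) at $\s=\p^C$ to show that
\[
(1-\alpha_T)Q_T^i(\p_{T-1},p^C)+\alpha_T\bigl[\pi^i(\p^C)+\delta_i Q_T^i(\p^C,p^C)\bigr] \;\geq\; Q_T^i(\p_{T-1},p).
\]
In the remaining case $\s\notin\{\p^C,\p_{T-1}\}$, $Q^{i*}(\s,\cdot)\equiv Q_T^i(\s,\cdot)$ and the conclusion is immediate from hypothesis~(i).

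\textbf{Part 2: Nash equilibrium characterization.} Once Part~1 is established, the opposing profile $\w^{-i*}$ plays $(\p^C)^{-i}$ at every period regardless of firm $i$'s past actions, because each $w^{j*}$ with $j\neq i$ is a one-memory policy that outputs $p^C$ on every possible state. Consequently, for any one-memory deviation $\sigma_1^i\in\SSigma_1^i$, the stage profit at time $t\geq 1$ depends only on firm $i$'s own action and equals $\pi^i(p_t^i,(\p^C)^{-i})$, yielding the uniform bound
\[
\Vonet(\p_0,\sigma_1^i\,|\,\w^{-i*}) \;\leq\; \frac{1}{1-\delta_i}\,\max_{p\in\A}\pi^i(p,(\p^C)^{-i}),
\]
with the constant policy $w^{i*}\equiv p^C$ achieving $\pi^i(\p^C)/(1-\delta_i)$.

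\textbf{Conclusion and main obstacle.} Gathering these inequalities over all firms, $w^{i*}$ is a best response to $\w^{-i*}$ if and only if $\pi^i(\p^C)\geq \pi^i(p,(\p^C)^{-i})$ for every $p\in\A$, and imposing this for every $i\in[n]$ is precisely the definition of $\p^C$ being a Nash equilibrium of $\{(\A,\pi^i)\}_{i=1}^n$. The direction~$(\Rightarrow)$ is the contrapositive: a one-stage profitable deviation for some firm $i$ produces a constant one-memory policy whose value strictly exceeds $\pi^i(\p^C)/(1-\delta_i)$, contradicting the Nash property from time $t=1$. I expect the main technical work to lie in Part~1, specifically the subcase $\s=\p_{T-1}\neq\p^C$, where hypotheses~(i) and~(ii) must be combined carefully to dominate $Q_T^i(\p_{T-1},p)$ inside the convex combination defining $Q^{i*}(\p_{T-1},p^C)$.
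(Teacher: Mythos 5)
Your Part 1 is essentially the paper's argument: the same three-case split ($\s=\p^C$, $\s=\p_{T-1}\neq\p^C$, and the remaining states) and the same combinations of hypotheses (i)–(ii) with the characterization \eqref{eqn_Q_epsilon*} and $\alpha(\delta_i)(1-\delta_i)>1$. One small imprecision there: you only assert weak inequalities ($\alpha(\delta_i)\pi^i(\p^C)\geq Q_T^i(\p^C,p)$ and the analogous bound at $\p_{T-1}$), but to conclude $\w^*(\s)=\p^C$ you need $p^C$ to be the \emph{strict} maximizer, since the induced strategy only selects some element of the argmax. Strictness does follow from your own ingredients---in Case 1 because $\pi^i(\p^C)>\pi^i(\p^*)\geq 0$ by Assumption \ref{assumption_deltai}(ii), and in Case 2 because hypothesis (i) is strict and $0<\alpha_T<1$, $\delta_i\in(0,1)$---so this is a one-line fix, but it should be said.

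Your Part 2 is correct and takes a genuinely different route. The paper proves the ``if'' direction by running Steps 1--2 of Algorithm \ref{algo:HowtoCheckNash}: it solves \eqref{NashT1_1} for $\v$ (getting $v_{i,\p_0}=\pi^i(\p^C)/(1-\delta_i)$) and verifies the fixed-point condition \eqref{NashT1_2} using the one-stage Nash property; for ``only if'' it constructs a specific one-memory deviation that plays $\hat p$ only at the state $\p^*$ and reverts to $p^C$ elsewhere, then evaluates $\Vonet$ at $\p_0=\p^*$. You instead exploit directly that, by Part 1, every opponent's induced policy is constant at $p^C$ on all states, so firm $i$'s continuation problem decouples period by period: any deviation is bounded by $\max_p\pi^i(p,(\p^C)^{-i})/(1-\delta_i)$, the constant-$p^C$ policy attains $\pi^i(\p^C)/(1-\delta_i)$, and a profitable one-stage deviation lifts to a constant one-memory policy with strictly higher value. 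Your argument is more elementary and self-contained (no appeal to the $\Vone$ fixed-point machinery), and it handles all initial pairs $\p_0$ and all mixed deviations uniformly; the paper's route has the advantage of illustrating the general verification procedure (Algorithm \ref{algo:HowtoCheckNash}) that it reuses elsewhere, and its ``only if'' deviation shows the conclusion can be extracted from a single state. Both are valid proofs of the equivalence.
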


Proposition \ref{prop:naiveCollusionQ} shows sufficient conditions under which the strategies induced by $(Q^{i*})_{i=1}^n$ never display punishment and reward behavior. Indeed, there is no mechanism to punish a firm that deviates from $p^C$. Instead, firms naively play by always choosing the collusive-enabling price. Therefore, this proposition implies that supracompetitive prices are not always the result of punishment and reward behavior. The final statement of Proposition \ref{prop:naiveCollusionQ} implies that unless $\p^C$ is a Nash equilibrium of the one-stage game $(\pi^i(\cdot))_{i=1}^n$, $\w^*$ cannot be a Nash equilibrium from time $t=1$. However, in general, $\p^C$ is not a Nash equilibrium in most models of interest, such as traditional Bertrand competition or platform competition in two sided markets (see, e.g., \cite{tirole1988theory}, \cite{dewenter2011semi} and \cite{chica2025competition}). Finally, we note that Assumptions~(i) and~(ii) in Proposition~\ref{prop:naiveCollusionQ} imply conditions~(i) and~(ii) in Theorem~\ref{prop:QConvergence_T}. This implication is intuitive: sustaining supracompetitive prices by naively choosing $\p^C$ in all states imposes a stricter requirement than merely achieving such prices in the long run.

The following proposition shows sufficient conditions under which the strategies induced by $(Q^{i*})_{i=1}^n$ display punishment and reward behavior in a grim trigger fashion. 

\begin{proposition}[Grim Trigger Collusion]\label{prop:grimtriggerQ} 
Suppose that Assumptions \ref{assumption_deltai} and \ref{assumption_learningrate} hold, and $\alpha(\delta_i)$ satisfies $\alpha(\delta_i)(1-\delta_i)>1$ for each $i\in[n]$. Furthermore, firms play with the induced strategies $\w^*$ in the stochastic setting of Section \ref{sec:Model}, and for each $i\in[n]$
\begin{itemize}
    \item[(i)] $Q_{T}^i(\s,p^*)>Q_T^i(\s,p)$ and $Q_{T}^i(\p_{T-1},p^*)>Q^{i*}(\p_{T-1},p)$ for $\s\in \A^n\setminus\{\p^C,\p_{T-1}\}$ and $p\in\A\setminus\{p^*\}$;
    \item[(ii)] $\pi^i(\p^C)\geq (1-\delta_i)Q_T^i(\p^C,p)$ for   $p\in\A\setminus\{p^C\}$.
\end{itemize}
Then, 
\begin{equation}
    \label{eqn:argmaxQ_grim}
    \w^*(\s) = \begin{cases}
        \p^C & \s = \p^C,\\
        \p^* & \s\neq \p^C.
    \end{cases}
\end{equation}
Moreover, under Assumption \ref{assumption_deltai_simple}, $\w^*$ is a Nash equilibrium from time $t = 1$.
\end{proposition}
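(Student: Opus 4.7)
The plan is to prove the two claims separately: first, establish the grim trigger structure \eqref{eqn:argmaxQ_grim} for $\w^*$; second, argue that this structure makes $\w^*$ a Nash equilibrium from time $t=1$ by invoking Proposition~\ref{prop:grimTriger}.

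For the first claim, I would apply Theorem~\ref{prop:QConvergence_T} to obtain the closed-form characterization of $Q^{i*}$ in \eqref{eqn_Q_epsilon*}, then compute $\argmax_{p\in\A} Q^{i*}(\s,p)$ by a case analysis on $\s$. At $\s=\p^C$, \eqref{eqn_Q_epsilon*} gives $Q^{i*}(\p^C,p^C) = \alpha(\delta_i)\pi^i(\p^C)$, while $Q^{i*}(\p^C,p) = Q_T^i(\p^C,p)$ for $p\neq p^C$; combining $\alpha(\delta_i)(1-\delta_i) > 1$ with condition~(ii) yields $\alpha(\delta_i)\pi^i(\p^C) > \pi^i(\p^C)/(1-\delta_i) \geq Q_T^i(\p^C,p)$ (using $\pi^i(\p^C) > \pi^i(\p^*) \geq 0$ from Assumption~\ref{assumption_deltai}), so $w^{i*}(\p^C) = p^C$. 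At $\s = \p_{T-1}$, noting that $p^*\neq p^C$ (a consequence of $\pi^i(\p^*)<\pi^i(\p^C)$), the formula gives $Q^{i*}(\p_{T-1},p^*) = Q_T^i(\p_{T-1},p^*)$; the second half of condition~(i) then directly yields $Q^{i*}(\p_{T-1},p^*) > Q^{i*}(\p_{T-1},p)$ for each $p\neq p^*$, hence $w^{i*}(\p_{T-1}) = p^*$. At any $\s \notin \{\p^C,\p_{T-1}\}$, \eqref{eqn_Q_epsilon*} gives $Q^{i*}(\s,p) = Q_T^i(\s,p)$ for every $p$, and the first half of condition~(i) yields $w^{i*}(\s) = p^*$. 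Collecting across $i\in[n]$ yields \eqref{eqn:argmaxQ_grim}.

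For the second claim, observe that the policy $\w^*$ characterized by \eqref{eqn:argmaxQ_grim} coincides exactly with $\sigma_1^f$, the time $t\geq 1$ component of the grim trigger strategy $\ssigma^f$ defined in Section~\ref{sec:Repeated_Bertrand}. Under Assumption~\ref{assumption_deltai_simple}, Proposition~\ref{prop:grimTriger} establishes that $\ssigma^f = (\sigma_0^f,\sigma_1^f)$ is an SPE. The definition of SPE in Section~\ref{sec:nash_definition} requires $\sigma_1^f$ to be a Nash equilibrium from time $t=1$, and this property therefore transfers directly to $\w^*$.

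The main obstacle is the case analysis in the first part, where one must carefully track which branch of the piecewise formula \eqref{eqn_Q_epsilon*} is active for each $(\s,p)$ and verify that the appropriate portion of condition~(i) applies. A subtle related point is the distinction between the \emph{actual} $Q$-learning dynamics---under which, by Theorem~\ref{prop:QConvergence_T}, only the states $\p_{T-1}$ and $\p^C$ are ever realized and all firms play $p^C$ at every stage $t\geq T$---and the \emph{induced} strategy $\w^*$, which prescribes $p^*$ at unreached states $\s\notin\{\p^C,\p_{T-1}\}$; this off-path prescription is precisely what supplies the credible punishment making $\w^*$ a grim trigger rather than naive collusion, and which ultimately distinguishes the conclusion of this proposition from that of Proposition~\ref{prop:naiveCollusionQ}.
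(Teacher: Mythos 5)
Your argument is correct and follows essentially the same route as the paper's proof: a three-way case analysis on $\s\in\{\p^C\}$, $\s=\p_{T-1}$, and $\s\notin\{\p^C,\p_{T-1}\}$ using the piecewise formula \eqref{eqn_Q_epsilon*} (with the same use of $\alpha(\delta_i)(1-\delta_i)>1$ and condition (ii) at $\s=\p^C$, and the two halves of condition (i) at the other states), followed by the observation that \eqref{eqn:argmaxQ_grim} coincides with $\sigma_1^f$ so that Proposition~\ref{prop:grimTriger} delivers the Nash-equilibrium-from-$t=1$ claim under Assumption~\ref{assumption_deltai_simple}.

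One caveat on justification rather than computation: you say you would "apply Theorem~\ref{prop:QConvergence_T}" to obtain \eqref{eqn_Q_epsilon*}, but the hypotheses of this proposition do not include condition (i) of that theorem (nothing here forces $p^C$ to maximize $Q_T^i(\p_{T-1},\cdot)$ or $Q_T^i(\p^C,\cdot)$; indeed the paper only remarks that the two sets of assumptions are "not in conflict"). So Theorem~\ref{prop:QConvergence_T} cannot formally be invoked under the stated assumptions; the paper instead treats $Q^{i*}$ as \emph{defined} by the formula \eqref{eqn_Q_epsilon*} (see the sentence introducing $\w^*$ before Proposition~\ref{prop:naiveCollusionQ}), and its proof simply reads values off that formula. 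Since your case analysis likewise uses only the formula, nothing else in your argument changes, but the appeal to Theorem~\ref{prop:QConvergence_T} (also in your closing remark about the realized path) should be dropped or replaced by the definitional reading of $Q^{i*}$, or else the theorem's conditions (i)--(ii) added to the hypotheses.
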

Proposition~\ref{prop:grimtriggerQ} provides sufficient conditions under which the strategies induced by $(Q^{i*})_{i=1}^n$ coincide with the grim trigger strategies beginning at time~$t = 1$ (see Section~\ref{sec:Repeated_Bertrand}). By definition, these strategies implement punishment-and-reward behavior: firms continue to collude (i.e., choose $p^C$) as long as all firms selected $p^C$ in the previous stage; otherwise, they permanently revert to the competitive price $p^*$. Under Assumption~\ref{assumption_deltai}, we have $\pi^i(\p^C) > \pi^i(\p^*)$, so firms are strictly better off by sustaining collusion indefinitely. 

Finally, we note that Assumptions~(i) and~(ii) in Proposition~\ref{prop:grimtriggerQ} are not in conflict with the assumptions of Theorem~\ref{prop:QConvergence_T}, which only require conditions on the two states $\s\in\{\p_{T-1}, \p^C\}$. Therefore, taken together, Theorem~\ref{prop:QConvergence_T} and Proposition~\ref{prop:grimtriggerQ} imply that $Q$-learning firms may indeed learn to implement grim trigger strategies.

Punishment-and-reward schemes need not be limited to grim trigger strategies. In fact, recent numerical studies \citep{calvano2020artificial, klein2021autonomous, chica2024artificial} show that algorithms can learn more sophisticated forms of collusive behavior. For example, firms may learn to gradually raise prices over time until reaching the collusive-enabling price $\p^C$, while using the competitive price $p^*$ as a threat in response to unilateral deviations. 
Proposition~\ref{prop:increasing_Q} provides sufficient conditions under which the strategies induced by $(Q^{i*})_{i=1}^n$ replicate this type of increasing-price behavior. It is based on the following assumption.
\begin{assumption}\label{assumption_increasingStrategies} There is a sequence of prices $\{p^l\}_{l=0}^{k+1}\subseteq \A$, where $p^l<p^{l+1}$ for each $l\in[k]$ and $(p_0,p^{k+1}) = (p^*,p^C)$, and denote $\p^l = (p^l)_{i=1}^n$. Furthermore, $\p_{T-1}\notin \{p^l\}_{l=0}^{k+1}$ and for each $i\in[n]$
\begin{itemize}
    \item[(i)] $Q_T^i(\p^l,p^{l+1})>Q_T^i(\p^l,p)$ for each $l\in [k]$, $p\in\A\setminus\{p^{l+1}\}$;
    \item[(ii)]$Q_T^i(\s,p^*)>\max\{Q_T^i(\s,p),Q^{i*}(\p_{T-1},p^C)\}$ for each $p\in\A\setminus\{p^*\}$ and $\s\in\A\setminus\{p^l\}_{l=0}^{k+1}$ with $(\s,p)\neq (\p_{T-1},p^C)$.
\end{itemize}
\end{assumption}

\begin{proposition}[Increasing Strategies]\label{prop:increasing_Q} 
Suppose that Assumptions \ref{assumption_deltai}, \ref{assumption_learningrate} and \ref{assumption_increasingStrategies} hold, and $\alpha(\delta_i)$ satisfies $\alpha(\delta_i)(1-\delta_i)>1$ for each $i\in[n]$. Furthermore, firms play with the induced strategies $\w^*$ in the stochastic setting of Section \ref{sec:Model}, and 
    $$\pi^i(\p^C)\geq (1-\delta_i)Q_T^i(\p^C,p) \ \text{ for each } i\in[n] \text{ and }
    p\in\A\setminus\{p^C\}.$$
Then, for each $l\in[k]$
\begin{equation}
    \label{eqn:argmaxQ_increasing}
    \w^*(\s) = \begin{cases}
        \p^C & \s = \p^C,\\
        \p^{l+1} & \s = \p^l,\\
        \p^* & \s\notin \{\p^l\}_{l=0}^{k+1}.
    \end{cases}
\end{equation}
\end{proposition}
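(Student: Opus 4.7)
The plan is to determine $\w^*(\s)=\argmax_{p\in\A}Q^{i*}(\s,p)$ by leveraging the closed-form characterization of $Q^{i*}$ supplied by equation \eqref{eqn_Q_epsilon*} of Theorem \ref{prop:QConvergence_T}, and then carrying out a case analysis on $\s\in\A^n$. A preliminary step is to verify that the hypotheses of Theorem \ref{prop:QConvergence_T} are in force so that \eqref{eqn_Q_epsilon*} is available: its condition (ii) is precisely the inequality $\pi^i(\p^C)\geq (1-\delta_i)Q_T^i(\p^C,p)$ assumed here, while its condition (i) at $\s\in\{\p_{T-1},\p^C\}$ can be teased out of Assumption \ref{assumption_increasingStrategies}, combined with the hypotheses on $\alpha(\delta_i)$ and $\pi^i(\p^C)$ (which, as in the last case below, are what force $Q^{i*}(\p_{T-1},p^C)$ to exceed $Q_T^i(\p_{T-1},p)$ for every $p\neq p^C$).

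For $\s=\p^C$, equation \eqref{eqn_Q_epsilon*} gives $Q^{i*}(\p^C,p^C)=\alpha(\delta_i)\pi^i(\p^C)$ while, for $p\neq p^C$, the ``otherwise'' branch yields $Q^{i*}(\p^C,p)=Q_T^i(\p^C,p)$. Chaining $\alpha(\delta_i)(1-\delta_i)>1$, $\pi^i(\p^C)\geq(1-\delta_i)Q_T^i(\p^C,p)$, and the strict positivity $\pi^i(\p^C)>\pi^i(\p^*)\geq 0$ then yields $Q^{i*}(\p^C,p^C)>Q^{i*}(\p^C,p)$, so $\w^*(\p^C)=\p^C$. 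For $\s=\p^l$ with $l\in\{0,1,\ldots,k\}$, the hypothesis $\p_{T-1}\notin\{\p^l\}_{l=0}^{k+1}$ forces both $\s\neq\p_{T-1}$ and $\s\neq\p^C$, so the ``otherwise'' branch of \eqref{eqn_Q_epsilon*} applies uniformly in $p$, giving $Q^{i*}(\p^l,p)=Q_T^i(\p^l,p)$; Assumption \ref{assumption_increasingStrategies}(i) then immediately yields $\w^*(\p^l)=\p^{l+1}$.

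The remaining case is $\s\notin\{\p^l\}_{l=0}^{k+1}$. If additionally $\s\neq\p_{T-1}$, the third branch of \eqref{eqn_Q_epsilon*} gives $Q^{i*}(\s,p)=Q_T^i(\s,p)$ for every $p$, and Assumption \ref{assumption_increasingStrategies}(ii) directly delivers $\w^*(\s)=\p^*$. The subtle subcase is $\s=\p_{T-1}$: here $Q^{i*}(\p_{T-1},p)=Q_T^i(\p_{T-1},p)$ for $p\neq p^C$, but $Q^{i*}(\p_{T-1},p^C)$ equals the nontrivial convex combination $(1-\alpha_T)Q_T^i(\p_{T-1},p^C)+\alpha_T[\pi^i(\p^C)+\delta_i Q_T^i(\p^C,p^C)]$ from the second branch of \eqref{eqn_Q_epsilon*}. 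This is exactly why Assumption \ref{assumption_increasingStrategies}(ii) inserts the limit value $Q^{i*}(\p_{T-1},p^C)$ inside the $\max$: specializing that assumption to $\s=\p_{T-1}$ and any $p\in\A\setminus\{p^*,p^C\}$ (which exists whenever $k\geq 1$) yields simultaneously $Q_T^i(\p_{T-1},p^*)>Q_T^i(\p_{T-1},p)$ for all such $p$ and $Q_T^i(\p_{T-1},p^*)>Q^{i*}(\p_{T-1},p^C)$, giving $\w^*(\p_{T-1})=\p^*$.

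The main obstacle lies in the subcase $\s=\p_{T-1}$: because $Q^{i*}(\p_{T-1},p^C)$ blends pre- and post-experimentation $Q$-values, it cannot be controlled by a single inequality on $Q_T^i$ alone, which is precisely why Assumption \ref{assumption_increasingStrategies}(ii) has the otherwise unusual feature of placing the limit quantity $Q^{i*}(\p_{T-1},p^C)$ itself inside the hypothesis. Once this delicate comparison is bookkept, the remaining branches of \eqref{eqn_Q_epsilon*} are handled by routine inspections that run parallel to the arguments for Propositions \ref{prop:naiveCollusionQ} and \ref{prop:grimtriggerQ}.
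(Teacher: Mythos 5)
Your case split (the state $\p^C$, the on-path states $\p^l$, and the off-path states with the delicate subcase $\s=\p_{T-1}$) is exactly the paper's proof, and those steps are correct. The genuine problem is your preliminary step. You claim that condition~(i) of Theorem~\ref{prop:QConvergence_T} at $\s\in\{\p_{T-1},\p^C\}$ ``can be teased out'' of Assumption~\ref{assumption_increasingStrategies} together with the hypotheses on $\alpha(\delta_i)$ and $\pi^i(\p^C)$. It cannot. Condition~(i) concerns the time-$T$ values, namely $Q_T^i(\s,p^C)>Q_T^i(\s,p)$ for $\s\in\{\p_{T-1},\p^C\}$ and all $p\neq p^C$, whereas the hypotheses of Proposition~\ref{prop:increasing_Q} only control limiting values: they give $\alpha(\delta_i)\pi^i(\p^C)>Q_T^i(\p^C,p)$, which says nothing about $Q_T^i(\p^C,p^C)$ itself, and Assumption~\ref{assumption_increasingStrategies} never compares $Q_T^i(\p_{T-1},p^C)$ with $Q_T^i(\p_{T-1},p^*)$. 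Worse, your parenthetical claim that the hypotheses ``force $Q^{i*}(\p_{T-1},p^C)$ to exceed $Q_T^i(\p_{T-1},p)$ for every $p\neq p^C$'' is contradicted at $p=p^*$: Assumption~\ref{assumption_increasingStrategies}(ii) asserts the reverse strict inequality $Q_T^i(\p_{T-1},p^*)>Q^{i*}(\p_{T-1},p^C)$. Indeed, the whole point of the proposition is that induced play at $\p_{T-1}$ is $p^*$ rather than $p^C$, which sits in tension with condition~(i) of Theorem~\ref{prop:QConvergence_T}, whose role there is precisely to drive play at $\p_{T-1}$ to $p^C$.

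The repair is that no such verification is attempted (or needed) in the paper: $\w^*$ is, by the convention stated just before Proposition~\ref{prop:naiveCollusionQ}, the strategy induced by the functions $(Q^{i*})_{i=1}^n$ \emph{defined} by the right-hand side of \eqref{eqn_Q_epsilon*}, so the formula is available by definition rather than as a consequence of the convergence hypotheses of Theorem~\ref{prop:QConvergence_T}. With that reading, your three cases go through essentially verbatim as in the paper; your handling of $\s=\p_{T-1}$, including the remark that a price $p\notin\{p^*,p^C\}$ must exist to invoke Assumption~\ref{assumption_increasingStrategies}(ii), is if anything more careful than the paper's. One further small caveat: your second case includes $l=0$, while Assumption~\ref{assumption_increasingStrategies}(i) and the proposition's conclusion are stated only for $l\in[k]$, so the claim $\w^*(\p^0)=\p^1$ is not literally covered by the stated hypotheses (an indexing looseness inherited from the paper, but one you should not paper over by citing Assumption~\ref{assumption_increasingStrategies}(i) for $l=0$).
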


Proposition \ref{prop:increasing_Q} shows sufficient conditions under which the strategies induced by $(Q^{i*})_{i=1}^n$ display an increasing behavior towards the collusive-enabling price $p^C$. Suppose that firms start at the Nash equilibrium price $\p^*$, following \eqref{eqn:argmaxQ_increasing}, firms will choose $\p^1$ in the next stage, and progressively increase their prices until reaching $\p^{k+1} = \p^C$. After any unilateral deviation, firms go back to the Nash equilibrium price and the increasing pattern follows again.

\subsection{Discussion on the Assumptions of Theorem~\ref{prop:QConvergence_T}}\label{sect:discussionAssumptions}

We now provide economic interpretations of the assumptions underlying our main convergence theorem. Specifically, we explain Assumption~\ref{assumption_deltai}, as well as conditions~(i) and~(ii) in Theorem~\ref{prop:QConvergence_T}. We also present an example of a sequence that satisfies Assumption~\ref{assumption_learningrate}, and discuss the practical relevance of Algorithm~\ref{Algo_Q_boundedExperimentation} for real-world applications.

\vspace{5pt}
\noindent \textbf{Assumption \ref{assumption_deltai}}: As previously discussed in Section \ref{sec:Repeated_Bertrand}, Condition (i) in Assumption \ref{assumption_deltai} turns our stochastic game into an infinite repeated game, where the same one-stage game is played at every stage, although firms are allowed to use one-memory strategies that condition on past price choices. Condition (ii) aligns our stochastic game from Section \ref{sec:Model} with a key feature of the dynamic Bertrand competition model: the existence of both a Nash equilibrium price and a collusive-enabling price. This assumption is also satisfied by other models, such as those of platform competition in two-sided markets \citep{chica2025competition}.  

\vspace{5pt}
\noindent \textbf{Assumptions (i) and (ii) in Theorem \ref{prop:QConvergence_T}}: Assumption (i) in Theorem \ref{prop:QConvergence_T} means that for the two states $\p_{T-1}$ and $\p^C$, the $Q$-function weighs more the collusive-enabling price than any other price. Assumption (ii) in Theorem \ref{prop:QConvergence_T} upper bounds the $Q$-function at time $T$ for the state $\p^C$ and any price different than $p^C$ by $(1-\delta_i)^{-1}\pi^i(\p^C)$, which is the value of the stochastic game when all firms play with the grim trigger strategy (see \eqref{eqn:V0_bertrand}).

\vspace{5pt}
\noindent \textbf{Assumption \ref{assumption_learningrate}}: This assumption is somewhat harder to interpret: part (i) is standard in the $Q$-learning literature, while part (ii) is used in the proof of Theorem \ref{prop:QConvergence_T} to ensure convergence of the $Q$-learning algorithm with bounded memory. The following sequence satisfies Assumption \ref{assumption_learningrate} (see Appendix \ref{example:learningrate}): Let $\alpha_1\in[0,1)$ be any real number and for each $k\geq 2$, $$\alpha_k = \frac{\delta_i\alpha_{k-1}}{1+\delta_i(1-\delta_i)\alpha_{k-1}}.$$ Then, the sequence 
$\{\alpha_k\}_{k=1}^{\infty}$ satisfies Assumption \ref{assumption_learningrate}. Moreover, 
\begin{equation}\label{example:alpha_deltai}
    \alpha(\delta_i) = \frac{1}{1-\delta_i}.
\end{equation}
When \eqref{example:alpha_deltai} is combined with \eqref{eqn_Q_epsilon*}, we obtain that $Q^{i*}(\p^C,p^C) = (1-\delta_i)^{-1}\pi^i(\p^C)$, which coincides with the value of the stochastic game when all firms play with the grim trigger strategy (see \eqref{eqn:V0_bertrand}).

\vspace{5pt}
\noindent \textbf{Algorithm \ref{Algo_Q_boundedExperimentation}}: In $Q$-learning with bounded experimentation, firms use the $Q$-learning algorithm with softmax exploration up to time $T$, which is one of the most common versions of the algorithm. After time $T$, firms stop exploring via softmax and begin following the argmax rule defined by the $Q$-function, with no further experimentation. In practice, this is the version typically used, since it is not feasible to run the softmax-based algorithm indefinitely.

%-----------------------------------------------------------------------------
%-----------------------------------------------------------------------------
\section{Conclusion}
\label{sec:conclusion}
This paper is motivated by recent experimental work showing that $Q$-learning agents may learn to charge supracompetitive prices. To provide a theoretical explanation, we study a setting of stochastic games with bounded memory, where firms use $Q$-learning with bounded experimentation. We highlight our key findings:
\begin{enumerate}
    \item We extend the theory of \cite{fink1964equilibrium} to stochastic games with bounded memory and show the existence of one-memory SPEs. We also formulate an algorithm to check whether a given profile is a one-memory SPE.
    
    \item We show for the case of infinite repeated games that if a one-stage Nash equilibrium price and a collusive-enabling price exist, and the $Q$-function satisfies certain inequalities at the end of experimentation, then firms charge supracompetitive prices in the long run.
    
    \item We provide sufficient conditions under which these supracompetitive prices are supported by: (i) naive collusion, where firms always choose the collusive-enabling price; (ii) grim trigger strategies, where $Q$-learning firms learn to reward and punish; or (iii) increasing strategies, where firms gradually converge to the collusive-enabling price while using the Nash equilibrium price as a threat.
    
    \item Finally, among the strategies supporting supracompetitive prices, we find that naive collusion cannot be an SPE unless the collusive-enabling price is a Nash equilibrium of the one-stage game, whereas grim trigger strategies can be.
\end{enumerate}

To our knowledge, this is the first theoretical result showing how collusion can be sustained by $Q$-learning firms in infinite repeated games where there is a one-stage Nash equilibrium price and a collusive-enabling price. Future work may extend our results to the case of unbounded experimentation, and we believe that stochastic games with bounded memory remain a promising framework for this direction.

%-----------------------------------------------------------------------------
%-----------------------------------------------------------------------------
% Appendices ----------------------------------------------------------------------------------------
\newpage
\appendix
\section{Appendix}\label{AppendixA}

\subsection{Proof of Theorem \ref{Thm:ExistenceNashT1}}
\label{sec:proof_thm_2}

We start by proving that for each $(i,s_1,\p_0)$-coordinate 
\begin{equation*}
   \underbrace{\max_{\sigma^i_1\in\SSigma_1^i} \Vone(\ssigma_1^{*},\sigma_1^{i},\v^*)_{i,s_1,\p_0}}_{LHS} = \underbrace{\max_{\sigma^i_1\in\SSigma_1^i}\Vonet(s_1,\p_0,\sigma_1^i|\ssigma_1^{-i*})}_{RHS}.
\end{equation*}
We first prove that LHS $\leq$ RHS and then that LHS $\geq$ RHS.

\textbf{Proof of LHS $\leq$ RHS:}
Since $\v^*$ satisfies \eqref{NashT1_1} and \eqref{NashT1_2}, for each $(i,s_1,\p_0)$-coordinate 
\begin{equation}\label{proof:max_lemma12}
    \max_{\sigma^i_1} \Vone(\ssigma_1^{*},\sigma_1^{i},\v^*)_{i,s_1,\p_0} =  \Vone(\ssigma_1^{*},\ssigma_1^{*},\v^*)_{i,s_1,\p_0}.
\end{equation}
From \eqref{V1_function}, \eqref{NashT1_1_coordinate2} and Proposition \ref{prop:BellmanEqV1},
\begin{equation}\label{proof:max_lemma13}
\begin{split}
&\Vone(\ssigma_1^{*},\ssigma_1^{*},\v^*)_{i,s_1,\p_0}\\
&= \sum_{\p_1\in\A^n}\sigma_1^{*}(\p_1|\p_0,s_1)\left[\pi^i(\p_1,s_1)+\delta_i\sum_{s_2\in\S}\P(s_2|\p_1,s_1)\Vonet(s_2,\p_1,\sigma_1^{i*}|\ssigma_1^{-i*})\right]\\
&=\Vonet(s_1,\p_0,\sigma_1^{i*}|\ssigma_1^{-i*}).
\end{split}
\end{equation}
Clearly, \eqref{proof:max_lemma12} and \eqref{proof:max_lemma13} imply that  LHS$\leq$RHS. 

\textbf{Proof of LHS $\geq$ RHS:}  
For each coordinate $(i,s_1,\p_0)$ and $\sigma_1^i\in\SSigma_1^i$, we estimate the following quantity, 
\begin{equation}\label{proof:max_lemma14}
    \begin{split}
        &\Vonet(s_1,\p_0,\sigma_1^{i}|\ssigma_1^{-i*})-\Vone(\ssigma_1^{*},\sigma_1^{i},\v^*)_{i,s_1,\p_0}=\sum_{\p_1\in\A^n}\sigma_1^{i}(p_1^i|\p_0,s_1)\sigma_1^{-i*}(\p_1^{-i}|\p_0,s_1) \\
        &\cdot  \delta_i\sum_{s_2\in\S}\P(s_2|\p_1,s_1)(\Vonet(s_2,\p_1,\sigma_1^{i}|\ssigma_1^{-i*})-\Vonet(s_2,\p_1,\sigma_1^{i*}|\ssigma_1^{-i*})).
    \end{split}
\end{equation}
We have used equation \eqref{NashT1_1}, which claims that 
$\v^* = \Vone(\ssigma_1^{*},\ssigma_1^{*},\v^*)$ and we have used equation \eqref{proof:max_lemma13}. We denote $\Delta\Vone^i(s_2,\p_1,\sigma_1^{i},\ssigma_1^{*}):=\Vonet(s_2,\p_1,\sigma_1^{i}|\ssigma_1^{-i*})-\Vonet(s_2,\p_1,\sigma_1^{i*}|\ssigma_1^{-i*})$. Applying first the fact that $-\Vonet(s_1,\p_0,\sigma_1^{i*}|\ssigma_1^{-i*})\leq-\Vone(\ssigma_1^{*},\sigma_1^{i},\v^*)_{i,s_1,\p_0}$ (which follows from \eqref{proof:max_lemma12} and \eqref{proof:max_lemma13}) and then  \eqref{proof:max_lemma14} result in
\begin{align}\label{proof:max_lemma15}
        &\Delta\Vone^i(s_1,\p_0,\sigma_1^{i},\ssigma_1^{*}) \notag \\&
        \leq \sum_{\p_1\in\A^n}\sigma_1^{i}(p_1^i|\p_0,s_1)\sigma_1^{-i*}(\p_1^{-i}|\p_0,s_1)  \delta_i\sum_{s_2\in\S}\P(s_2|\p_1,s_1)\max_{(s_2,\p_1)\in\S\times\A^n}\Delta\Vone^i(s_2,\p_1,\sigma_1^{i},\ssigma_1^{*}) \notag \\
        &= \delta_i\max_{(s_2,\p_1)\in\S\times\A^n}\Delta\Vone^i(s_2,\p_1,\sigma_1^{i},\ssigma_1^{*}).
\end{align}
Since \eqref{proof:max_lemma15} holds for all $(s_1,\p_0)\in\S\times \A^n$ and $\delta_i<1$
\begin{equation*}
\max_{(s_1,\p_0)\in\S\times\A^n}\Delta\Vone^i(s_1,\p_0,\sigma_1^{i},\ssigma_1^{*})\leq 0.
\end{equation*}
That is, $\Vonet(s_1,\p_0,\sigma_1^{i}|\ssigma_1^{-i*})\leq \Vonet(s_1,\p_0,\sigma_1^{i*}|\ssigma_1^{-i*})$ for each $(s_1,\p_0)\in\S\times \A^n$ and $\sigma_1^i\in\SSigma_1^i$. We thus conclude that LHS $\geq$ RHS. 

Lastly, we show that $\ssigma_1^*$ is a Nash equilibrium from time $t=1$. Fix $i\in[n]$. By \eqref{V1_function}, equation \eqref{NashT1_1} yields for each $(s_1,\p_0)\in\S\times\A^n$,
\begin{equation}\label{NashT1_1_coordinate}
v^*_{i,s_1,\p_0}= \sum_{\p_1\in\A^n}\sigma_1^{*}(\p_1|\p_0,s_1)\left[\pi^i(\p_1,s_1)+\delta_i\sum_{s_2\in\S}\P(s_2|\p_1,s_1)v^*_{i,s_2,\p_1}\right].
\end{equation}
By Proposition \ref{prop:BellmanEqV1}, the sequence $\{\Vonet(s_1,\p_0,\sigma_1^{i*}|\ssigma_1^{-i*})\}_{(s_1,\p_0)\in\S\times\A^n}$ is the unique solution to the system described by \eqref{NashT1_1_coordinate}. Therefore, for each $(s_1,\p_0)\S\times\A^n$ 
\begin{equation}\label{NashT1_1_coordinate2}
    v^*_{i,s_1,\p_0} = \Vonet(s_1,\p_0,\sigma_1^{i*}|\ssigma_1^{-i*}).
\end{equation} 
By \eqref{NashT1_2} and \eqref{NashT1_1_coordinate2}, 
\begin{equation}\label{NashT1_1_coordinate3}
\begin{split}
    &\Vonet(s_1,\p_0,\sigma_1^{i*}|\ssigma_1^{-i*}) = \max_{\sigma^i_1\in\SSigma_1^i} \Vone(\ssigma_1^{*},\sigma_1^{i},\v^*)_{i,s_1,\p_0}.
\end{split} 
\end{equation}
By \eqref{eqn:max_equality_V1_V1conditional}, which we proved above, 
\begin{equation}\label{NashT1_1_coordinate4}
\begin{split}
    &\Vonet(s_1,\p_0,\sigma_1^{i*}|\ssigma_1^{-i*}) = \max_{\sigma^i_1\in \SSigma_1^i}\Vonet(s_1,\p_0,\sigma_1^i|\ssigma_1^{-i*}).
\end{split} 
\end{equation}
It follows that $\ssigma^*_1$ is a Nash equilibrium from $t=1$.   \qed

%---------------------------------------------------------------------------

\subsection{Proof Theorem \ref{thm:existence_sigma_0}}\label{sec:proof_theorem3}
Let $\ssigma_1^*\in \SSigma_1$ and $\v^*\in \R^{nrM}$ be the quantities given by Theorem \ref{Thm:Fink}. By Theorem \ref{Thm:ExistenceNashT1}, $\ssigma_1^*$ is a Nash equilibrium from time $t=1$. To prove the theorem, we need to show that there exists $\ssigma_0^*\in\SSigma_0$ satisfying for each $i\in[n]$
\begin{equation}\label{proof_eqn:Nasht_0}
    \sigma_0^{i*}\in \argmax_{\sigma_0^i\in\SSigma_0} \Vzerot(s_0, (\sigma_0^{i},\sigma_1^{i*})|(\ssigma_0^{-i*},\ssigma_1^{-i*})).
\end{equation}
We can rewrite the above equation by defining 
for each $(\p_0,s_0)\in \A^n\times \S$
\begin{equation}
    \label{proof_def:NashQ_function}
    \hat{v}^i(\p_0,s_0) \coloneq \pi^i(\p_0,s_0) + \delta_i \sum_{s_1\in\S} \P(s_1|\p_0,s_0)v_{i,s_1,\p_0}^*
\end{equation}
and noting that 
\begin{equation}\label{proof_eqn:Nasht_0_Q0}
    \Vzerot(s_0, (\sigma_0^{i},\sigma_1^{i*})|(\ssigma_0^{-i*},\ssigma_1^{-i*})) = \E_{(\sigma_0^{i},\ssigma_0^{-i*})}[\hat{v}^i(\p,s)|s_0].
\end{equation} 
By Theorem \ref{Thm:Fink} and equation \eqref{eqn:equality_V1_V1conditional}, $\Vonet(s_1,\p_0,\sigma_1^{i*}|\ssigma_1^{-i*})=v^*_{i,s_1,\p_0}$. Using the latter fact, and \eqref{not_Esigma_P_g}, \eqref{conditional_value_function} and \eqref{conditional_value_function_t1} we prove \eqref{proof_eqn:Nasht_0_Q0} by obtaining for each $s_0\in\S$ and $\ssigma = (\ssigma_0, \ssigma_1^*)$
\begin{equation}
\begin{split}\label{proof_conditional_value_function2}
    \Vzerot(s_0, \ssigma^i|\ssigma^{-i})
    &=\sum_{\p_0\in\A^n}\sigma_0(\p_0|s_0)\left\{\pi^i(\p_0,s_0)+\delta_i\sum_{s_1\in \S}\P(s_1|\p_0,s_0)\Vonet(s_1,\p_0,\sigma_1^{i*}|\ssigma_1^{-i*}) \right\}\\
    &=\sum_{\p_0\in\A^n}\sigma_0(\p_0|s_0)\left\{\pi^i(\p_0,s_0)+\delta_i\sum_{s_1\in \S}\P(s_1|\p_0,s_0)v^*_{i,s_1,\p_0} \right\}\\
    & = \E_{\ssigma_0}[\hat{v}^i(\p,s)|s_0]. 
\end{split}
\end{equation}

The use of \eqref{proof_eqn:Nasht_0_Q0} in \eqref{proof_eqn:Nasht_0} easily concludes the proof. Indeed, the existence of $\ssigma_0^*\in\SSigma_0$ satisfying for each $i\in[n]$
\begin{equation*}
    \sigma_0^{i*}\in \argmax_{\sigma_0^i\in\SSigma_0} \E_{(\sigma_0^{i},\ssigma_0^{-i*})}[\hat{v}^i(\p,s)|s_0].
\end{equation*} 
is guaranteed by the existence of Nash equilibrium in mixed strategies in \cite{nash1950equilibrium}. The profile $(\ssigma_0^*,\ssigma_1^*)$, where $\ssigma_1^*$ is given by Theorem \ref{Thm:ExistenceNashT1} and $\ssigma_0^*$ is given by \eqref{eqn:Nasht_0}, is a one-memory SPE of the stochastic game. 
\qed

\subsection{Proof of Proposition \ref{prop:grimTriger}}\label{sec:proofgrimTriger}

Recall that each firm uses $\ssigma^f=(\sigma_0^f,\sigma_1^f)$, where $\sigma_0^f(p^C)=1$, $\sigma_1^{f}(p^C|\p^C)=1$, and $\sigma_1^{f}(p^*|\p_0)=1$ for each $\p_0\in\A^n \setminus \{\p^C\}$. We use Algorithm \ref{algo:HowtoCheckNash} to show that $\ssigma^f$ is an SPE of the stochastic game. 

\textbf{Step 1 of Algorithm \ref{algo:HowtoCheckNash}:} We plug $\ssigma_1^f$ into equation \eqref{NashT1_1} and solve it as a linear system with unknowns listed in the vector $\v^f = (v^f_{i,\p_0})_{i\in[n],\p_0\in\A^n}$, and obtain
\begin{equation}\label{proof:titfortat_vf2}
        v^f_{i,\p_0} = \Vone(\ssigma_1^f,\ssigma_1^f,\v^f)_{i,\p_0}.
\end{equation}
By \eqref{V1_function}, \eqref{proof:titfortat_vf2} is equivalent to 
\begin{equation*}
        v^f_{i,\p_0}= \sum_{\p_1\in\A^n}\sigma_1^{f}(\p_1|\p_0)\left[\pi^i(\p_1)+\delta_iv_{i,\p_1}^{f}\right].
\end{equation*}
It follows that for each $i\in[n]$,
\begin{equation}\label{proof:titfortat_vf3}
v^f_{i,\p_0}=\frac{1}{1-\delta_i}\cdot\begin{cases}
   \pi^i(\p^C) & \textnormal{ if }  \p_0 = \p^C, \\
    \pi^i(\p^*) & \textnormal{ if } \p_0\neq \p^C.
\end{cases}
\end{equation}
    
\textbf{Step 2 of Algorithm \ref{algo:HowtoCheckNash}:}  We plug  $\v^f$ and $\ssigma_1^f$ into \eqref{NashT1_2} and show that $\v^f$ is a fixed point of the operator $v_{i,\p_0}\mapsto\max_{\sigma_1^i\in\SSigma_1^i}\Vone(\ssigma_1^f,\sigma_1^i,\v)_{i,\p_0}$. By Assumption \ref{assumption_deltai}, $\p^*$ is a Nash equilibrium of the game $(\pi^i(\cdot))_{i=1}^n$, and thus 
\begin{equation}\label{proof:titfortat_vf4}
        \frac{\pi^i(\p^*)}{1-\delta_i} \geq \max_{p^i\in\A\setminus\{p^*\}} \pi^i(p^i,(\p^*)^{-i})  + \delta_i\frac{\pi^i(\p^*)}{1-\delta_i}.
\end{equation}
Similarly, by rewriting  Assumption \ref{assumption_deltai_simple}, we obtain
\begin{equation}\label{proof:titfortat_vf5}
        \frac{\pi^i(\p^C)}{1-\delta_i} \geq \max_{p^i\in\A\setminus\{p^C\}} \pi^i(p^i,(\p^C)^{-i})  + \delta_i\frac{\pi^i(\p^*)}{1-\delta_i}.
\end{equation}
By \eqref{proof:titfortat_vf3}, \eqref{proof:titfortat_vf4} and \eqref{proof:titfortat_vf5}, it follows that
\begin{equation*}
    \begin{split}
        &\max_{\tau_1^i\in\SSigma_1^i}\Vone(\ssigma_1^f,\tau_1^i,\v^f)_{i,\p_0} \\
        &= \max_{\tau_1^i\in\SSigma_1^i}\sum_{p^i\in\A}\tau_1^i(p^i|\p_0)\cdot \begin{cases}
  \pi^i(p^i,(\p^C)^{-i})+\delta_i v^f_{i,(p^i,(\p^C)^{-i})}   & \textnormal{ if }  \p_0 = \p^C, \\
   \pi^i(p^i,(\p^*)^{-i})+\delta_i v^f_{i,(p^i,(\p^*)^{-i})}   & \textnormal{ if } \p_0\neq \p^C,
   \end{cases}\\
   &=\frac{1}{1-\delta_i}\cdot\begin{cases}
   \pi^i(\p^C) & \textnormal{ if }  \p_0 = \p^C, \\
    \pi^i(\p^*) & \textnormal{ if } \p_0\neq \p^C
\end{cases} \\
& = v_{i,\p_0}^f. 
\end{split}
\end{equation*}
We thus conclude that $\v^f$ is a fixed point of the operator $v_{i,\p_0}\mapsto\max_{\sigma_1^i\in\SSigma_1^i}\Vone(\ssigma_1^f,\sigma_1^i,\v)_{i,\p_0}$.
    
\textbf{Step 3 of Algorithm \ref{algo:HowtoCheckNash}:} 
Applying \eqref{proof:titfortat_vf3}, \eqref{proof:titfortat_vf4} and \eqref{proof:titfortat_vf5} in a similar way as in step 2 above, we obtain that $$\sigma_0^{f}\in \argmax_{\tau_0^i\in\SSigma_0^i} \Vzerot( (\tau_0^{i},\sigma_1^{f})|(\ssigma_0^{f},\ssigma_1^{f})^{-i}),$$ where
\begin{equation*}
\begin{split}
    \Vzerot( (\tau_0^{i},\sigma_1^{f})|(\ssigma_0^{f},\ssigma_1^{f})^{-i})= \sum_{p_0^i\in\A}\tau^i(p_0^i)\left\{\pi^i(p_0^i,(\p^C)^{-i})+\delta_iv^f_{i,(p_0^i,(\p^C)^{-i})} \right\}.
\end{split}
\end{equation*}
We thus conclude that $\ssigma_0^f$ satisfies \eqref{eqn:Nasht_0}. 
Lastly, the combination of the above equation with \eqref{proof:titfortat_vf3} yields for each $i\in[n]$,
\begin{equation}\label{proof:V0_bertrand}
    \Vzerot(\ssigma^f)
    =\frac{1}{1-\delta_i}\pi^i(\p^C).
\end{equation}

\qed

%----------------------------------------------------------------------------------------------------

\subsection{Proof of Proposition~\ref{prop:QfixedisV1}}
Recall that $\alpha_t = \alpha\in (0,1]$ for each $t\geq 0$ and $(Q_{\!f}^{i})_{i=1}^n$ is a fixed point of Algorithm \ref{Algo_Q_argmax}. Furthermore, for $\s=(s_1,\p_0)\in\S\times\A^n$, each firm $i\in[n]$ chooses an action according to \eqref{def:induced_strategy_Qil} and consequently
\begin{equation*}
\max_{p\in\A}Q_{\!f}^{i}(\hat{\s},p) = Q_{\!f}^{i}(\hat{\s},w_{\!f}^{i}(\hat{\s})). 
\end{equation*}
Because $(Q_{\!f}^{i})_{i=1}^n$ is a fixed point of Algorithm \ref{Algo_Q_argmax}, then the next update of $Q_{\!f}^{i}$ satisfies
\begin{equation}\label{proof_prop:QfixedisV1_2}
    Q_{\!f}^{i}(\s,w_{\!f}^{i}(\s)) = (1-\alpha)Q_{\!f}^{i}(\s,w_{\!f}^{i}(\s))+\alpha\left\{\pi^i(\w_{\!f}(\s),\s)+\delta_i \mathbb{E}_{\hat{\s}}\left[\max_{p\in\A}Q_{\!f}^{i}(\hat{\s},p)\right]\right\},
\end{equation}
where $\hat{s}= (s_2,\w_{\!f}(\s))$ represents the new state after the firms play with $\w_{\!f}(\s)$. 
Combining the latter equation with \eqref{proof_prop:QfixedisV1_2}, using that $\alpha \neq 0$ and $\s=(s_1,\p_0)$, yields 
\begin{equation}\label{proof_prop:QfixedisV1_3}
    Q_{\!f}^{i}(\s,w_{\!f}^{i}(\s)) = \pi^i(\w_{\!f}(\s),s_1)+\delta_i \sum_{s_2\in\S}\P(s_2|\w_{\!f}(\s),s_1)Q_{\!f}^{i}(\hat{\s},w_{\!f}^{i}(\hat{\s})).
\end{equation}
It follows from Proposition \ref{prop:BellmanEqV1} that 
for each $\s=(s_1,\p_0)\in\S\times\A^n$ 
\begin{equation*}
        Q_{\!f}^{i}(\s, w_{\!f}^{i}(\s)) = \Vonet(\s, w_{\!f}^{i}(\s)|\w_{\!f}^{-i}(\s)).
\end{equation*}
\qed

%---------------------------------------------------------------------------
\subsection{Proof of Proposition \ref{prop:Sufficient_Q_inducesNash}}
Recall that $\alpha_t = \alpha\in (0,1]$ for each $t\geq 0$, $\Q_{\!f} = (Q_{\!f}^{i})_{i=1}^n$ is a fixed point of Algorithm \ref{Algo_Q_argmax}, and 
\eqref{eqn:sufficient_Q_inducesNash} holds for each $i\in[n]$ and $\s=(s_1,\p_0)\in\S\times\A^n$. We use steps 1 and 2 of Algorithm \ref{algo:HowtoCheckNash} to show that $\w_{\!f} = \{w_{\!f}^{i}(\s)| i\in[n], \s\in\S\times\A^n\}$ is a Nash equilibrium from time $t=1$. 

\textbf{Step 1 of Algorithm \ref{algo:HowtoCheckNash}:} We plug $\w_{\!f}$ into equation \eqref{NashT1_1} and solve it as a linear system with unknowns $v_{i,\s}$ for each $(i,\s)\in[n]\times\S\times\A^n$ and obtain
\begin{equation}\label{proof_Sufficient_Q_inducesNash_1}
    v_{i,\s} = \pi^i(\w_{\!f}(\s),s_1)+\delta_i\sum_{s_2\in\S}\P(s_2|\w_{\!f}(\s),s_1)v_{i,\hat{s}},
\end{equation}
where $\hat{s} = (s_2,\w_{\!f}(\s))$. By Proposition \ref{prop:BellmanEqV1}, $v_{i,\s} = \Vonet(\s,w_{\!f}^{i}(\s)|\w_{\!f}^{-i}(\s))$ for each $\s\in\S\times\A^n$, $i\in[n]$. Moreover, by Proposition \ref{prop:QfixedisV1}, 
\begin{equation}\label{proof_Sufficient_Q_inducesNash_2}
    v^l_{i,\s} =
        Q_{\!f}^{i}(\s, w_{\!f}^{i}(\s)).
    \end{equation}

\textbf{Step 2 of Algorithm \ref{algo:HowtoCheckNash}:} We plug $\v = (v_{i,\s})_{i\in[n],\s\in\S\times\A^n}$ and $\w_{\!f}$ into \eqref{NashT1_2} to show that $\v$ is a fixed point of the operator $v_{i,\s}\mapsto\max_{\sigma_1^i\in\SSigma_1^i}\Vone(\w_{\!f},\sigma_1^i,\v)_{i,\s}$. By \eqref{eqn:sufficient_Q_inducesNash} and \eqref{proof_Sufficient_Q_inducesNash_2},
\begin{equation*}
    \begin{split}
&\max_{\sigma_1^i\in\SSigma_1^i}\Vone(\w_{\!f},\sigma_1^i,\v)_{i,\s} = \max_{p_1^i\in\A}\Vone(\w_{\!f},p_1^i,\v)_{i,\s}=\Vone(\w_{\!f},w_{\!f}^{i},\v)_{i,\s}\\
&=\pi^i(\w_{\!f}(\s),s_1)+\delta_i\sum_{s_2\in\S}\P(s_2|\w_{\!f}(\s),s_1)v_{i,\hat{s}} = v_{i,\s}.
    \end{split}
\end{equation*}
The above verification of the first two steps of Algorithm \ref{algo:HowtoCheckNash} implies that $\w_{\!f} = \{w_{\!f}^{i}(\s)| i\in[n], \s\in\S\times\A^n\}$ is a Nash equilibrium from time $t=1$.

\qed

%---------------------------------------------------------------------------
\subsection{Proof of Theorem \ref{prop:QConvergence_T}}\label{Appendix_RLalgo}
We break down the proof of Theorem \ref{prop:QConvergence_T} into two main steps: (I) We prove Lemma \ref{claim_thm_convergence_T} below which concludes the first claim of Theorem \ref{prop:QConvergence_T} and also characterizes the values of the $Q$-function given by \eqref{def:Q_learningUpdate} for each $t\geq T$; (II) We use the latter claim to compute the limit in equation \eqref{eqn_Q_epsilon*}. 

\vspace{5pt}
\textbf{Step (I):} We formulate and establish Lemma \ref{claim_thm_convergence_T}. It uses the definition $\tilde{\alpha}_k := (1-\alpha_k(1-\delta_i))$, for each $k\in \mathbb{N}$, and the convention that $\prod_{k=l}^{l-1}\tilde{\alpha}_k = 1$ for each $l\in\mathbb{N}$.

\begin{lemma}\label{claim_thm_convergence_T} If the assumptions of Theorem \ref{prop:QConvergence_T} hold, then for each $i\in[n]$, $t \geq T$, $p_t^i = p^C$. Moreover, for each $i\in[n]$, $t \geq T$ and $p\in\A\setminus\{p^C\}$, $Q_t^i(\p_{t-1},p^C)>Q_t^i(\p_{t-1},p)$ and the following equations hold true, 
\begin{equation}
\label{proof:Q*Convergence_au1}
 Q_{T+1}^i(\s,p)=\begin{cases}
        (1-\alpha_T)Q_{T}^i(\p_{T-1},p^C)+\alpha_T[\pi^i(\p^C)+\delta_iQ_T^i(\p^C,p^C)] & \textnormal{if } (\s,p)=(\p_{T-1}, p^C), \\
        Q_T^i(\s,p) & \textnormal{otherwise, }
    \end{cases}
\end{equation}
and for each $t\geq T+1$
\begin{equation}
\label{proof:Q*Convergence_au1_1}
 Q_t^i(\s,p)=\begin{cases}
        \prod_{k=T+1}^{t-1}\tilde{\alpha}_kQ_{T+1}^i(\p^C,p^C)+\sum_{k=T+1}^{t-1}\prod_{l=k+1}^{t-1}\tilde{\alpha}_l\alpha_k\pi^i(\p^C) & \textnormal{if } (\s,p)=(\p^C, p^C), \\
        Q_{T+1}^i(\s,p) & \textnormal{otherwise. }
    \end{cases}
\end{equation}
\end{lemma}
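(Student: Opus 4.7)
The plan is to establish the lemma by induction on $t\ge T$, jointly maintaining three hypotheses: (a) every firm selects $p_t^i = p^C$, so that $\p_t = \p^C$; (b) the strict inequality $Q_t^i(\p_{t-1},p^C) > Q_t^i(\p_{t-1},p)$ holds for each $p\in\A\setminus\{p^C\}$; and (c) the closed-form expressions \eqref{proof:Q*Convergence_au1} and \eqref{proof:Q*Convergence_au1_1} are valid. Together (a)--(c) give exactly the conclusions of the lemma.

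For the base case $t=T$, since $\s_T = \p_{T-1}$, condition (i) of Theorem~\ref{prop:QConvergence_T} forces $p_T^i = p^C$ for every $i$ via the argmax rule \eqref{eqn:argmax_prob}, hence $\p_T = \p^C$ and $\s_{T+1} = \p^C$. Because $\alpha_t$ is positive only at the observed state-action pair, only the entry indexed by $(\p_{T-1},p^C)$ of $Q^i$ is updated. Applying condition (i) again with $\s = \p^C$ to evaluate $\max_a Q_T^i(\p^C,a) = Q_T^i(\p^C,p^C)$ and substituting into \eqref{def:Q_learningUpdate} produces \eqref{proof:Q*Convergence_au1} directly.

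Propagating the induction, for each $t\ge T$ the combination of hypotheses (a) and (b) implies $\s_{t+1} = \p_t = \p^C$ and $\max_a Q_t^i(\p^C,a) = Q_t^i(\p^C,p^C)$. Since the only entry receiving an update at time $t+1$ is $(\s_t,p_t^i)$, for $t\ge T+1$ this reduces to the single entry $(\p^C,p^C)$, producing the linear recursion
\begin{equation*}
Q_{t+1}^i(\p^C,p^C) \;=\; \tilde{\alpha}_t\,Q_t^i(\p^C,p^C) \;+\; \alpha_t\,\pi^i(\p^C),
\end{equation*}
whose iteration yields the top branch of \eqref{proof:Q*Convergence_au1_1}, while all other entries remain frozen at $Q_{T+1}^i(\s,p)$. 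What remains is to justify (a) and (b) at $t+1$, i.e., that $Q_{t+1}^i(\p^C,p^C)$ strictly exceeds $Q_{t+1}^i(\p^C,p) = Q_T^i(\p^C,p)$ for every $p\neq p^C$.

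This strict inequality is the main obstacle, since without it the argmax at time $t+1$ could select a price other than $p^C$ and the induction would collapse. I would rewrite the recursion as the convex combination
\begin{equation*}
Q_{t+1}^i(\p^C,p^C) \;=\; \tilde{\alpha}_t\,Q_t^i(\p^C,p^C) \;+\; \alpha_t(1-\delta_i)\cdot\frac{\pi^i(\p^C)}{1-\delta_i},
\end{equation*}
whose weights sum to one and lie strictly in $(0,1)$ by Assumption~\ref{assumption_learningrate}(i). Condition (ii) of Theorem~\ref{prop:QConvergence_T}, rewritten as $\pi^i(\p^C)/(1-\delta_i) \ge Q_T^i(\p^C,p)$, ensures the second component of the combination meets the threshold, while the inductive hypothesis (or, at $t=T$, condition (i) at $\s=\p^C$) places the first component strictly above it. Because both weights are positive, the convex combination remains strictly above $Q_T^i(\p^C,p)$, closing the induction and delivering hypothesis (b) alongside the recursive formula for $(\p^C,p^C)$ in \eqref{proof:Q*Convergence_au1_1}.
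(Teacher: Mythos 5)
Your proposal is correct and follows essentially the same route as the paper's proof: induction from $t=T$, the frozen-entries observation (only the visited state-action pair is updated), the recursion $Q_{t+1}^i(\p^C,p^C)=\tilde{\alpha}_t Q_t^i(\p^C,p^C)+\alpha_t\pi^i(\p^C)$, and the rearrangement of condition (ii) combined with condition (i)/the inductive hypothesis to preserve the strict inequality — which is exactly the algebraic step in the paper's \eqref{proof:Q*Convergence_au1_T1_2} and inductive case, phrased there as $(1-\alpha_t(1-\delta_i))[Q^i(\p^C,p^C)-Q^i(\p^C,p)]+Q^i(\p^C,p)$ rather than as your convex combination. The only presentational difference is that the paper treats the passage from $T$ to $T+1$ by an explicit case split on $\p_{T-1}=\p^C$ versus $\p_{T-1}\neq\p^C$, which your parenthetical remark and frozen-entries observation cover implicitly.
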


\vspace{5pt}
\textbf{Proof of Lemma \ref{claim_thm_convergence_T}.} We fix $i\in[n]$ and $t=T$. We note that Assumption (i) in Theorem \ref{prop:QConvergence_T} implies that for each $p\in\A\setminus\{p^C\}$, $Q_T^i(\p_{T-1},p^C)>Q_T^i(\p_{T-1},p)$ and consequently  $$\argmax_{a\in\A}Q_{T}^i(\p_{T-1},a) = \{p^C\}.$$ 
This observation and Algorithm \ref{Algo_Q_boundedExperimentation} imply that $\s_{T+1} = \p_T = \p^C$. We thus conclude that 
for each $i\in[n]$ and $p\in\A\setminus\{p^C\}$, $p_T^i = p^C$ and $Q_T^i(\p_{T-1},p^C)>Q_T^i(\p_{T-1},p)$. 

To prove the statements in Lemma \ref{claim_thm_convergence_T} for $t\geq T+1$ we use strong induction.  

\vspace{5pt}
$\bullet$ \textit{Base Case.} Let $t=T+1$. We first show that \eqref{proof:Q*Convergence_au1} and \eqref{proof:Q*Convergence_au1_1} hold true. Then, we use \eqref{proof:Q*Convergence_au1} to show that for each $i\in[n]$ and $p\in\A\setminus\{p^C\}$, $p_{T+1}^i = p^C$ and $Q_{T+1}^i(\p_{T},p^C)>Q_{T+1}^i(\p_{T},p)$.

In view of what we proved and Assumption \ref{assumption_deltai_simple}-(i), $(\s_{T},p_{T}^i) = (\p_{T-1},p^C)$. Using the update rule \eqref{def:Q_learningUpdate} from Algorithm \ref{Algo_Q_argmax}, for each $(\s,p)\neq (\p_{T-1},p^C)$, $Q_{T+1}^i(\s,p)=Q_{T}^i(\s,p)$ and
\begin{equation}\label{proof:Q*Convergence_au1_T1}
 Q_{T+1}^i(\p_{T-1}, p^C)=\
        (1-\alpha_T)Q_{T}^i(\p_{T-1},p^C)+\alpha_T[\pi^i(\p^C)+\delta_i\max_{p\in\A}Q_T^i(\p^C,p)]. 
\end{equation}
In particular, \eqref{proof:Q*Convergence_au1} holds when $(\s,p)\neq (\p_{T-1}, p^C)$. On the other hand, Assumption (i) in Theorem \ref{prop:QConvergence_T} implies that 
\begin{equation}\label{proof:Q*Convergence_au1_T1_1_1}
    \max_{p\in\A}Q_T^i(\p^C,p) = Q_T^i(\p^C,p^C).
\end{equation}
Equation \eqref{proof:Q*Convergence_au1_T1_1_1} into \eqref{proof:Q*Convergence_au1_T1} yields \eqref{proof:Q*Convergence_au1} when $(\s,p)=(\p_{T-1}, p^C)$. Finally, note that for $t=T+1$, \eqref{proof:Q*Convergence_au1_1} trivially holds since $\prod_{k=T+1}^T\tilde{\alpha_k} = 1$.

Now, we use \eqref{proof:Q*Convergence_au1} to show that for each $p\in\A\setminus\{p^C\}$, $Q_{T+1}^i(\p_{T},p^C)>Q_{T+1}^i(\p_{T},p)$. We do so in two cases:
\begin{itemize}
    \item[$\diamond$] $\p_{T-1}\neq \p^C$. By \eqref{proof:Q*Convergence_au1} and \eqref{proof:Q*Convergence_au1_T1_1_1}, for each $p\in\A$,  $Q_{T+1}^i(\p^C,p^C)=Q_{T}^i(\p^C,p^C)>Q_{T}^i(\p^C,p)=Q_{T+1}^i(\p^C,p)$. 
    \item[$\diamond$] $\p_{T-1}= \p^C$. Using \eqref{proof:Q*Convergence_au1} and Assumption (ii) in Theorem \ref{prop:QConvergence_T}, we obtain for each $p\in\A\setminus\{p^C\}$
\begin{equation}\label{proof:Q*Convergence_au1_T1_2}
\begin{split}
 Q_{T+1}^i(\p^C, p^C)&=\
        (1-\alpha_T)Q_{T}^i(\p^C,p^C)+\alpha_T[\pi^i(\p^C)+\delta_iQ_T^i(\p^C,p^C)]\\
        &=(1-\alpha_T+\alpha_T\delta_i)Q_{T}^i(\p^C,p^C)+\alpha_T\pi^i(\p^C)\\
        &\geq (1-\alpha_T+\alpha_T\delta_i)Q_{T}^i(\p^C,p^C)+\alpha_T(1-\delta_i)Q_T^i(\p^C,p)\\
        &= (1-\alpha_T(1-\delta_i))\underbrace{[Q_{T}^i(\p^C,p^C)-Q_T^i(\p^C,p)]}_{>0, \textnormal{ by (i) in Theorem \ref{prop:QConvergence_T}} } + Q_T^i(\p^C,p).
\end{split}
\end{equation}
Given that $\alpha_T(1-\delta_i)<1$, \eqref{proof:Q*Convergence_au1} and \eqref{proof:Q*Convergence_au1_T1_2} imply that for each $p\in\A\setminus\{p^C\}$, $Q_{T+1}^i(\p^C, p^C) > Q_{T}^i(\p^C, p) = Q_{T+1}^i(\p^C, p)$.
\end{itemize}

The inequality we have just established, namely $Q_{T+1}^i(\p_T, p^C) > Q_{T+1}^i(\p_T, p)$ for each $p \in \A \setminus \{p^C\}$, together with Algorithm~\ref{Algo_Q_boundedExperimentation}, implies that $\p_{T+1} = \p^C$.

\vspace{5pt}
$\bullet$ \textit{Inductive case}. Let \( t \geq T+1 \), and assume that Lemma~\ref{claim_thm_convergence_T} holds for each \( k \in \{T+1, \dots, t\} \). We now prove that it also holds for \( t+1 \).
 By the inductive hypothesis, $\s_{k+1}= \p_k = \p^C$ and $Q_t^i(\p^C,p^C)>Q_t^i(\p^C,p)$ for each $T+1\leq k\leq t$ and  $p\in\A\setminus\{p^C\}$. By rule \eqref{def:Q_learningUpdate} with $(\s,p) = (\s_t,p_t^i) = (\p^C,p^C)$,
\begin{equation}\label{proof:Q*Convergence_au2}
\begin{split}
    Q_{t+1}^i(\p^C,p^C) &= (1-\alpha_t)Q_{t}^i(\p^C,p^C)+\alpha_t\left[\pi^i(\p^C)+\delta_i\max_{p\in\A}Q_t^i(\p^C,p)\right]\\
    &=(1-\alpha_t)Q_{t}^i(\p^C,p^C)+\alpha_t\left[\pi^i(\p^C)+\delta_iQ_t^i(\p^C,p^C)\right]\\
    &= (1-\alpha_t(1-\delta_i))Q_t^i(\p^C,p^C)+\alpha_t\pi^i(\p^C).
\end{split}
\end{equation}
Moreover, because $\p_k = \p^C$ for each $T+1\leq k\leq t$, by \eqref{proof:Q*Convergence_au1} and rule \eqref{def:Q_learningUpdate} for each $p\in\A\setminus\{p^C\}$, 
\begin{equation}\label{proof:Q*Convergence_au2_2}
\begin{split}
    Q_{t+1}^i(\p^C,p) = Q_{t}^i(\p^C,p) = \dots = Q_{T}^i(\p^C,p). 
\end{split}
\end{equation}
Combining \eqref{proof:Q*Convergence_au2}, \eqref{proof:Q*Convergence_au2_2} and Assumption (ii) in Theorem \ref{prop:QConvergence_T}, we obtain for each $p\in\A\setminus\{p^C\}$
\begin{equation}
\begin{split}
    Q_{t+1}^i(\p^C,p^C) &> (1-\alpha_t(1-\delta_i))Q_{T}^i(\p^C,p)+\alpha_t\pi^i(\p^C)\\
    &\geq (1-\alpha_t(1-\delta_i))Q_{T}^i(\p^C,p)+\alpha_t(1-\delta_i)Q_T^i(\p^C,p)\\
    & = Q_T^i(\p^C,p) = Q_{t+1}^i(\p^C,p).
\end{split}
\end{equation}
It follows that $Q_{t+1}^i(\p^C,p^C)>Q_{t+1}^i(\p^C,p)$ for each $p\in\A\setminus\{p^C\}$. The latter along with Algorithm \ref{Algo_Q_boundedExperimentation} imply that $\p_{t+1}=\p^C$. Finally, since by the inductive hypothesis \eqref{proof:Q*Convergence_au1_1} holds for $T+1\leq k\leq t$, we plug it into \eqref{proof:Q*Convergence_au2} and obtain
\begin{equation*}
    \begin{split}
    Q_{t+1}^i(\p^C,p^C) &= (1-\alpha_t(1-\delta_i))Q_t^i(\p^C,p^C)+\alpha_t\pi^i(\p^C)\\
    &= \tilde{\alpha}_t\prod_{k=T+1}^{t-1}\tilde{\alpha}_kQ_{T+1}^i(\p^C,p^C)+\tilde{\alpha}_t\sum_{k=T+1}^{t-1}\prod_{l=k+1}^{t-1}\tilde{\alpha}_l\alpha_k\pi^i(\p^C)+\alpha_t\pi^i(\p^C)\\
    &=\prod_{k=T+1}^{t}\tilde{\alpha}_kQ_{T+1}^i(\p^C,p^C)+\sum_{k=T+1}^{t}\prod_{l=k+1}^{t}\tilde{\alpha}_l\alpha_k\pi^i(\p^C)
\end{split}
\end{equation*}
and thus conclude the proof of  \eqref{proof:Q*Convergence_au1_1} for $t+1$.  
\qed

\vspace{5pt}
\textbf{Step (II):} We use Lemma \ref{claim_thm_convergence_T} to compute  $Q^{i*}(\s,p):=\lim_{t\to \infty} Q_t^i(\s,p)$.

\vspace{5pt}
\textbf{Case 1:} $(\s,p)=(\p^C,p^C)$. By \eqref{proof:Q*Convergence_au1_1}, for each $t\geq T+1$, $i\in [n]$
\begin{equation}\label{proof:Q*Convergence_au3}
\begin{split}
    Q_{t+1}^i(\p^C,p^C)=\prod_{k=T+1}^{t}\tilde{\alpha}_kQ_{T+1}^i(\p^C,p^C)+\sum_{k=T+1}^{t}\prod_{l=k+1}^{t}\tilde{\alpha}_l\alpha_k\pi^i(\p^C).
\end{split}
\end{equation}
By definition of $\tilde{\alpha}_k = 1-\alpha_k(1-\delta_i)$, $\tilde{\alpha}_k \in (0,1)$ for each $k\geq 1$. Using Assumption \ref{assumption_learningrate}, we obtain the following
\begin{equation}\label{proof:Q*Convergence_au4}
    \begin{split}
        \prod_{k=T+1}^{t}\tilde{\alpha}_k = e^{\sum_{k=T+1}^t\log(\tilde{\alpha}_k)}\leq e^{\sum_{k=T+1}^t\tilde{\alpha}_k-1} = e^{-(1-\delta_i)\sum_{k=T+1}^t \alpha_k} \to 0 \textnormal{ as } t\to\infty.
    \end{split}
\end{equation}
Thus, $\lim_{t\to\infty}\prod_{k=T+1}^{t}\tilde{\alpha}_k = 0$. Combining the latter fact with \eqref{proof:Q*Convergence_au3} yields 
\begin{equation*}
\begin{split}
    Q^{i*}(\p^C,p^C)= \lim_{t\to \infty} Q_{t}^i(\p^C,p^C)
    &=\lim_{t\to \infty}\sum_{k=T+1}^{t}\prod_{l=k+1}^{t}\tilde{\alpha}_l\alpha_k\pi^i(\p^C) = \alpha(\delta_i)\pi^i(\p^C)
\end{split}.
\end{equation*}

\vspace{5pt}
\textbf{Case 2:} $(\s,p)=(\p_{T-1}, p^C)$ and $\p_{T-1}\neq \p^C$. Using  \eqref{def:Q_learningUpdate} and \eqref{proof:Q*Convergence_au1_T1_1_1},
\begin{equation*}
    Q_{T+1}^i(\p_{T-1},p^C) =
        (1-\alpha_T)Q_{T}^i(\p_{T-1},p^C)+\alpha_T\left[ \pi^i(\p^C) +\delta_iQ_T^i(\p^C,p^C) \right].
\end{equation*}

\vspace{5pt}
\textbf{Case 3:} $(\s,p)$ not covered by cases 1 and 2 above. From Lemma \ref{claim_thm_convergence_T}, $Q_{t+1}^i(\s,p) = Q_T^i(\s,p)$ for each $t\geq T$. Thus, $Q^{i*}(\s,p) = Q_T^i(\s,p)$. 

\qed
%-----------------------------------------------------------------------------------------

\subsection{Proof of Proposition \ref{prop:naiveCollusionQ}}

We start by proving that for each $\s\in \A^n$ $$\w^*(\s) = \p^C.$$ 
We split the proof of the latter fact in three cases where either $\s = \p^C$, or $\s=\p_{T-1}\neq \p^C$, or $\s\in\A^n\setminus\{\p^C,\p_{T-1}\}$. We fix $i\in[n]$ for the entire proof. 

\vspace{5pt}
$\bullet$ \textbf{Case 1:} $\s = \p^C$. By \eqref{eqn_Q_epsilon*}, 
\begin{equation}
    \label{proof:prop_naivecollusion_1}
    Q^{i*}(\p^C,p) = \begin{cases}
         \alpha(\delta_i) \pi^i(\p^C) & \textnormal{ if } p=p^C,\\
         Q_{T}^i(\p^C,p) & \textnormal{ if } p\neq p^C.\\
    \end{cases}
\end{equation}
By Assumption (ii) in Proposition \ref{prop:naiveCollusionQ} with $\s=\p^C$, $\pi^i(\p^C)\geq(1-\delta_i)Q_{T}^i(\p^C,p)$ for each $p\in\A\setminus\{p^C\}$. Multiplying both sides of the latter inequality by $\alpha(\delta_i)$, and applying the assumption $\alpha(\delta_i)(1 - \delta_i) > 1$ along with \eqref{proof:prop_naivecollusion_1}, yields  $Q^{i*}(\p^C,p^C) > Q^{i*}(\p^C,p)$ for each $p\in\A\setminus\{p^C\}$. Thus, $\argmax_{p\in\A}Q^{i*}(\p^C,p) = \{p^C\}$, which implies that $w^{i*}(\p^C) = p^C$.

\vspace{5pt}
$\bullet$ \textbf{Case 2:} $\s=\p_{T-1}\neq \p^C$. By \eqref{eqn_Q_epsilon*}, 
\begin{equation}
    \label{proof:prop_naivecollusion_2}
    Q^{i*}(\p_{T-1},p) = \begin{cases}
         (1-
         \alpha_T)Q_{T}^i(\p_{T-1},p^C)+\alpha_T\left[\pi^i(\p^C)+\delta_iQ_T^i(\p^C,p^C)\right] & \textnormal{ if } p=p^C,\\
         Q_{T}^i(\p_{T-1},p) & \textnormal{ if } p\neq p^C.\\
    \end{cases}
\end{equation}
By Assumption (ii) in Proposition \ref{prop:naiveCollusionQ} with $\s=\p_{T-1}$, $\pi^i(\p^C)\geq Q_T^i(\p_{T-1},p)-\delta_iQ_T^i(\p^C,p)$ for each $p\in\A\setminus\{p^C\}$. Thus, for each $p\in\A\setminus\{p^C\}$
\begin{equation}\label{proof:prop_naivecollusion_3}
\begin{split}
    & Q^{i*}(\p_{T-1},p^C) \\
        &\geq (1-
         \alpha_T)Q_{T}^i(\p_{T-1},p^C)+\alpha_T\left[Q_T^i(\p_{T-1},p)-\delta_iQ_T^i(\p^C,p)+\delta_iQ_T^i(\p^C,p^C)\right]\\
         &= (1-
         \alpha_T)\underbrace{[Q_{T}^i(\p_{T-1},p^C)-Q_{T}^i(\p_{T-1},p)]}_{>0, \textnormal{ by (i) in Proposition \ref{prop:naiveCollusionQ}}} +\alpha_T\delta_i\underbrace{\left[Q_T^i(\p^C,p^C)-Q_T^i(\p^C,p)\right]}_{>0, \textnormal{ by (i) in Proposition \ref{prop:naiveCollusionQ}}} + Q_{T}^i(\p_{T-1},p)
\end{split}
\end{equation}
From \eqref{proof:prop_naivecollusion_3}, $Q^{i*}(\p_{T-1},p^C) > Q^{i*}(\p_{T-1},p)$ for each $p\in\A\setminus\{p^C\}$. Thus, $w^{i*}(\p_{T-1}) = p^C$.

\vspace{5pt}
$\bullet$ \textbf{Case 3:} $\s\in\A^n\setminus\{\p^C,\p_{T-1}\}$. By \eqref{eqn_Q_epsilon*}, $Q^{i*}(\s,p) = Q_{T}^i(\s,p)$ for each $p\in \A$. By Assumption (i) in Proposition \ref{prop:naiveCollusionQ}, $Q_{T}^i(\s,p^C)>Q_{T}^i(\s,p)$ for each  $p\in\A\setminus\{p^C\}$. It follows that $w^{i*}(\s) = p^C$.

\vspace{5pt}
Finally, we prove that $\w^*$ is a Nash equilibrium from time $t = 1$ if and only if $\p^C$ is a Nash equilibrium of the one-stage game $(\pi^i(\cdot))_{i=1}^n$. 

\vspace{5pt}
\textbf{Proof of the ``if'' direction:}  Suppose that $\p^C$ is a Nash equilibrium of the one-stage game $(\pi^i(\cdot))_{i=1}^n$. We use Algorithm \ref{algo:HowtoCheckNash} to show that $\w^*$ is a Nash equilibrium from time $t = 1$. By step (i) in Algorithm \ref{algo:HowtoCheckNash}, we first plug $\w^* = \p^C$ into equation \eqref{NashT1_1} and solve it as a linear system with unknowns $\v = (v_{i,\p_0})_{\p_0\in\A^n}$, as follows: 
\begin{equation}\label{proof:prop_naivecollusion_4}
\begin{split}
        v_{i,\p_0}& = \Vone(\w^*,\w^*,\v)_{i,\p_0}\\
        & \underbrace{=}_{\textnormal{By \eqref{V1_function}}} \pi^i(\p^C)+\delta_iv_{i,\p^C}.
    \end{split}
\end{equation}
Solving \eqref{proof:prop_naivecollusion_4} for $\v$, yields for each $\p_0\in\A^n$
\begin{equation}\label{proof:prop_naivecollusion_5}
v_{i,\p_0}=\frac{1}{1-\delta_i}
   \pi^i(\p^C).
\end{equation}
Following step (ii) of Algorithm \ref{algo:HowtoCheckNash} , we plug $\v$ and $\w^* = \p^C$ into \eqref{NashT1_2} to check if $\v$ is a fixed point of the operator $v_{i,\p_0}\mapsto\max_{\sigma_1^i\in\SSigma_1^i}\Vone(\w^*,\sigma_1^i,\v)_{i,\p_0}$. Indeed, by \eqref{V1_function} and \eqref{proof:prop_naivecollusion_5}, 
\begin{equation}\label{proof:prop_naivecollusion_6}
    \begin{split}
        \max_{\sigma_1^i\in\SSigma_1^i}\Vone(\w^*,\sigma_1^i,\v)_{i,\p_0} &= \max_{\sigma_1^i\in\SSigma_1^i} \sum_{p_1^i\in\A}\sigma_1^i(p_1^i|\p_0)[\pi^i(p_1^i,(\p^C)^{-i})+\delta_i v_{i, (p_1^i,(\p^C)^{-i}) }]\\
        &= \max_{\sigma_1^i\in\SSigma_1^i} \sum_{p_1^i\in\A}\sigma_1^i(p_1^i|\p_0)\left[\pi^i(p_1^i,(\p^C)^{-i})+ \frac{\delta_i}{1-\delta_i}
   \pi^i(\p^C) \right].
    \end{split}
\end{equation}
Since $\p^C$ is a Nash equilibrium of the one-stage game $(\pi^i(\cdot))_{i=1}^n$, the maximum in \eqref{proof:prop_naivecollusion_6} is achieved at $\sigma_1^i(p_1^i|\p_0)=p^C$ for each $p_1^i\in\A$. Thus, $$\max_{\sigma_1^i\in\SSigma_1^i}\Vone(\w^*,\sigma_1^i,\v)_{i,\p_0} = \frac{1}{1-\delta_i}
   \pi^i(\p^C) = v_{i,\p_0}.$$
By Algorithm \ref{algo:HowtoCheckNash}, $\w^*$ is a Nash equilibrium from time $t = 1$.

\vspace{5pt}
\textbf{Proof of the ``only if'' direction:} Suppose that $\w^*=\p^C$ is a Nash equilibrium from time $t = 1$. By definition \eqref{eqn:def_nash_eq}, for each $\p_0\in\A^n$
$$\w^{i*}(\p_0) = p^C\in\argmax_{\sigma_1^i\in\SSigma_1^i}\Vonet(\p_0, \sigma^i_1|\w^{-i*}).$$
By the above and equation \eqref{conditional_value_function_t1}, for each $\p_0\in\A^n$ and $\sigma_1^i\in\SSigma_1^i$
\begin{equation}\label{proof:prop_naivecollusion_7}
    \sum_{t=1}^\infty\delta_i^{t-1}\pi^i(\p^C)\geq \E_{(\sigma_1^i,\w^{-i*})}\left[ \sum_{t=1}^\infty \delta_i^{t-1} \pi^i(\p_t)\Big|\p_0 \right].
\end{equation}
For each $\hat{p}\in\A\setminus\{p^C\}$, define $\hat{\sigma}_1^i$ as follows: $\hat{\sigma}_1^i(p|\p^{*}) = 1$ if $p = \hat{p}$, and $\hat{\sigma}_1^i(p|\p^{*}) = 0$ if $p \neq \hat{p}$. Moreover, let $\hat{\sigma}_1^i(\cdot|\p_0) = p^C$  for any $\p_0\neq \p^{*}$. Taking $\p_0 = \p^{*}$ and $\sigma_1^i = \hat{\sigma}_1^i$ in \eqref{proof:prop_naivecollusion_7} yields, 
\begin{equation*}
\begin{split}
    \frac{1}{1-\delta_i}
   \pi^i(\p^C) &\geq \E_{\hat{\sigma}_1^i}\left[  \pi^i(p_1^i,(\p^C)^{-i})+\sum_{t=2}^\infty \delta_i^{t-1} \pi^i(p_t^i,(\p^C)^{-i})\Big|\p^{*} \right]\\
   & = \pi^i(\hat{p},(\p^C)^{-i}) + \E_{\hat{\sigma}_1^i}\left[  \sum_{t=2}^\infty \delta_i^{t-1} \pi^i(p_t^i,(\p^C)^{-i})\Big|(\hat{p},(\p^C)^{-i})\right]\\
   & = \pi^i(\hat{p},(\p^C)^{-i}) +  \frac{\delta_i}{1-\delta_i}
   \pi^i(\p^C).
\end{split}
\end{equation*}
The above inequality holds for each $\hat{p}\in\A\setminus\{p^C\}$ and $i\in[n]$, implying that $\p^C$ is a Nash equilibrium of the one-stage game $(\pi^i(\cdot))_{i=1}^n$.

\qed
%-----------------------------------------------------------------------------------------

\subsection{Proof of Proposition \ref{prop:grimtriggerQ}}

We start by proving that \begin{equation*}
    \w^*(\s) = \begin{cases}
        \p^C & \s = \p^C,\\
        \p^* & \s\neq \p^C.
    \end{cases}
\end{equation*}
We split the proof of the latter fact in three cases where either $\s = \p^C$, or $\s=\p_{T-1}\neq \p^C$, or $\s\in\A^n\setminus\{\p^C,\p_{T-1}\}$. We fix $i\in[n]$ for the entire proof. 

\vspace{5pt}
$\bullet$ \textbf{Case 1:} $\s = \p^C$. This case is identical to the case $\s = \p^C$ in the Proof of Proposition \ref{prop:naiveCollusionQ}, so we omit it. However, we recall that this case uses the assumptions $\alpha(\delta_i)(1-\delta_i)>1$ and Assumption (ii) in Proposition \ref{prop:grimtriggerQ}. Thus, $w^{i*}(\p^C) = p^C$.

\vspace{5pt}
$\bullet$ \textbf{Case 2:} $\s=\p_{T-1}\neq \p^C$. By \eqref{eqn_Q_epsilon*}, 
\begin{equation}
    \label{proof:prop:grimtriggerQ_1}
    Q^{i*}(\p_{T-1},p) = \begin{cases}
         (1-
         \alpha_T)Q_{T}^i(\p_{T-1},p^C)+\alpha_T\left[\pi^i(\p^C)+\delta_iQ_T^i(\p^C,p^C)\right] & \textnormal{ if } p=p^C,\\
         Q_{T}^i(\p_{T-1},p) & \textnormal{ if } p\neq p^C.\\
    \end{cases}
\end{equation}
By Assumption (i) in Proposition \ref{prop:grimtriggerQ}, $Q^{i*}(\p_{T-1},p^*)>Q^{i*}(\p_{T-1},p)$ for each $p\in\A\setminus\{p^*\}$. Thus, $w^{i*}(\p_{T-1}) = p^*$.

\vspace{5pt}
$\bullet$ \textbf{Case 3:} $\s\in\A^n\setminus\{\p^C,\p_{T-1}\}$. By \eqref{eqn_Q_epsilon*}, $Q^{i*}(\s,p) = Q_{T}^i(\s,p)$ for each $p\in \A$. By Assumption (i) in Proposition \ref{prop:grimtriggerQ}, $Q_{T}^i(\s,p^*)>Q_{T}^i(\s,p)$ for each  $p\in\A\setminus\{p^*\}$. It follows that $w^{i*}(\s) = p^*$.

\vspace{5pt}
Finally, by Proposition \ref{prop:grimTriger}, we know that under Assumption \ref{assumption_deltai_simple}, $\w^*$ is a Nash equilibrium from time $t = 1$, since  $\w^* = \ssigma_1^f$.  

\qed
%-----------------------------------------------------------------------------------------

\subsection{Proof of Proposition \ref{prop:increasing_Q}}

We start by proving that \begin{equation*}
    \w^*(\s) = \begin{cases}
        \p^C & \s = \p^C,\\
        \p^{l+1} & \s = \p^l,\\
        \p^* & \s\notin \{p^l\}_{l=0}^{k+1}.
    \end{cases}
\end{equation*}
We split the proof of the latter fact in three cases where either $\s = \p^C$, or $\s=\p^j$ for some $j\in[k]$, or $\s\in\A^n\setminus\{\p^l\}_{l=0}^{k+1}$. We fix $i\in[n]$ for the entire proof. 

\vspace{5pt}
$\bullet$ \textbf{Case 1:} $\s = \p^C$. This case is identical to the case $\s = \p^C$ in the Proof of Proposition \ref{prop:naiveCollusionQ}, so we omit it. However, we recall that this case uses the assumptions $\alpha(\delta_i)(1-\delta_i)>1$ and Assumption (i) in Proposition \ref{prop:increasing_Q}. Thus, $w^{i*}(\p^C) = p^C$.

\vspace{5pt}
$\bullet$ \textbf{Case 2:} $\s=\p^j$ for some $j\in[k]$. By Assumption \ref{assumption_increasingStrategies}, $\p_{T-1}\notin\{p^l\}_{l=0}^{k+1}$. By \eqref{eqn_Q_epsilon*}, $Q^{i*}(\p^j,p) = Q_{T}^i(\p^j,p)$ for each $p\in \A$. By Assumption \ref{assumption_increasingStrategies}-(i), $Q_{T}^i(\p^j,p^{j+1})>Q_{T}^i(\p^j,p)$ for each  $p\in\A\setminus\{p^{j+1}\}$. It follows that $w^{i*}(\p^j) = p^{j+1}$.

\vspace{5pt}
$\bullet$ \textbf{Case 3:} $\s\in\A^n\setminus\{\p^l\}_{l=0}^{k+1}$. Since $\p_{T-1}\notin\{p^l\}_{l=0}^{k+1}$, by \eqref{eqn_Q_epsilon*}, 
\begin{equation*}
    Q^{i*} = \begin{cases}
         Q^{i*}(\p_{T-1},p^C) &  (\s,p) =  (\p_{T-1},p^C),\\
         Q_T^i(\s,p) &  (\s,p) \neq  (\p_{T-1},p^C).
    \end{cases}
\end{equation*}
By Assumption \ref{assumption_increasingStrategies}-(ii), $Q_{T}^i(\s,p^{*})>\max\{Q_T^i(\s,p),Q_{\epsilon\to 0}^i(\p_{T-1},p^C)\}$ for each $p\in\A\setminus\{p^*\}$ and $\s\in\A\setminus\{p^l\}_{l=0}^{k+1}$ with $(\s,p)\neq (\p_{T-1},p^C)$. It follows that $w^{i*}(\s) = p^{*}$.

\qed
%-----------------------------------------------------------------------------------------

\subsection*{Example of a Sequence satisfying Assumption \ref{assumption_learningrate}}\label{example:learningrate}
For each $k\geq 1$, we let $a_k := \prod_{l=k+1}^\infty (1-\alpha_l (1-\delta_i)) \alpha_k$. Suppose that $\alpha_k$ is chosen so that $a_k =\delta_i^{k-1}$. Then, 
$$ \delta_i = \frac{a_k}{a_{k-1}}=\frac{\alpha_k}{(1-\alpha_k (1-\delta_i))\alpha_{k-1}}.$$

It follows that $\delta_i(1-\alpha_k (1-\delta_i))\alpha_{k-1} = \alpha_k$ if and only if 

$$\alpha_k = \frac{\delta_i\alpha_{k-1}}{1+\delta_i(1-\delta_i)\alpha_{k-1}}.$$

With this choice of $\alpha_k$, 
\begin{equation*}
\begin{split}
    \lim_{t\to \infty} Q_{t}^i(\p^C,p^C)
    &=
    \sum_{k=1}^{\infty}\delta_i^{k-1}\pi^i(\p^C) = \frac{1}{1-\delta_i}\pi^i(\p^C) .
\end{split}
\end{equation*}

Note that if $\alpha_1\in [0,1)$. Then, $\alpha_2 = \frac{\delta_i\alpha_{1}}{1+\delta_i(1-\delta_i)\alpha_{1}}<1$ if and only if $\delta_i^2\alpha_{1}<1$. By induction, $\alpha_k<1$. On the other hand, by definition, 

$$\alpha_k > \frac{(1-\delta_i)\delta_i\alpha_{k-1}}{1+\delta_i(1-\delta_i)\alpha_{k-1}}> \frac{(1-\delta_i)\delta_i\alpha_{k-1}}{2\delta_i(1-\delta_i)\alpha_{k-1}}=\frac{1}{2}.$$

%-----------------------------------------------------------------------------
%-----------------------------------------------------------------------------
%-----------------------------------------------------------------------------
%-----------------------------------------------------------------------------
\newpage
\section{Rewriting the Proof of Fink's Theorem}\label{Sec:FinksProof}
We rewrite the proof of Theorem 2 of \cite{fink1964equilibrium}, that is, Theorem \ref{Thm:Fink} in this work. The rewritten proof uses our notation and adds many missing details. We find it necessary to refer to the rewritten proof when establishing the theories of Sections \ref{sec:Repeated_Bertrand} and \ref{sec:RLalgo}.  Section \ref{sec:proof_prop_1} first proves Proposition \ref{prop:BellmanEqV1} and Section \ref{sec:proof_thm_1} establishes several other propositions and then concludes the proof of Theorem \ref{Thm:Fink}.

\subsection{Proof of Proposition \ref{prop:BellmanEqV1}}\label{sec:proof_prop_1}
Let $\ssigma_1 = (\sigma^i_1, \ssigma^{-i}_1)\in  \SSigma_1$, $s_1\in \S$ and $\p_0\in \A^n$ be given. From \eqref{conditional_value_function_t1}, 
\begin{equation}\label{cvf_t1_proof1}
\begin{split}
    &\Vonet(s_1, \p_0, \sigma^i_1|\ssigma^{-i}_1) =\\
    &\sum_{\p_1\in\A^n}\sigma_1(\p_1|\p_0,s_1) \left\{ \pi^i(\p_1,s_1) +
    \delta_i\sum_{s_2\in\S}\P(s_2|\p_1,s_1)
    \E_{\ssigma_1, \P}\left[ \sum_{t=2}^\infty \delta_i^{t-2} r^i(t)\Big|\p_1, s_2 \right]\right\}.
\end{split}
\end{equation}
To obtain \eqref{Bellman_equationV1} from \eqref{cvf_t1_proof1}, note that the profit function $\pi^i$ is time independent, which implies that $$\E_{\ssigma_1, \P}\left[ \sum_{t=2}^\infty \delta_i^{t-2} r^i(t)\Big|\p_1, s_2 \right] = \Vonet(s_2, \p_1, \sigma^i_1|\ssigma^{-i}_1).$$ 

We now show that there exists a unique solution to \eqref{Bellman_equationV1}. Expanding \eqref{Bellman_equationV1}, for each $s_1\in\S$ and $\p_0\in\A^n$, we obtain the following
\begin{equation*}
    \begin{split}
    &\Vonet(s_1, \p_0, \sigma^i_1|\ssigma^{-i}_1)\\
    &=\E_{\ssigma_1}\left[ \pi^i|\p_0,s_1 \right]+
    \delta_i\sum_{\p_1\in\A^n}\sigma_1(\p_1|\p_0,s_1) \sum_{s_2\in\S}\P(s_2|\p_1,s_1)
    \Vonet(s_2, \p_1, \sigma^i_1|\ssigma^{-i}_1), 
    \end{split}
\end{equation*}
which can be rewritten as 
\begin{equation}\label{cvf_t1_proof2}
    \begin{split}
    &\left[1-\delta_i\sigma_1(\p_0|\p_0,s_1)\P(s_1|\p_0,s_1)\right]\Vonet(s_1, \p_0, \sigma^i_1|\ssigma^{-i}_1)\\
    &-\delta_i\sigma_1(\p_0|\p_0,s_1) \sum_{s_2\neq s_1}\P(s_2|\p_0,s_1)
    \Vonet(s_2, \p_0, \sigma^i_1|\ssigma^{-i}_1)\\
    &-\delta_i\sum_{\p_1\neq \p_0}\sigma_1(\p_1|\p_0,s_1) \sum_{s_2\in\S}\P(s_2|\p_1,s_1)
    \Vonet(s_2, \p_1, \sigma^i_1|\ssigma^{-i}_1)=\E_{\ssigma_1}\left[ \pi^i|\p_0,s_1 \right]. 
    \end{split}
\end{equation}
Let $\E_{\ssigma_1}[\pi^i] := (\E_{\ssigma_1}[\pi^i|\p^1,s^1],\cdots,\E_{\ssigma_1}[\pi^i|\p^M,s^r])^T\in\mathbb{R}^{rM}$. By \eqref{cvf_t1_proof2}, the vector $\Vonet(\sigma^i_1|\ssigma^{-i}_1)$ given by
\begin{equation}\label{conditional_value_function_t1_vector}
    \Vonet(\sigma^i_1|\ssigma^{-i}_1) := (\Vonet(s^1, \p^1, \sigma^i_1|\ssigma^{-i}_1),\cdots, \Vonet(s^r, \p^M, \sigma^i_1|\ssigma^{-i}_1))^T\in \mathbb{R}^{rM}
\end{equation}
satisfies the following linear system
\begin{equation}
    \label{Av1_system_proof}
    \begin{split}
        \bA \Vonet(\sigma^i_1|\ssigma^{-i}_1) = \E_{\ssigma_1}[\pi^i],
    \end{split}
\end{equation}
where $\bA$ is a matrix whose rows and columns are indexed by the set $\S\times \A^n$: the entry in row $(s^j,\p^k)$ and column $(s^l,\p^o)$ is given by 
\begin{equation}
    \label{Av1_system_proof2}
    \begin{split}
        \bA \left((s^j,\p^k),(s^l,\p^o)\right) = \begin{cases}
            1-\delta_i\sigma_1(\p^k|\p^k,s^j)\P(s^j|\p^k,s^j) & \textnormal{ if } (s^j,\p^k) = (s^l,\p^o)\\
            -\delta_i\sigma_1(\p^o|\p^k,s^j)\P(s^l|\p^o,s^j) & \textnormal{ if } (s^j,\p^k) \neq (s^l,\p^o)
        \end{cases}
    \end{split}.
\end{equation}
For each $(s^j,\p^k)\in \S\times \A^n$, the following holds true 
\begin{equation}
    \label{Av1_system_proof3}
    \begin{split}
        &\bA \left((s^j,\p^k),(s^j,\p^k)\right) - \sum_{(s^l,\p^o)\neq (s^j,\p^k)} \left| \bA \left((s^j,\p^k),(s^l,\p^o)\right) \right| \\
       &=1-\delta_i\sigma_1(\p^k|\p^k,s^j)\P(s^j|\p^k,s^j) -\sum_{(s^l,\p^o)\neq (s^j,\p^k)}\delta_i\sigma_1(\p^o|\p^k,s^j)\P(s^l|\p^o,s^j) \\
       &= 1-\delta_i\sum_{(s^l,\p^o)}\sigma_1(\p^o|\p^k,s^j)\P(s^l|\p^o,s^j)\\
       &=1-\delta_i\sum_{\p^o}\sigma_1(\p^o|\p^k,s^j)\sum_{s^l}\P(s^l|\p^o,s^j) = 1-\delta_i
    \end{split}.
\end{equation}
From Gershgorin Circle Theorem (See page 244 in \cite{bhatia2013matrix}), for any eigenvalue of $\bA$, say $\lambda$, there exists $(s^j,\p^k)\in \S\times \A^n$ such that
$$|\lambda-\bA \left((s^j,\p^k),(s^j,\p^k)\right)|\leq \sum_{(s^l,\p^o)\neq (s^j,\p^k)} \left| \bA \left((s^j,\p^k),(s^l,\p^o)\right) \right|. $$
The above inequality combined with the reverse triangle inequality and equation \eqref{Av1_system_proof3}, imply that $|\lambda|\geq 1-\delta_i>0$. Thus, $0$ is not an eigenvalue of $\bA$ and $\bA^{-1}$ exists. Therefore,  \eqref{Av1_system_proof} has a unique solution. \qed

%---------------------------------------------------------------------------

\subsection{Proof of Theorem \ref{Thm:Fink}}\label{sec:proof_thm_1}
Before getting into the details of the proof. We summarize some the crucial steps in the proof of \cite{fink1964equilibrium}:
\begin{enumerate}
    \item 
 $\Vone$ is continuous in its domain of definition (see Proposition \ref{prop:propertiesV1}). 
 \item 
 For each $\v\in\R^{nrM}$ and $\ssigma_1\in\SSigma_1$, there is a well-defined mapping $(\v,\ssigma_1)\mapsto T(\v,\ssigma_1)$ whose $(i,s_1,\p_0)$-coordinate is given by 
$$T(\v,\ssigma_1)_{i,s_1,\p_0} = \max_{\tau_1^i\in\SSigma_1^i}\Vone(\ssigma_1,\tau_1^i,\v)_{i,s_1,\p_0}.$$
The mapping $\v\mapsto T(\v,\ssigma_1)$ is a contraction from $\R^{nrM}$ to itself (see Proposition \ref{prop:T_contraction}). Thus, there is a well-defined mapping $\ssigma_1\mapsto b(\ssigma_1)\in\R^{nrM}$, where $b(\ssigma_1)$ is the unique fixed point of $T(\cdot,\ssigma_1)$. 
\item The set-valued mapping $\Gamma:\SSigma_1\to 2^{\SSigma_1}$ given by $\ssigma_1\mapsto \Gamma(\ssigma_1)$, where
$$ \Gamma(\ssigma_1) := \{\ttau_1\in\SSigma_1| b(\ssigma_1) = \Vone(\ssigma_1,\ttau_1,b(\ssigma_1)) \},$$
satisfies the hypotheses of Kakutani's theorem (see Theorem \ref{Thm:Kakutani} and Proposition \ref{prop_b_properties}). Therefore, $\Gamma$ has a fixed point $\ssigma_1^{*}\in\SSigma_1$, i.e., there is a policy in $\SSigma_1$ such that $\ssigma_1^{*}\in\Gamma(\ssigma_1^{*})$. Such policy is the stationary point of Theorem \ref{Thm:Fink}. Moreover, the vector $\v^*$ from Theorem \ref{Thm:Fink} is given by $\v^* = b(\ssigma_1^*)$. 
\end{enumerate}

\subsubsection*{Preliminary Results and Definitions for the Proof of Theorem \ref{Thm:Fink}.} Given two nonempty sets $X$ and $Y$, a correspondence from $X$ to $Y$ is a map $\Gamma:X\longrightarrow 2^{Y}$ such that for each $x\in X$, $\Gamma(x)\neq \emptyset$. We say that $\Gamma$ is a self-correspondence on $X$, if $\Gamma$ is a correspondence from $X$ to $X$. If $Y\subset \R^d$ and $\Gamma(x)$ is convex for each $x\in X$, then we say that $\Gamma$ is convex-valued. Let $X$ and $Y$ be two metric spaces, $\Gamma$ is said to be closed-valued if $\Gamma(x)$ is a closed subset of $Y$. Now, $\Gamma$ is said to be closed at $x\in X$, if for any two sequences $(x_k)_k\subset X$ and $(y_k)_k\subset Y$ with $x_k\to x$ and $y_k\to y\in Y$, if $y_k\in \Gamma(x_k)$ for each $k$, then $y\in \Gamma(x)$. Moreover, $\Gamma$ has a closed graph if it is closed at every $x\in X$. 

\begin{theorem}[Kakutani's Fixed Point Theorem]\label{Thm:Kakutani}
Let $X\subset \R^d$ be a nonempty, compact and convex set. If $\Gamma$ is a convex-valued self-correspondence on $X$ that has a closed graph, then $\Gamma$ has a fixed point, i.e., there exists $x\in X$ with $x\in \Gamma(x)$.  
\end{theorem}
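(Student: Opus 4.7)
The statement at the end of the excerpt is the classical Kakutani fixed point theorem, stated here as a tool to later prove Fink's theorem. The natural plan is to reduce it to Brouwer's fixed point theorem via a sequence of piecewise-linear continuous approximations to $\Gamma$, and then extract a fixed point of $\Gamma$ in the limit using the closed graph and convex-valuedness.

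First I would construct the approximations. For each $k \in \mathbb{N}$, I would fix a simplicial subdivision of $X$ with mesh size at most $1/k$ (this is possible because $X$ is compact, convex, and finite-dimensional). At every vertex $v$ of this subdivision I would pick some $y_v \in \Gamma(v)$, which is nonempty by hypothesis. I would then define $f_k \colon X \to X$ by setting $f_k(v) := y_v$ on vertices and extending affinely on every simplex of the subdivision via barycentric coordinates. Because $X$ is convex, $f_k$ takes values in $X$, and $f_k$ is continuous by construction. By Brouwer's fixed point theorem applied to $f_k$ on $X$, there exists $x_k \in X$ with $f_k(x_k) = x_k$.

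Next I would pass to the limit. Since $X$ is compact, I would extract a subsequence (not relabeled) with $x_k \to x^* \in X$. For each $k$, the point $x_k$ lies in some simplex with vertices $v_0^k,\dots,v_d^k$ at distance $\leq 1/k$ from $x_k$ (so each $v_j^k \to x^*$), and $x_k = \sum_{j=0}^d \lambda_j^k y_{v_j^k}$ where $\lambda_j^k \geq 0$ and $\sum_j \lambda_j^k = 1$. Passing to a further subsequence I may assume $\lambda_j^k \to \lambda_j^*$ and $y_{v_j^k} \to y_j^* \in X$ for each $j$. Because $\Gamma$ has a closed graph and $(v_j^k, y_{v_j^k}) \in \mathrm{graph}(\Gamma)$ with $v_j^k \to x^*$, I would conclude $y_j^* \in \Gamma(x^*)$ for each $j$. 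Then $x^* = \sum_j \lambda_j^* y_j^*$ is a convex combination of elements of $\Gamma(x^*)$, and since $\Gamma(x^*)$ is convex, $x^* \in \Gamma(x^*)$.

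The main obstacle is ensuring that the closed graph property combines correctly with convex-valuedness in the limiting step. Specifically, the continuous approximation $f_k$ at $x_k$ is not in general an element of $\Gamma(x_k)$; it is only a convex combination of values $y_{v_j^k}$ attached to nearby vertices. I must therefore rely on the closed graph property applied separately to each sequence $(v_j^k, y_{v_j^k})$ (not to $(x_k, f_k(x_k))$) and then use convexity of $\Gamma(x^*)$ to assemble the final membership. The non-emptiness of $\Gamma$ at every vertex is also essential, as otherwise the construction of $f_k$ would fail at the initial step. With these two properties handled carefully, the construction yields the desired fixed point.
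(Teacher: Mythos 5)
The paper does not prove this statement at all: Kakutani's theorem is quoted as a known tool (with a pointer to page 331 of Ok, 2007) and is then applied to the correspondence $\Gamma$ in the proof of Theorem~\ref{Thm:Fink}. So there is no in-paper argument to compare against; what you supply is the classical proof via piecewise-affine approximation plus Brouwer, and your limiting step is organized correctly: you apply the closed-graph property to the vertex sequences $(v_j^k, y_{v_j^k})$ rather than to $(x_k, f_k(x_k))$, and only then invoke convexity of $\Gamma(x^*)$ to assemble $x^* = \sum_j \lambda_j^* y_j^* \in \Gamma(x^*)$. That is exactly how the two hypotheses must interact.

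One step, as written, is not justified: a general nonempty compact convex $X \subset \R^d$ need not admit any finite simplicial subdivision (a closed disk in $\R^2$ already fails, since it is not a polyhedron), so ``fix a simplicial subdivision of $X$ with mesh at most $1/k$'' does not follow from compactness, convexity, and finite dimension alone. The standard repair leaves the rest of your argument untouched: take a simplex $\Delta \supseteq X$, triangulate $\Delta$ with mesh $1/k$, and at each vertex $v$ of the triangulation choose $y_v \in \Gamma\bigl(\pi_X(v)\bigr)$, where $\pi_X:\Delta\to X$ is the (continuous) nearest-point projection onto the closed convex set $X$; extend affinely to get $f_k:\Delta\to X$. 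Brouwer on $\Delta$ gives $x_k = f_k(x_k)$, which lies in $X$ because $f_k$ takes values in $X$, and in the limit $v_j^k \to x^*$ gives $\pi_X(v_j^k) \to \pi_X(x^*) = x^*$, so the closed-graph step goes through with $(\pi_X(v_j^k), y_{v_j^k})$ in place of $(v_j^k, y_{v_j^k})$. (Equivalently: prove the theorem for a simplex and then apply it to $x \mapsto \Gamma(\pi_X(x))$.) With that amendment, and noting that you are invoking Brouwer for compact convex sets rather than only for simplices (itself standard, via the same retraction), your proof is correct.
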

For a proof of Kakutani's fixed point theorem see Page 331 in \cite{ok2007real}. Proposition \ref{prop:propertiesV1}, Proposition \ref{prop:T_contraction} and Proposition \ref{prop_b_properties} below ensure that we can use Kakutani's fixed point theorem to prove Theorem \ref{Thm:ExistenceNashT1}.

\begin{proposition}[Properties of $\Vone$]\label{prop:propertiesV1} The function $\Vone$ as given by \eqref{V1_function} satisfies all of the following:
\begin{itemize}
    \item[(a)] $\Vone$ is continuous on $\SSigma_1\times \SSigma_1\times\mathbb{R}^{nrM}$;
    \item[(b)] Let $\ssigma_1, \ttau_1\in \SSigma_1$ and $\delta := \max_{i\in[n]} \delta_i$. For each $\v,\u\in \mathbb{R}^{nrM}$, and each $(i,s_1,\p_0)$-coordinate 
    $$ \Vone(\ssigma_1,\tau_1^i,\v)_{i,s_1,\p_0} - \Vone(\ssigma_1,\tau_1^i,\u)_{i,s_1,\p_0}\leq \delta |\v-\u|_{\infty}, $$
    where $|\cdot|_{\infty}$ denotes the infinity norm in $\R^{nrM}$; 
    \item[(c)] $\Vone(\ssigma_1,\ttau_1,\v)$ is linear in $\ttau_1$. 
\end{itemize}
    
\end{proposition}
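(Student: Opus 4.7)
The plan is to verify each of the three stated properties directly from the defining formula \eqref{V1_function}, since $\Vone$ is structurally a finite multilinear expression in the coordinates of $\ssigma_1$, $\ttau_1$, and $\v$. All three parts should follow from this observation plus the fact that $\tau_1^i(\cdot|\p_0,s_1)$, $\sigma_1^{-i}(\cdot|\p_0,s_1)$, and $\P(\cdot|\p_1,s_1)$ are probability distributions and therefore sum to one.

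For part (a), I would simply note that the $(i,s_1,\p_0)$-coordinate of $\Vone(\ssigma_1,\ttau_1,\v)$ is a finite sum indexed by $\p_1 \in \A^n$ (and inside by $s_2 \in \S$) of products of coordinates of $\tau_1^i$, $\sigma_1^{-i}$, $\v$, together with the constants $\pi^i(\p_1,s_1)$, $\delta_i$, and $\P(s_2|\p_1,s_1)$. Such a finite sum of products of continuous (indeed polynomial) functions on the product space $\SSigma_1\times\SSigma_1\times\R^{nrM}$ is continuous, so $\Vone$ is continuous coordinate-by-coordinate, hence continuous.

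For part (b), I would subtract the two expressions:
\begin{align*}
&\Vone(\ssigma_1,\tau_1^i,\v)_{i,s_1,\p_0} - \Vone(\ssigma_1,\tau_1^i,\u)_{i,s_1,\p_0}\\
&\quad= \delta_i \sum_{\p_1\in\A^n} \tau_1^i(p_1^i|\p_0,s_1)\sigma_1^{-i}(\p_1^{-i}|\p_0,s_1) \sum_{s_2\in\S}\P(s_2|\p_1,s_1)\bigl(v_{i,s_2,\p_1}-u_{i,s_2,\p_1}\bigr),
\end{align*}
since the $\pi^i(\p_1,s_1)$ terms cancel. Bounding each difference $v_{i,s_2,\p_1}-u_{i,s_2,\p_1}$ by $|\v-\u|_\infty$, and then using $\sum_{s_2}\P(s_2|\p_1,s_1)=1$, followed by $\sum_{\p_1}\tau_1^i(p_1^i|\p_0,s_1)\sigma_1^{-i}(\p_1^{-i}|\p_0,s_1) = 1$, the entire coefficient collapses to $\delta_i \leq \delta$, yielding the claimed inequality.

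For part (c), I would observe that for each fixed $(i,s_1,\p_0)$-coordinate, the defining formula shows that $\Vone(\ssigma_1,\ttau_1,\v)_{i,s_1,\p_0}$ depends on $\ttau_1$ only through the coefficients $\{\tau_1^i(p_1^i|\p_0,s_1)\}_{p_1^i\in\A}$, and that dependence is a linear combination of these coefficients with weights $\sigma_1^{-i}(\p_1^{-i}|\p_0,s_1)\bigl[\pi^i(\p_1,s_1)+\delta_i\sum_{s_2}\P(s_2|\p_1,s_1)v_{i,s_2,\p_1}\bigr]$ that do not involve $\ttau_1$. Hence the coordinate is linear in $\ttau_1$, and linearity of the whole vector-valued $\Vone(\ssigma_1,\cdot,\v)$ follows.

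No step poses a genuine obstacle; each part is essentially bookkeeping. The only place where mild care is required is part (b), to ensure the sums over $\p_1$ and $s_2$ are collapsed in the right order so that the three unit-mass conditions (on $\tau_1^i$, $\sigma_1^{-i}$, and $\P(\cdot|\p_1,s_1)$) kill the combinatorial factors and leave only $\delta_i$ in front of $|\v-\u|_\infty$.
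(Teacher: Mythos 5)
Your proposal is correct and follows essentially the same route as the paper: direct coordinate-wise verification from \eqref{V1_function}, with the same cancellation of the $\pi^i$ terms and collapsing of the probability sums for part (b), and the same observation that each coordinate is a linear combination of the entries of $\tau_1^i$ for part (c). The paper's own proof is merely terser (it declares (a) and (c) "straightforward" from the displayed formula), so no substantive difference.
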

%---------------------------------------------------------------------------

\begin{proof}[{\bf Proof of Proposition \ref{prop:propertiesV1}}] 
Let $\ssigma_1, \ttau_1\in\SSigma_1$ and $\v\in \mathbb{R}^{nrM}$. From \eqref{V1_function}, for each $(i,s_1,\p_0)$-coordinate
\begin{equation}\label{proof:PropertiesV11}
\begin{split}
&\Vone(\ssigma_1,\tau_1^i,\v)_{i,s_1,\p_0}\\
&= \sum_{\p_1\in\A^n}\tau_1^i(p_1^i|\p_0,s_1)\sigma_1^{-i}(\p_1^{-i}|\p_0,s_1)\left[\pi^i(\p_1,s_1)+\delta_i\sum_{s_2\in\S}\P(s_2|\p_1,s_1)v_{i,s_2,\p_1}\right].
\end{split}
\end{equation}
From \eqref{proof:PropertiesV11}, it is straightforward to see that $\Vone(\ssigma_1,\tau_1^i,\v)_{i,s_1,\p_0}$ is continuous w.r.t~$(\tau_1^i,\ssigma_1^{-i})$, and continuous w.r.t.~$v_{i,s_2,\p_1}$ for all $(i,s_2,\p_1)$. Similarly, from \eqref{proof:PropertiesV11} it is not difficult to see that $\Vone(\ssigma_1,\tau_1^i,\v)_{i,s_1,\p_0}$ is linear w.r.t. $\tau_1^i$. Thus, proving (a) and (c). For (b), we estimate 
\begin{equation*}
\begin{split}
    &\Vone(\ssigma_1,\tau_1^i,\v)_{i,s_1,\p_0} - \Vone(\ssigma_1,\tau_1^i,\u)_{i,s_1,\p_0}\\ &=\delta_i\sum_{\p_1\in\A^n}\tau_1^i(p_1^i|\p_0,s_1)\sigma_1^{-i}(\p_1|\p_0,s_1)\sum_{s_2\in\S}\P(s_2|\p_1,s_1)[v_{i,s_2,\p_1}-u_{i,s_2,\p_1}]\\
    &\leq \delta \max_{j,s_2,\p_1}|v_{j,s_2,\p_1}-u_{j,s_2,\p_1}|.
\end{split}
\end{equation*}

\end{proof}
%---------------------------------------------------------------------------

\textbf{The $T$ mapping:} From \eqref{def:Sigma_ti}, we know that $\SSigma_1^i$ is a compact subset of $\mathbb{R}^{(m+1)rM}$. By Proposition \ref{prop:propertiesV1}, $\Vone$ is a continuous function. Based on these two observations, it makes sense to define the following mapping:   $$T:\mathbb{R}^{nrM}\times \SSigma_1 \longrightarrow \mathbb{R}^{nrM} \textnormal{ s.t. }(\v,\ssigma_1) \mapsto T(\v,\ssigma_1)$$ where the $(i,s_1,\p_0)$-coordinate of $T(\v,\ssigma_1)$ is given by 
\begin{equation}\label{def:T_vsigma}
\begin{split}
&T(\v,\ssigma_1)_{i,s_1,\p_0}:= \max_{\tau_1^i\in\SSigma_1^i} \Vone(\ssigma_1,\tau_1^i,\v)_{i,s_1,\p_0}.
\end{split}
\end{equation}

\begin{proposition}[Properties of $T$]\label{prop:T_contraction}

\begin{itemize}
    \item[(i)]  For each $\ssigma_1\in \SSigma_1$, the mapping from $\mathbb{R}^{nrM}$ to $\mathbb{R}^{nrM}$ given by $\v\mapsto T(\v,\ssigma_1)$ is a contraction mapping. In particular, for every $\ssigma_1\in \SSigma_1$, $T(\cdot,\ssigma_1)$ has a unique fixed point. 
    \item[(ii)] For each $\v\in \R^{nrM}$, the mapping from $\SSigma_1$ to $\R^{nrM}$ given by $\ssigma_1\mapsto T(\v,\ssigma_1)$ is continuous. Moreover, for each bounded subset $B\subset \R^{nrM}$, the family of functions $\{T(\v;\cdot)\}_{\v\in B}$ is equicontinuous. 
\end{itemize}

\end{proposition}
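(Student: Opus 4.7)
Before addressing either statement, I would first note that the maximum in the definition \eqref{def:T_vsigma} is attained: by \eqref{def:Sigma_ti} each $\SSigma_1^i$ is compact, and by Proposition \ref{prop:propertiesV1}(a) the function $\tau_1^i \mapsto \Vone(\ssigma_1,\tau_1^i,\v)_{i,s_1,\p_0}$ is continuous. Hence $T(\v,\ssigma_1)$ is a well-defined element of $\R^{nrM}$.

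For part (i), I would apply the standard ``swap the maximizer'' argument. Fix $\ssigma_1 \in \SSigma_1$, $\v,\u \in \R^{nrM}$ and a coordinate $(i,s_1,\p_0)$. Let $\tau_1^{i,\star}$ attain the max in the definition of $T(\v,\ssigma_1)_{i,s_1,\p_0}$. Then
\begin{equation*}
T(\v,\ssigma_1)_{i,s_1,\p_0} - T(\u,\ssigma_1)_{i,s_1,\p_0}
\le \Vone(\ssigma_1,\tau_1^{i,\star},\v)_{i,s_1,\p_0} - \Vone(\ssigma_1,\tau_1^{i,\star},\u)_{i,s_1,\p_0} \le \delta\,|\v-\u|_\infty,
\end{equation*}
where the first inequality uses $T(\u,\ssigma_1)_{i,s_1,\p_0} \geq \Vone(\ssigma_1,\tau_1^{i,\star},\u)_{i,s_1,\p_0}$ by definition of the max, and the second is Proposition \ref{prop:propertiesV1}(b) with $\delta = \max_i \delta_i < 1$. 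Swapping the roles of $\v$ and $\u$ and taking the maximum over coordinates yields $|T(\v,\ssigma_1) - T(\u,\ssigma_1)|_\infty \leq \delta\,|\v-\u|_\infty$, so $T(\cdot,\ssigma_1)$ is a contraction on $(\R^{nrM},|\cdot|_\infty)$. The Banach Fixed Point Theorem then gives a unique fixed point.

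For part (ii), I would use essentially the same swap trick with respect to $\ssigma_1$ instead of $\v$. Fix $\v \in \R^{nrM}$ and $\ssigma_1,\ssigma_1' \in \SSigma_1$, and let $\tau_1^{i,\star}$ attain the max for $T(\v,\ssigma_1)_{i,s_1,\p_0}$. Then
\begin{equation*}
T(\v,\ssigma_1)_{i,s_1,\p_0} - T(\v,\ssigma_1')_{i,s_1,\p_0}
\le \Vone(\ssigma_1,\tau_1^{i,\star},\v)_{i,s_1,\p_0} - \Vone(\ssigma_1',\tau_1^{i,\star},\v)_{i,s_1,\p_0},
\end{equation*}
and symmetrically in the other direction. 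Thus
\begin{equation*}
\bigl|T(\v,\ssigma_1)_{i,s_1,\p_0} - T(\v,\ssigma_1')_{i,s_1,\p_0}\bigr|
\le \sup_{\tau_1^i \in \SSigma_1^i} \bigl|\Vone(\ssigma_1,\tau_1^i,\v)_{i,s_1,\p_0} - \Vone(\ssigma_1',\tau_1^i,\v)_{i,s_1,\p_0}\bigr|.
\end{equation*}
Using the explicit formula \eqref{proof:PropertiesV11}, the dependence of $\Vone$ on $\ssigma_1$ enters only through $\sigma_1^{-i}(\p_1^{-i}|\p_0,s_1)$, and the bracketed factor is bounded by $\|\pi^i\|_\infty + \delta_i\|\v\|_\infty$. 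Consequently the right-hand side above is bounded by $(\|\pi^i\|_\infty + \delta_i\|\v\|_\infty)\sum_{\p_1} |\sigma_1^{-i}(\p_1^{-i}|\p_0,s_1) - (\sigma_1')^{-i}(\p_1^{-i}|\p_0,s_1)|$, which is independent of $\tau_1^i$ and tends to $0$ as $\ssigma_1' \to \ssigma_1$, giving continuity.

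For the equicontinuity assertion, the key observation is that when $\v$ ranges over a bounded set $B \subset \R^{nrM}$, the factor $\|\pi^i\|_\infty + \delta_i\|\v\|_\infty$ is uniformly bounded in $\v \in B$. Hence the estimate in the previous paragraph provides a modulus of continuity for $\ssigma_1 \mapsto T(\v,\ssigma_1)$ that is independent of $\v \in B$, which is exactly equicontinuity of the family $\{T(\v,\cdot)\}_{\v \in B}$. The main (minor) obstacle is tracking the sup over $\tau_1^i$ carefully so that the bound remains independent of $\tau_1^i$; this is handled by the fact that the difference in \eqref{proof:PropertiesV11} factors cleanly into $\tau_1^i(\cdot)$ times a $\v$- and $\ssigma_1$-dependent term, and $\tau_1^i$ is a probability distribution.
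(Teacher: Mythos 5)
Your proof is correct. Part (i) is essentially identical to the paper's argument: the same swap-the-maximizer inequality followed by Proposition \ref{prop:propertiesV1}(b) and the Banach fixed point theorem. In part (ii) you share the paper's structural step (bounding $|T(\v,\ssigma_1)-T(\v,\ssigma_1')|$ coordinatewise by a difference of $\Vone$-values at a common $\tau_1^i$), but you then diverge: the paper invokes continuity of $\Vone$ (Proposition \ref{prop:propertiesV1}(a)) for plain continuity, and for equicontinuity appeals to uniform continuity of $\Vone$ on the compact set $\SSigma_1\times\SSigma_1\times\bar{B}$; you instead read off an explicit bound from \eqref{proof:PropertiesV11}, namely $(\|\pi^i\|_\infty+\delta_i\|\v\|_\infty)$ times a term measuring the discrepancy between $\sigma_1^{-i}$ and $(\sigma_1')^{-i}$, which is independent of $\tau_1^i$ because $\tau_1^i$ sums to one. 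This buys you a quantitative (essentially Lipschitz, uniform over $\v\in B$) modulus of continuity in one stroke, so equicontinuity is immediate without the compactness/uniform-continuity detour; the paper's route is softer and reuses its Proposition \ref{prop:propertiesV1}(a) rather than re-expanding the formula. One cosmetic point: your majorant $\sum_{\p_1}|\sigma_1^{-i}(\p_1^{-i}|\p_0,s_1)-(\sigma_1')^{-i}(\p_1^{-i}|\p_0,s_1)|$ sums over $\p_1\in\A^n$ although the summand depends only on $\p_1^{-i}$, so it overcounts by a factor $m+1$; this is harmless since it is still a finite sum of continuous functions vanishing at $\ssigma_1'=\ssigma_1$, but summing over $\p_1^{-i}$ would be the sharper statement.
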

%---------------------------------------------------------------------------

\begin{proof}[{\bf Proof of Proposition \ref{prop:T_contraction}}] 
(i) Let $\ssigma_1\in \SSigma_1$ and $\u,\v\in \mathbb{R}^{nrM}$. For each $(i,s_1,\p_0)$-coordinate, let $\tau_1^i$, $\iota_1^i\in \SSigma_1^i$ be such that 
\begin{equation*}
\begin{split}
    &T(\u,\ssigma_1)_{i,s_1,\p_0}=\Vone(\ssigma_1,\tau_1^i,\u)_{i,s_1,\p_0} \textnormal{ and }\\
    &T(\v,\ssigma_1)_{i,s_1,\p_0}=\Vone(\ssigma_1,\iota_1^i,\v)_{i,s_1,\p_0}.
\end{split}
\end{equation*}
From \eqref{def:T_vsigma} and the above equations, it follows that $-T(\u,\ssigma_1)_{i,s_1,\p_0}\leq -\Vone(\ssigma_1,\iota_1^i,\u)_{i,s_1,\p_0}$ and $-T(\v,\ssigma_1)_{i,s_1,\p_0}\leq -\Vone(\ssigma_1,\tau_1^i,\v)_{i,s_1,\p_0}$. Thus, 
\begin{equation}\label{proof:Tcontraction1}
    \begin{split}
        &[T(\u,\ssigma_1)-T(\v,\ssigma_1)]_{i,s_1,\p_0}\leq [\Vone(\ssigma_1,\tau_1^i,\u)-\Vone(\ssigma_1,\tau_1^i,\v)]_{i,s_1,\p_0} \textnormal{ and} \\
        &[T(\v,\ssigma_1)-T(\u,\ssigma_1)]_{i,s_1,\p_0}\leq [\Vone(\ssigma_1,\iota_1^i,\v)-\Vone(\ssigma_1,\iota_1^i,\u)]_{i,s_1,\p_0}
    \end{split}
\end{equation}
The combination of  \eqref{proof:Tcontraction1} and (b) in Proposition \ref{prop:propertiesV1} yields
\begin{equation}\label{proof:Tcontraction2}
    \max_{i,s_1,\p_0}|T(\u,\ssigma_1)-T(\v,\ssigma_1)|_{i,s_1,\p_0}\leq  \delta \max_{j,s_2,\p_1}|v_{j,s_2,\p_1}-u_{j,s_2,\p_1}|,
\end{equation}
where $\delta = \max_{i\in[n]}\delta_i<1$. Thus, $T(\cdot,\ssigma_1)$ is a contraction mapping. The fact that $T(\cdot,\ssigma_1)$ has a unique fixed point follows from Banach Fixed point Theorem.\\

(ii) Let $\ssigma_1,\ttau_1\in\SSigma_1$ and $\v\in \R^{nrM}$. For each $(i,s_1,\p_0)$-coordinate, let $\gamma_1^i$, $\iota_1^i\in \SSigma_1^i$ be such that 
\begin{equation*}
\begin{split}
&T(\v,\ssigma_1)_{i,s_1,\p_0}=\Vone(\ssigma_1,\gamma_1^i,\v)_{i,s_1,\p_0} \textnormal{ and }\\    &T(\v,\ttau_1)_{i,s_1,\p_0}=\Vone(\ttau_1,\iota_1^i,\v)_{i,s_1,\p_0}.
\end{split}
\end{equation*}
From \eqref{def:T_vsigma} and the above equations, it follows that $-T(\v,\ssigma_1)_{i,s_1,\p_0}\leq -\Vone(\ssigma_1,\iota_1^i,\v)_{i,s_1,\p_0}$ and $-T(\v,\ttau_1)_{i,s_1,\p_0}\leq -\Vone(\ttau_1,\gamma_1^i,\v)_{i,s_1,\p_0}$. Thus, 
\begin{equation}\label{proof:Lemma3_1}
    \begin{split}
        &[T(\v,\ssigma_1)-T(\v,\ttau_1)]_{i,s_1,\p_0}\leq [\Vone(\ssigma_1,\gamma_1^i,\v)-\Vone(\ttau_1,\gamma_1^i,\v)]_{i,s_1,\p_0} \textnormal{ and} \\
        &[T(\v,\ttau_1)-T(\v,\ssigma_1)]_{i,s_1,\p_0}\leq [\Vone(\ttau_1,\iota_1^i,\v)-\Vone(\ssigma_1,\iota_1^i,\v)]_{i,s_1,\p_0}.
    \end{split}
\end{equation}
Let $\epsilon>0$, by part (a) in Proposition \ref{prop:propertiesV1}, there exists $\theta>0$ such that for each $\kappa \in \{ \sigma_1^i, \tau_1^i \}$ if 
\begin{equation}\label{proof:Lemma3_2}
    | \ssigma_1-\ttau_1  |_{\infty}<\theta \implies |\Vone(\ssigma_1,\kappa,\v)-\Vone(\ttau_1,\kappa,\v)|_{\infty} <\epsilon,
\end{equation}
where  $| \ssigma_1 |_{\infty}$ denotes the supremum norm of $\ssigma_1\in \SSigma_1\subset \R^{n\hat{M}}$ (see \eqref{def:Sigma_ti}). From \eqref{proof:Lemma3_1} and \eqref{proof:Lemma3_2}, it follows that the mapping $\ssigma_1\mapsto T(\v,\ssigma_1)$ is continuous. 

Let $B$ be a bounded subset of $\R^{nrM}$. By \eqref{def:Sigma_ti}, the set $\SSigma_1\times \SSigma_1 \times \bar{B}$ is compact. By Proposition \ref{prop:propertiesV1}, $\Vone$ is uniformly continuous on $\SSigma_1\times \SSigma_1 \times \bar{B}$. It follows that for each $\epsilon>0$, there exists $\theta>0$ such that for each $\ssigma_1,\ttau_1$ and $\kkappa_1$ in $\SSigma_1$ and $\v\in B$, if
\begin{equation}\label{proof:Lemma3_3}
    | \ssigma_1-\ttau_1  |_{\infty}<\theta \implies |\Vone(\ssigma_1,\kkappa_1,\v)-\Vone(\ttau_1,\kkappa_1,\v)|_{\infty}<\epsilon.
\end{equation}
Replacing \eqref{proof:Lemma3_2} with \eqref{proof:Lemma3_3} shows that the family of functions $\{T(\v,\cdot)\}_{\v\in B}$ is equicontinuous.

\end{proof}
% %---------------------------------------------------------------------------

\textbf{The mapping $b$ and the correspondence $\Gamma$:} Let $\ssigma_1\in \SSigma_1$. From Part (i) in Proposition \ref{prop:T_contraction}, there exists a unique vector $b(\ssigma_1)\in\R^{nrM}$ such that $b(\ssigma_1) = T(b(\ssigma_1),\ssigma_1)$. Thus, there is a well-defined mapping $b:\SSigma_1\longrightarrow \R^{nrM}$ such that $\ssigma_1\mapsto b(\ssigma_1)$. In particular, by \eqref{def:T_vsigma}, for each $(i,s_1,\p_0)$-coordinate
\begin{equation}
    \label{def:b_sigma}
    b(\ssigma_1)_{i,s_1,\p_0} = \max_{\tau_1^i\in \SSigma_1^i}\Vone(\ssigma_1,\tau_1^i,b(\ssigma_1))_{i,s_1,\p_0}.
\end{equation}
From \eqref{def:b_sigma} and the compactness of $\SSigma_1^i$, there exists $\tilde{\ttau}_1\in \SSigma_1$ such that for each $(i,s_1,\p_0)$-coordinate,  $b(\ssigma_1)_{i,s_1,\p_0} = \Vone(\ssigma_1,\tilde{\tau}_1^i,b(\ssigma_1))_{i,s_1,\p_0}.$ The previous argument shows that for each $\ssigma_1\in \SSigma_1$, the following set is nonempty, 
\begin{equation}
    \label{def:Gamma_sigma}
    \Gamma(\ssigma_1): = \{ \ttau_1\in \SSigma_1 | b(\ssigma_1) = \Vone(\ssigma_1,\ttau_1,b(\ssigma_1)) \}. 
\end{equation}
The mapping $\Gamma$ from $\SSigma_1$ to $2^{\SSigma_1}$ is a self-correspondence on $\SSigma_1$.

\begin{proposition}[Properties of $b$ and $\Gamma$]\label{prop_b_properties}
\begin{itemize}
    \item[(i)]  $b$ is continuous; 
    \item[(ii)] $\Gamma$ is a convex- and closed-valued self-correspondence on $\SSigma_1$. Moreover, it has a closed graph.  
\end{itemize}
    
\end{proposition}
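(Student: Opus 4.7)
The plan is to prove both parts by leveraging the three ingredients already in place: the uniform contraction estimate for $T(\cdot,\ssigma_1)$ from Proposition~\ref{prop:T_contraction}(i) with constant $\delta=\max_i\delta_i<1$ independent of $\ssigma_1$, the continuity (and equicontinuity) of $\ssigma_1\mapsto T(\v,\ssigma_1)$ from Proposition~\ref{prop:T_contraction}(ii), and the joint continuity plus linearity-in-$\ttau_1$ of $\Vone$ from Proposition~\ref{prop:propertiesV1}. No new fixed-point theory is needed.

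For part (i), I would take an arbitrary sequence $\ssigma_1^k\to\ssigma_1$ in $\SSigma_1$ and apply the fixed-point identity $b(\ssigma_1^k)=T(b(\ssigma_1^k),\ssigma_1^k)$ together with $b(\ssigma_1)=T(b(\ssigma_1),\ssigma_1)$. Adding and subtracting the hybrid term $T(b(\ssigma_1),\ssigma_1^k)$ and using the contraction bound \eqref{proof:Tcontraction2} on the first difference yields
\begin{equation*}
(1-\delta)\,|b(\ssigma_1^k)-b(\ssigma_1)|_\infty \;\le\; |T(b(\ssigma_1),\ssigma_1^k)-T(b(\ssigma_1),\ssigma_1)|_\infty.
\end{equation*}
By Proposition~\ref{prop:T_contraction}(ii) applied at the fixed vector $\v=b(\ssigma_1)$, the right-hand side vanishes as $k\to\infty$, and since $1-\delta>0$ this forces $b(\ssigma_1^k)\to b(\ssigma_1)$.

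For part (ii), convex-valuedness of $\Gamma(\ssigma_1)$ follows immediately from the linearity of $\Vone(\ssigma_1,\cdot,b(\ssigma_1))$ in the second argument (Proposition~\ref{prop:propertiesV1}(c)): any convex combination of elements of $\Gamma(\ssigma_1)$ again satisfies the defining equation $\Vone(\ssigma_1,\ttau_1,b(\ssigma_1))=b(\ssigma_1)$. Closed-valuedness is a one-line application of continuity of $\Vone$ in $\ttau_1$: if $\ttau_1^k\in\Gamma(\ssigma_1)$ with $\ttau_1^k\to\ttau_1$, pass to the limit in $\Vone(\ssigma_1,\ttau_1^k,b(\ssigma_1))=b(\ssigma_1)$. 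Non-emptiness is already recorded in the paragraph preceding \eqref{def:Gamma_sigma}, so $\Gamma$ is indeed a self-correspondence on $\SSigma_1$.

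The closed-graph property is the only step that actually uses part (i): given $\ssigma_1^k\to\ssigma_1$ and $\ttau_1^k\to\ttau_1$ with $\ttau_1^k\in\Gamma(\ssigma_1^k)$, I would start from
\begin{equation*}
b(\ssigma_1^k)=\Vone(\ssigma_1^k,\ttau_1^k,b(\ssigma_1^k)),
\end{equation*}
and take $k\to\infty$. The left-hand side converges to $b(\ssigma_1)$ by part (i); the right-hand side converges to $\Vone(\ssigma_1,\ttau_1,b(\ssigma_1))$ by the joint continuity of $\Vone$ on $\SSigma_1\times\SSigma_1\times\R^{nrM}$, again using part (i) to handle the third argument. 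Hence $\ttau_1\in\Gamma(\ssigma_1)$. I do not expect any genuine obstacle here; the only subtle point is that part (ii) for the closed graph genuinely requires part (i), so the two statements must be established in the stated order rather than independently.
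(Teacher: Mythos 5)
Your proof is correct, and part (i) takes a genuinely different (and leaner) route than the paper. The paper splits $|b(\ssigma_1)-b(\iiota_1)|_\infty$ via the hybrid term $T(b(\iiota_1),\ssigma_1)$, which leaves a remainder $|T(b(\iiota_1),\ssigma_1)-T(b(\iiota_1),\iiota_1)|_\infty$ in which the first argument varies with $\iiota_1$; to control it the paper must first prove that $b(\SSigma_1)$ is bounded (via the Banach iteration started at $\v_0=0$ and the bound on profits) and then invoke the \emph{equicontinuity} of the family $\{T(\v,\cdot)\}_{\v\in b(\SSigma_1)}$ from Proposition~\ref{prop:T_contraction}(ii). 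Your split uses the other hybrid term $T(b(\ssigma_1),\ssigma_1^k)$, so after absorbing the contraction part (with the constant $\delta=\max_i\delta_i$, which is indeed uniform in the second argument, as \eqref{proof:Tcontraction2} shows) the remainder $|T(b(\ssigma_1),\ssigma_1^k)-T(b(\ssigma_1),\ssigma_1)|_\infty$ has a \emph{fixed} first argument, so only the pointwise continuity of $\ssigma_1\mapsto T(\v,\ssigma_1)$ is needed; the boundedness of $b(\SSigma_1)$ and the equicontinuity clause become superfluous. What the paper's heavier route buys is an explicit quantitative modulus, $|b(\ssigma_1)-b(\iiota_1)|_\infty\le(1-\delta)^{-1}|T(b(\iiota_1),\ssigma_1)-T(b(\iiota_1),\iiota_1)|_\infty$ with a uniform-in-$\v$ control, i.e.\ uniform continuity directly; your argument gives continuity at each point, which suffices for Kakutani (and on the compact set $\SSigma_1$ uniform continuity follows anyway). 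Your part (ii) — convexity from linearity of $\Vone$ in $\ttau_1$, closed values from continuity in $\ttau_1$, closed graph from continuity of $b$ plus joint continuity of $\Vone$, non-emptiness cited from the discussion preceding \eqref{def:Gamma_sigma} — matches the paper's proof essentially verbatim, including the correct observation that the closed-graph step is the one place where part (i) is genuinely used.
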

%---------------------------------------------------------------------------
\begin{proof}[{\bf Proof of Proposition \ref{prop_b_properties}}] (i) We first show that $b(\SSigma_1)\subset \R^{nrM}$ is bounded. Let $\ssigma_1\in\SSigma_1$. By Proposition \ref{prop:T_contraction}, the definition of $b$ and the Banach Fixed Point Theorem: the sequence given by $\v_0 = 0\in  \R^{nrM}$ and $\v_n=T(\v_{n-1},\ssigma_1)$ for $n\geq 1$ converges to $b(\ssigma_1)$. Moreover, 
\begin{equation}\label{proof:propb_Gamma1}
    \max_{i,s_1,\p_0}|b(\ssigma_1)_{i,s_1,\p_0} - 0|\leq \frac{1}{1-\delta}\max_{i,s_1,\p_0}|(\Vone)_{i,s_1,\p_0} - 0|.
\end{equation}
Note that $
    (\Vone)_{i,s_1,\p_0} = T(0;\ssigma_1)_{i,s_1,\p_0} = \max_{\tau_1^i\in\SSigma_1^i} \Vone(\ssigma_1,\tau_1^i,0)_{i,s_1,\p_0}
$ and by \eqref{V1_function},
\begin{equation}\label{proof:propb_Gamma2}
    \Vone(\ssigma_1,\tau_1^i,0)_{i,s_1,\p_0}= \sum_{\p_1\in\A^n}\tau_1^i(p_1^i|\p_0,s_1)\sigma_1^{-i}(\p_1^{-i}|\p_0,s_1)\pi^i(\p_1,s_1)
\end{equation}
Combining \eqref{proof:propb_Gamma1} with \eqref{proof:propb_Gamma2} yields, 
\begin{equation*}
    \max_{i,s_1,\p_0}|b(\ssigma_1)_{i,s_1,\p_0}|\leq \frac{1}{1-\delta} \max_{i,s_1,\p_1}|\pi^i(\p_1,s_1)|.
\end{equation*}
Proving that the set $b(\SSigma_1)$ is bounded in $\R^{nrM}$. We now show that $b$ is continuous. Let $\ssigma_1$ and $\iiota_1$ in $\SSigma_1$. We estimate the following supremum norm
\begin{equation}\label{proof:propb_Gamma3}
    \begin{split}
        &|b(\ssigma_1)-b(\iiota_1)|_{\infty} = |T(b(\ssigma_1),\ssigma_1)-T(b(\iiota_1),\iiota_1)|_{\infty}\\
        &\leq |T(b(\ssigma_1),\ssigma_1)-T(b(\iiota_1),\ssigma_1)|_{\infty} + |T(b(\iiota_1),\ssigma_1)-T(b(\iiota_1),\iiota_1)|_{\infty}.
    \end{split}
\end{equation}
From \eqref{proof:Tcontraction2} in the Proof of Proposition \ref{prop:T_contraction}, $|T(b(\ssigma_1),\ssigma_1)-T(b(\iiota_1),\ssigma_1)|_{\infty}\leq \delta |b(\ssigma_1)-b(\iiota_1)|_{\infty}$, where $\delta = \max_{i\in [n]}\delta_i$. Thus,
\begin{equation}\label{proof:propb_Gamma4}
    \begin{split}
        &|b(\ssigma_1)-b(\iiota_1)|_{\infty} \leq \frac{1}{1-\delta} |T(b(\iiota_1),\ssigma_1)-T(b(\iiota_1),\iiota_1)|_{\infty}.
    \end{split}
\end{equation}
Since $b(\SSigma_1)$ is bounded, by part (ii) in Proposition \ref{prop:T_contraction}, the family of functions $\{T(\v;\cdot)\}_{\v\in b(\SSigma_1)}$ is equicontinuous. It follows that for each $\epsilon>0$ there exists $\theta > 0$ such that for any $\ssigma_1,\iiota_1\in\SSigma_1$ and $\v\in b(\SSigma_1)$, if $|\ssigma_1-\iiota_1|_{\infty}<\theta$, then 
$$|T(\v,\ssigma_1)-T(\v,\iiota_1)|_{\infty}<\epsilon(1-\delta).$$
It follows from \eqref{proof:propb_Gamma4} that $b$ is continuous. \\

(ii) Let $\ssigma_1\in\SSigma_1$. That $\Gamma(\ssigma_1)$ is convex follows from (c) in Proposition \ref{prop:propertiesV1}, as for any $\ttau_1, \iiota_1\in \Gamma(\ssigma_1)$ and $\alpha\in[0,1]$,
\begin{equation*}
\begin{split}
     b(\ssigma_1) &= \alpha b(\ssigma_1) + (1-\alpha) b(\ssigma_1)=
     \alpha \Vone(\ssigma_1,\ttau_1,b(\ssigma_1)) + (1-\alpha) \Vone(\ssigma_1,\iiota_1,b(\ssigma_1))\\
     &=\Vone(\ssigma_1,\alpha\ttau_1+(1-\alpha)\iiota_1,b(\ssigma_1)).
\end{split}
\end{equation*}
We now show that $\Gamma(\ssigma_1)$ is closed in $\SSigma_1$: Let $(\ttau_{1,k})_{k\geq 1}\subset \Gamma(\ssigma_1)$ be a sequence such that $\ttau_{1,k}\to \ttau_1\in \SSigma_1$ as $k\to \infty$. By definition of $\Gamma(\ssigma_1)$ and continuity of $\Vone$, 
\begin{equation*}
    b(\ssigma_1) = \Vone(\ssigma_1,\ttau_{1,k},b(\ssigma_1))\to \Vone(\ssigma_1,\ttau_{1},b(\ssigma_1)), \textnormal{ as } k\to \infty.
\end{equation*}
It follows that $\ttau_{1} \in \Gamma(\ssigma_1)$. 

We show that $\Gamma$ has a closed graph. Let $(\ssigma_{1,k})_k$ and $(\iiota_{1,k})_k$ be two sequences in $\SSigma_1$ such that $\ssigma_{1,k}\to \ssigma_1\in \SSigma_1$ and $\iiota_{1,k}\to \iiota_1\in \SSigma_1$ as $k\to \infty$. Suppose that $\iiota_{1,k}\in \Gamma(\ssigma_{1,k})$ for each $k\geq 1$. By definition, for each $k\geq 1$,
\begin{equation*}
    b(\ssigma_{1,k}) = \Vone(\ssigma_{1,k},\iiota_{1,k},b(\ssigma_{1,k})).
\end{equation*}
By part (i) in this proposition, $b$ is continuous, therefore $b(\ssigma_{1,k})\to b(\ssigma_{1})$ as $k\to \infty$. By Proposition \ref{prop:propertiesV1}, $\Vone$ is continuous, thus, $\Vone(\ssigma_{1,k},\iiota_{1,k},b(\ssigma_{1,k}))\to \Vone(\ssigma_{1},\iiota_{1},b(\ssigma_{1}))$ as $k\to \infty$. It follows that 
\begin{equation*}
    b(\ssigma_{1}) = \Vone(\ssigma_{1},\iiota_{1},b(\ssigma_{1})), 
\end{equation*}
and $\iiota_{1}\in \Gamma(\ssigma_1)$. 

\end{proof}

%---------------------------------------------------------------------------

\subsubsection*{Conclusion of Theorem \ref{Thm:Fink}}

By Proposition \ref{prop_b_properties}, $\Gamma$ as given by \eqref{def:Gamma_sigma} is a convex-valued self-correspondence on $\SSigma_1$ that has a closed graph. Moreover, $\SSigma_1$ is compact and convex. By Theorem \ref{Thm:Kakutani}, there exists $\ssigma_1^*\in \SSigma_1$ such that 
$\ssigma_1^*\in \Gamma(\ssigma_1^*)$, i.e., 
\begin{equation*}
     b(\ssigma_1^*) = \Vone(\ssigma_1^*,\ssigma_1^*,b(\ssigma_1^*)).
\end{equation*}

%-----------------------------------------------------------------------------
%-----------------------------------------------------------------------------
\newpage
\clearpage
\bibliographystyle{apalike}
\bibliography{bibliography}
\addcontentsline{toc}{section}{References}

\end{document}